\setlist[enumerate,1]{label={(\roman*)}}
\theoremstyle{plain}
\newtheorem{theorem}{Theorem}[section]
\newtheorem{lemma}[theorem]{Lemma}
\newtheorem{proposition}[theorem]{Proposition}
\newtheorem{corollary}[theorem]{Corollary}
\theoremstyle{definition}
\newtheorem{definition}[theorem]{Definition}
\newcommand{\norm}[2][]{\left\|#2\right\|_{#1}}
\newcommand{\ball}[2]{B_{#1}(#2)}
\newcommand{\ket}[1]{\left|#1\right\rangle}
\newcommand{\ketbra}[2]{\left|#1\middle\rangle\!\middle\langle#2\right|}
\newcommand{\setbuild}[2]{\left\{#1\middle|#2\right\}}
\DeclareMathOperator{\boundeds}{\mathcal{B}}
\DeclareMathOperator{\states}{\mathcal{S}}
\DeclareMathOperator{\Tr}{Tr}
\DeclareMathOperator{\id}{id}
\newcommand{\entropy}{H}
\newcommand{\relativeentropy}[3][]{\mathop{D_{#1}}\mathopen{}\left(#2\middle\|#3\right)\mathclose{}}
\newcommand{\distributions}[1][]{\mathcal{P}_{#1}}
\newcommand{\unittensor}[1]{\langle{#1}\rangle}
\newcommand{\partitions}[1][]{\mathcal{P}_{#1}}
\newcommand{\ed}{\mathop{}\!\mathrm{d}}
\newcommand{\reals}{\mathbb{R}}
\newcommand{\complexes}{\mathbb{C}}
\newcommand{\naturals}{\mathbb{N}}
\newcommand{\nonnegativereals}{\mathbb{R}_{\ge 0}}
\newcommand{\ratefunction}[2]{I_{#2}(#1)}
\newcommand{\Kron}[1][]{\textnormal{Kron}_{#1}}
\newcommand{\LR}[1]{\textnormal{LR}_{#1}}
\newcommand{\logupperLOCC}[1]{E^{#1}}
\newcommand{\loglowerLOCC}[1]{E_{#1}}
\newcommand{\logasymptoticlowerLOCC}[1]{\undertilde{E}_{#1}}
\newcommand{\upperLOCC}[1]{F^{#1}}
\newcommand{\lowerLOCC}[1]{F_{#1}}
\newcommand{\asymptoticlowerLOCC}[1]{\undertilde{F}_{#1}}
\DeclareMathOperator{\Res}{Res}
\title{A family of multipartite entanglement measures}
\author[1,2]{P\'eter Vrana}
\affil[1]{Institute of Mathematics, Budapest University of Technology and Economics, Egry J\'ozsef u. 1., 1111 Budapest, Hungary}
\affil[2]{MTA-BME Lend\"ulet Quantum Information Theory Research Group}
\date{\today}
\begin{document}
\maketitle

\begin{abstract}
We construct a family of additive entanglement measures for pure multipartite states. The family is parametrised by a simplex and interpolates between the R\'enyi entropies of the one-particle reduced states and the recently-found universal spectral points (Christandl, Vrana, and Zuiddam, STOC 2018) that serve as monotones for tensor degeneration.
\end{abstract}

\section{Introduction}

The two main approaches to quantifying entanglement (and more generally, to information quantities) are the operational and the axiomatic one. Operational entanglement measures aim at directly characterising the performance in quantum information processing protocols such as entanglement assisted quantum communication or secret key distillation. In contrast, the axiomatic approach starts with a list of properties (see e.g. \cite{plenio2007introduction} or \cite[Table 3.2]{christandl2006structure} for examples), desirable from a mathematical point of view or believed to grasp some physical aspect of nonlocality, and then seeks for quantities satisfying them. However, such a list of requirements tends to be subjective and also reflects the intended range of applications (e.g. some properties are only relevant in an asymptotic context) and it has been argued that the only requirement for an entanglement measure should be monotonicity under local operations and classical communication (LOCC) \cite{vidal2000entanglement}. In addition, constructing entanglement measures that satisfy a given set of axioms is often a challenging task. For example, additivity is a valuable property in asymptotic settings, but there are only a handful of additive entanglement measures known, especially in the multipartite setting \cite{vidal2002computable,christandl2004squashed,yang2008additive,yang2009squashed}. 

In this paper we construct a family of additive entanglement monotones that are defined on multipartite pure states. The construction makes use of the rate function for a simultaneous version of the spectrum estimation procedure proposed by Alicki, Rudicki and Sadowski \cite{alicki1988symmetry} and by Keyl and Werner in \cite{keyl2001estimating} (see Sections~\ref{sec:notations} and~\ref{sec:estimation} for details). Let the set of decreasingly ordered spectra (with multiplicities) be $\overline{\partitions}$. When applying their measurement to the marginals of a $k$-partite state $\ket{\psi}$, we obtain a sequence of probability distributions on the space of $k$-tuples of ordered spectra (i.e. the possible outcomes of the measurements). This sequence satisfies a large deviation principle with a rate function whose value at $\overline{\lambda}\in\overline{\partitions}^k$ will be denoted by $\ratefunction{\overline{\lambda}}{\psi}$ (this is a special case of the $\mu$-capacity from \cite{franks2020minimal} and the rate function from \cite{botero2020large}, when the compact group is a product of unitary groups). It will also be convenient to introduce a notation for the weighted average of a collection of entropies. For a probability distribution $\theta$ on the set $[k]=\{1,2,\ldots,k\}$ we set
\begin{equation}
\entropy_{\theta}(\overline{\lambda})=\sum_{j=1}^k\theta(j)\entropy(\overline{\lambda_j}).
\end{equation}
We consider the quantity
\begin{equation}\label{eq:multipartitemonotones}
\logupperLOCC{\alpha,\theta}(\ket{\psi}) = \sup_{\overline{\lambda}\in\overline{\partitions}^k}\left[(1-\alpha)\entropy_{\theta}(\overline{\lambda})-\alpha \ratefunction{\overline{\lambda}}{\psi}\right]
\end{equation}
where $\alpha\in(0,1]$. For $\alpha=0$ we extend by continuity: in this limit the value of the rate function is not relevant anymore, but the supremum is restricted to the set $\setbuild{\overline{\lambda}\in\overline{\partitions}^k}{\ratefunction{\overline{\lambda}}{\psi}\neq 0}$, which is the entanglement polytope \cite{walter2013entanglement} of the state $\ket{\psi}$. Therefore the $\alpha\to 0$ limit reduces to the quantum functionals of \cite{christandl2018universal}. We show that $\logupperLOCC{\alpha,\theta}(\ket{\psi}\otimes\ket{\varphi})=\logupperLOCC{\alpha,\theta}(\ket{\psi})+\logupperLOCC{\alpha,\theta}(\ket{\varphi})$ and that if there is an LOCC channel mapping $\ket{\psi}$ to $\ket{\varphi}$ then $\logupperLOCC{\alpha,\theta}(\ket{\psi})\ge\logupperLOCC{\alpha,\theta}(\ket{\varphi})$.

In fact, we prove the stronger statement that $\upperLOCC{\alpha,\theta}(\ket{\psi})=2^{\logupperLOCC{\alpha,\theta}(\ket{\psi})}$ is an element of the asymptotic spectrum of LOCC transformations as introduced in \cite{jensen2019asymptotic}. As the proof relies on the theory developed there, we provide here a brief overview and explain the connection to \cite{christandl2018universal} and the asymptotic restriction of tensors.

Asymptotic tensor restriction can be viewed as a weak notion of asymptotic entanglement transformation of pure states \cite{chitambar2008tripartite}, where we require the target to be reached exactly with an arbitrarily low but nonzero probability of success, as proposed in \cite{bennett2000exact} for single copies. In \cite{strassen1988asymptotic} Strassen proved a powerful characterisation of asymptotic restriction, which can be reformulated as the following dual characterisation of the optimal transformation rate. If $\psi$ and $\varphi$ are tensors of order $k$ (which we may interpret as state vectors of $k$-partite states) and $a\ge b$ means that $a$ restricts to $b$ (i.e. can be converted via SLOCC \cite{dur2000three}), then
\begin{equation}\label{eq:tensorrate}
\sup\setbuild{R\in\nonnegativereals}{\psi^{\otimes n}\ge\varphi^{\otimes Rn+o(n)}\text{ for large $n$}}=\inf_{f\in\Delta(\mathcal{T}_k)}\frac{\log f(\psi)}{\log f(\varphi)},
\end{equation}
where $\Delta(\mathcal{T}_k)$ is the asymptotic spectrum of tensors, defined as the set of real valued functions on (equivalence classes of) tensors that are
\begin{enumerate}[(S1)]
\item\label{it:tensormonotone} monotone under restriction
\item\label{it:tensormultiplicative} multiplicative under the tensor product
\item\label{it:tensoradditive} additive under the direct sum
\item\label{it:tensornormalised} normalised to $r$ on the unit tensor of rank $r$ (unnormalised $r$-level GHZ state $\ket{11\ldots 1}+\ket{22\ldots 2}+\cdots+\ket{rr\ldots r}$, denoted $\unittensor{r}$).
\end{enumerate}

Tensor restrictions or stochastic entanglement transformations provide no control on the probabilities and do not offer any meaningful notion of approximate transformations. As a refinement of the problem, one may investigate the trade-off between the transformation rate and the exponential rate at which the error approaches zero or one on either side of the optimal rate. This has been answered in \cite{hayashi2002error} for bipartite entanglement concentration, allowing either an error (i.e. nonzero distance from the target state) or a probability of failure (i.e. nonzero chance of not reaching the target state). They find that, as with many information processing tasks, these trade-off relations can be characterised in terms of R\'enyi information quantities, in this case R\'enyi entanglement entropies.

The main result of \cite{jensen2019asymptotic} provides a characterisation like \eqref{eq:tensorrate} of the error exponents for probabilistic entanglement transformations of pure multipartite states in the converse regime. This means that for each large $n$ we require an LOCC protocol that, when run on $\psi^{\otimes n}$ as the input state, declares success with probability at least $2^{-rn}$, in which case the resulting state is exactly $\varphi^{\otimes Rn+o(n)}$. For a given $r>0$ the maximal achievable $R$ is
\begin{equation}\label{eq:LOCCrate}
\inf_{f\in\Delta(\mathcal{S}_k)}\frac{r\alpha(f)+\log f(\ket{\psi})}{\log f(\ket{\varphi})},
\end{equation}
where this time $\Delta(\mathcal{S}_k)$ is the set of functions $f$ on pure unnormalised states that in addition to~\ref{it:tensormultiplicative}, \ref{it:tensoradditive} and~\ref{it:tensornormalised} satisfy for some $\alpha=\alpha(f)\in[0,1]$
\begin{enumerate}[(S0)]
\item\label{it:scaling} $f(\sqrt{p}\ket{\psi})=p^{\alpha}f(\ket{\psi})$
\item[(S1')]\label{it:LOCCmonotone} monotonicity under LOCC, or equivalently \cite[Theorem 3.1]{jensen2019asymptotic}
\begin{equation}\label{eq:monotonicitycondition}
f(\ket{\psi})\ge\left(f(\Pi_j\ket{\psi})^{1/\alpha}+f((I-\Pi_j)\ket{\psi})^{1/\alpha}\right)^{\alpha}
\end{equation}
(in a limit sense for $\alpha=0$) for every local orthogonal projecion $\Pi_j$ at party $j$.
\end{enumerate}
Note that the exponent can be recovered as $\alpha=\alpha(f)=\log f(\sqrt{2}\ket{0\ldots 0})$. We call $\Delta(\mathcal{S}_k)$ the asymptotic spectrum of LOCC transformations.

To provide some intuition on the relevant functions, it is helpful to study the bipartite case, where an explicit description of $\Delta(\mathcal{S}_2)$ is available \cite[Section 4.]{jensen2019asymptotic}. In this case for every $\alpha\in[0,1]$ there is precisely one monotone $f_\alpha\in\Delta(\mathcal{S}_2)$ with $\alpha(f_\alpha)=\alpha$, and its value on $\sum_i\sqrt{p_i}\ket{ii}$ is $\sum_{i}p_i^\alpha$. Another way to write this is
\begin{equation}\label{eq:bipartitemonotones}
f_\alpha(\ket{\psi})=2^{(1-\alpha)\entropy_\alpha(\Tr_2\ketbra{\psi}{\psi})}.
\end{equation}
For this reason we regard the monotones $f\in\Delta(\mathcal{S}_k)$ as generalisations of the R\'enyi entanglement entropies and $\alpha(f)$ as the generalisation of the order of the R\'enyi entropy.

An appealing feature of results like this is that they provide a bridge between the aforementioned operational and axiomatic approaches: they single out a list of axioms, and not only show that there exist monotones satisfying them simultaneously, but that in fact there are sufficiently many of them to characterise operational quantities (in this case transformation rates). In the context of ordered commutative monoids (as a mathematical model for general resource theories), a similar result is presented in \cite{fritz2017resource}. It should be emphasised that the proofs of both \eqref{eq:tensorrate} and \eqref{eq:LOCCrate} are non-constructive in the sense that they do not provide any explicit monotones in $\Delta(\mathcal{T}_k)$ (respectively $\Delta(\mathcal{S}_k)$). It is a nontrivial task to construct explicit monotones satisfying the respective axioms. A simple way to obtain new monotones from old ones is to compose a (possibly partial) flattening (i.e. grouping parties together) with an element of $\Delta(\mathcal{S}_{k'})$ for some $k'<k$. For $k'=2$ this construction gives the R\'enyi entropies (exponentiated as in \eqref{eq:bipartitemonotones}) of the reduced density matrices. We will refer to these as trivial points of $\Delta(\mathcal{S}_{k})$.

$\Delta(\mathcal{T}_k)$ can be identified with a subset of $\Delta(\mathcal{S}_k)$, namely $\alpha^{-1}(0)$. The first examples of nontrivial points in the asymptotic spectrum of tensors have been found in \cite{christandl2018universal}. In the present work we extend that construction and obtain nontrivial points in the asymptotic spectrum of LOCC transformations with $\alpha\neq 0$. The new family of monotones interpolates between the quantum functionals of \cite{christandl2018universal} and the (exponentiated) R\'enyi entropies of the single-party marginals. We prove the following properties of $\upperLOCC{\alpha,\theta}(\ket{\psi})=2^{\logupperLOCC{\alpha,\theta}(\ket{\psi})}$.
\begin{theorem}\label{thm:main}
Let $\alpha\in[0,1]$ and $\theta\in\distributions([k])$. For every $p\ge0$, $r\in\naturals$, $\ket{\psi}\in\mathcal{H}_1\otimes\cdots\otimes\mathcal{H}_k$ and $\ket{\varphi}\in\mathcal{K}_1\otimes\cdots\otimes\mathcal{K}_k$ the followings hold:
\begin{enumerate}[(i)]
\item\label{it:mainscaling} $\upperLOCC{\alpha,\theta}(\sqrt{p}\ket{\psi})=p^\alpha\upperLOCC{\alpha,\theta}(\ket{\psi})$,
\item\label{it:mainnormalised} $\upperLOCC{\alpha,\theta}(\unittensor{r})=r$,
\item\label{it:mainmultiplicative} $\upperLOCC{\alpha,\theta}(\ket{\psi}\otimes\ket{\varphi})=\upperLOCC{\alpha,\theta}(\ket{\psi})\upperLOCC{\alpha,\theta}(\ket{\varphi})$,
\item\label{it:mainadditive} $\upperLOCC{\alpha,\theta}(\ket{\psi}\oplus\ket{\varphi})=\upperLOCC{\alpha,\theta}(\ket{\psi})+\upperLOCC{\alpha,\theta}(\ket{\varphi})$,
\item\label{it:mainmonotone} if there exists a trace-nonincreasing LOCC channel $\Lambda$ such that $\Lambda(\ketbra{\psi}{\psi})=\ketbra{\varphi}{\varphi}$ then $\upperLOCC{\alpha,\theta}(\ket{\psi})\ge\upperLOCC{\alpha,\theta}(\ket{\varphi})$.
\end{enumerate}
\end{theorem}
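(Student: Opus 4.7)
The plan is to derive each of the five properties from a corresponding property of the rate function $\ratefunction{\cdot}{\psi}$: items (i)--(iv) come from standard behaviour of this rate function under scalar multiples, tensor products, and orthogonal direct sums, while (v) is proved through the local projection inequality \eqref{eq:monotonicitycondition}.

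For (i), the key observation is that rescaling $\ket\psi\mapsto\sqrt p\ket\psi$ multiplies every Keyl--Werner outcome probability at level $n$ by $p^n$, so $\ratefunction{\overline\lambda}{\sqrt p\psi}=\ratefunction{\overline\lambda}{\psi}-\log p$; substituting into \eqref{eq:multipartitemonotones} shifts $\logupperLOCC{\alpha,\theta}$ by $\alpha\log p$. For (ii), the one-party marginals of $\unittensor r$ are rank-$r$ identities, so $\ratefunction{\cdot}{\unittensor r}$ attains its minimum value $-\log r$ at the uniform spectrum, while $\entropy_\theta(\overline\lambda)\le\log r$ in general; the supremum in \eqref{eq:multipartitemonotones} therefore equals $\log r$, giving $\upperLOCC{\alpha,\theta}(\unittensor r)=r$. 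For (iii), spectra of tensor products factor componentwise; combined with the standard Schur--Weyl identity $\ratefunction{\overline\lambda}{\psi\otimes\varphi}=\inf_{\overline\mu\cdot\overline\nu\sim\overline\lambda}\bigl[\ratefunction{\overline\mu}{\psi}+\ratefunction{\overline\nu}{\varphi}\bigr]$ and the additivity of $\entropy_\theta$ on products of spectra, the sup in \eqref{eq:multipartitemonotones} then separates. For (iv), an orthogonal direct sum $\ket{\psi\oplus\varphi}$ partitions each marginal into two blocks of relative weights $s$ and $1-s$; after the corresponding convex-combination decomposition of the rate function, maximising over $s$ and over the spectra of the two summands yields the power-mean combination that amounts to the ordinary sum $\upperLOCC{\alpha,\theta}(\ket\psi)+\upperLOCC{\alpha,\theta}(\ket\varphi)$.

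The hard part is (v). By \eqref{eq:monotonicitycondition} it suffices to verify, for every local orthogonal projection $\Pi_j$ at party $j$, the inequality
\begin{equation*}
\upperLOCC{\alpha,\theta}(\ket\psi)^{1/\alpha}\ge\upperLOCC{\alpha,\theta}(\Pi_j\ket\psi)^{1/\alpha}+\upperLOCC{\alpha,\theta}((I-\Pi_j)\ket\psi)^{1/\alpha}.
\end{equation*}
Setting $\ket{\psi_0}=\Pi_j\ket\psi$ and $\ket{\psi_1}=(I-\Pi_j)\ket\psi$, the $j$-th marginal of $\psi$ decomposes as an orthogonal direct sum of the $j$-th marginals of $\psi_0$ and $\psi_1$, but at any other site $i\neq j$ the decomposition $\rho_i^\psi=\rho_i^{\psi_0}+\rho_i^{\psi_1}+(\text{cross-terms})$ is not block-diagonal, so only Ky~Fan-type majorisation relations can be used to compare the spectra. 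The technical core of the argument is then a large-deviation upper bound of the form
\begin{equation*}
\ratefunction{\overline\lambda}{\psi}\le\inf_{s\in[0,1]}\inf_{\overline\mu,\overline\nu}\bigl[\,s\ratefunction{\overline\mu}{\psi_0}+(1-s)\ratefunction{\overline\nu}{\psi_1}\,\bigr],
\end{equation*}
where the infimum is restricted to pairs $(\overline\mu,\overline\nu)$ whose $s,1-s$ interleaving reproduces $\overline\lambda_j$ exactly on the $j$-th site and majorises $\overline\lambda_i$ on every other site. Substituting this bound into \eqref{eq:multipartitemonotones}, optimising over $s,\overline\mu,\overline\nu$ and invoking the weighted power-mean inequality with exponent $1/\alpha$ should produce the required projection inequality. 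The delicate point, and the main obstacle, is establishing this interleaving LDP bound together with the cross-site majorisation correction; the remainder of the argument is routine optimisation.
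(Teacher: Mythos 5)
Your items (i) and (ii) are fine: (i) is exactly the paper's scaling argument, and your direct computation for (ii) (nonnegativity of the rate function for unit vectors plus $\entropy_\theta(\overline{\lambda})\le\log r$ wherever $\ratefunction{\overline{\lambda}}{\unittensor{r}}<\infty$) is a legitimate shortcut; the paper instead obtains normalisation through a lower functional. But (iii) and (iv) rest on exact decomposition identities for the rate function that are neither proved nor available. For the tensor product the paper only establishes the one-sided inequality of Proposition~\ref{prop:rateproduct}, $\ratefunction{\overline{\lambda}}{\psi\otimes\varphi}\ge\inf\{\ratefunction{\overline{\mu}}{\psi}+\ratefunction{\overline{\nu}}{\varphi}\}$ with the infimum over $(\overline{\lambda},\overline{\mu},\overline{\nu})\in\Kron^k$ (the Kronecker polytope, not ``products of spectra'' as in your constraint), and similarly only the one-sided Proposition~\ref{prop:ratesum} for the direct sum. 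These yield only \emph{sub}multiplicativity and \emph{sub}additivity of $\upperLOCC{\alpha,\theta}$. The reverse inequalities are not obtained from the rate function at all: the paper introduces a lower functional defined via SLOCC degradations $\psi\succ\varphi$ and marginal entropies, shows its regularisation is supermultiplicative and superadditive (Proposition~\ref{prop:asymptoticlowerbasic}, where superadditivity comes from projecting onto the dominant binomial block of $(\psi_1\oplus\psi_2)^{\otimes n}$), and proves the regularisation equals $\upperLOCC{\alpha,\theta}$ (Proposition~\ref{prop:asymptoticlowergeupper}, using the Keyl--Werner estimation together with the dimension bounds \eqref{eq:Schurdimension}--\eqref{eq:symmetricgrpdimension}). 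Without this sandwich, or a genuine proof of your claimed identities, exact multiplicativity and additivity are not established.

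For (v) the reduction to \eqref{eq:monotonicitycondition} is correct, but your proposed large-deviation bound is both unproven (you say so yourself) and, as stated, too weak. The paper's Proposition~\ref{prop:rateprojection} reads
\begin{equation*}
q\ratefunction{\overline{\mu}}{\psi_1}+(1-q)\ratefunction{\overline{\nu}}{\psi_2}-h(q)\;\ge\;\inf_{(\overline{\lambda},\overline{\mu},\overline{\nu})\in\LR{q}^k}\ratefunction{\overline{\lambda}}{\psi},
\end{equation*}
where the $-h(q)$ comes from the $\binom{n}{m}$ count of interleavings in the coset argument behind \eqref{eq:projectionmainestimate}, and the compatibility is the Littlewood--Richardson/Horn condition $\LR{q}^k$, not a majorisation interleaving. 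This $h(q)$ is essential: combined with $\entropy_\theta(\overline{\lambda})\ge q\entropy_\theta(\overline{\mu})+(1-q)\entropy_\theta(\overline{\nu})$ it gives $\log\upperLOCC{\alpha,\theta}(\ket{\psi})^{1/\alpha}\ge q\,x+(1-q)\,y+h(q)$ with $x,y$ the logarithms of $\upperLOCC{\alpha,\theta}(\Pi_j\ket{\psi})^{1/\alpha}$ and $\upperLOCC{\alpha,\theta}((I-\Pi_j)\ket{\psi})^{1/\alpha}$, and maximising over $q$ produces $\log(2^x+2^y)$, i.e.\ the required sum. Your bound has no $-h(q)$; the only entropy gain you can extract from exact interleaving at site $j$ is $\theta(j)h(q)$ weighted by $(1-\alpha)/\alpha$, so optimising over $s$ gives only $\max\bigl(\upperLOCC{\alpha,\theta}(\Pi_j\ket{\psi})^{1/\alpha},\upperLOCC{\alpha,\theta}((I-\Pi_j)\ket{\psi})^{1/\alpha}\bigr)$ in general, which is strictly weaker than \eqref{eq:monotonicitycondition}. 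The missing combinatorial ingredient (orthogonality of the $S_n$-translates of $P_\mu\psi_1^{\otimes m}\otimes P_\nu\psi_2^{\otimes n-m}$ together with the vanishing condition \eqref{eq:LRvanishingproduct}) is the actual core of the monotonicity proof, so this gap is substantive, not routine.
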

That is, $\upperLOCC{\alpha,\theta}$ is a point in the asymptotic LOCC spectrum with $\alpha(\upperLOCC{\alpha,\theta})=\alpha$.

Before turning to the proof, let us attempt to argue why the existence of such an interpolating family is at least plausible. As mentioned above, for $\alpha=0$ these functionals reduce to the logarithmic quantum functionals defined in \cite{christandl2018universal} as the supremum $\entropy_{\theta}(\overline{\lambda})$ over the entanglement polytope. Remarkably, the R\'enyi entropies also admit a variational expression, namely \cite{arikan1996inequality,merhav1999shannon,shayevitz2011renyi}
\begin{equation}
(1-\alpha)\entropy_\alpha(P)=\sup_{Q\in\distributions(\mathcal{X})}\left[(1-\alpha)\entropy(Q)-\alpha\relativeentropy{Q}{P}\right],
\end{equation}
where $\relativeentropy{Q}{P}$ is the relative entropy between the probability distributions $Q$ and $P$. If we suspect that a common generalisation might exist, then the most straighforward idea is that inside the supremum the first term should become $(1-\alpha)\entropy_{\theta}$, whereas the relative entropy should be replaced with a function that is infinite outside the entanglement polytope and vanishes at the marginal spectra. These properties are satisfied by the rate function $\ratefunction{\overline{\lambda}}{\psi}$.

In addition, when $\alpha$ is nonzero and $\theta$ is concentrated on site $j$, we claim that \eqref{eq:multipartitemonotones} reduces to the exponentiated R\'enyi entropy of the $j$th marginal. To see this, note that in this case the first term in the supremum only depends on the $j$th component of $\overline{\lambda}$, therefore the optimisation over the remaining components can be carried out separately on the second term. According to the contraction principle, the infimum of the rate function over all but the $j$th argument is the rate function for the ordinary Keyl--Werner estimation at site $j$, i.e. the relative entropy distance from the ordered spectrum of the $j$th marginal.

For $\alpha\to 1$ all the functionals with different $\theta$ necessarily collapse to a single one, the norm squared \cite[Proposition 3.6]{jensen2019asymptotic}. However, as also suggested by \eqref{eq:bipartitemonotones}, the interesting limit at this point is (for normalised $\ket{\psi}$)
\begin{equation}\label{eq:alphaonelimit}
\lim_{\alpha\to 1}\frac{1}{1-\alpha}\logupperLOCC{\alpha,\theta}(\ket{\psi})=\sum_{j=1}^k\theta(j)\entropy(\ketbra{\psi}{\psi}_j),
\end{equation}
where $\ketbra{\psi}{\psi}_j$ is the $j$th marginal of the state. This is because in this limit the coefficient of the rate function goes to $-\infty$, penalising every point other than its unique zero, the collection of the marginal spectra. For this reason $\frac{1}{1-\alpha}\logupperLOCC{\alpha,\theta}(\ket{\psi})$ may be regarded as the R\'enyi generalisations of the limit \eqref{eq:alphaonelimit}.

The paper is structured as follows. In Section~\ref{sec:notations} we introduce our notations and collect some known results about the representation theory of symmetric and unitary groups. Our main focus here is on the asymptotic dimension of irreducible representations and vanishing conditions for multiplicities. These can be expressed in terms of limits of rescaled integer partitions (normalised decreasing nonnegative real sequences) and moment polytopes. Section~\ref{sec:estimation} studies the rate function of a multiparty version of the spectrum estimation scheme of Keyl and Werner \cite{keyl2001estimating}. In Section~\ref{sec:spectralpoints} we prove our main result, Theorem~\ref{thm:main}.

\section{Notations and preliminaries}\label{sec:notations}

Most of what follows is standard material in the representation theory of classical groups and can be found in many textbooks, see e.g. \cite{fulton1991representation}. For an introduction aimed at quantum information theorists and emphasising the asymptotic aspects we refer to Hayashi's book \cite{hayashi2017group}.

We will denote by $\partitions[n]$ the set of partitions of the integer $n$, i.e. nonincreasing nonnegative integer sequences summing to $n$. Partitions will serve as labels of irreducible representations of the unitary and the symmetric groups. In particular, if $\mathcal{H}$ is a finite dimensional Hilbert space then $\mathcal{H}^{\otimes n}$ has the Schur--Weyl decomposition
\begin{equation}
\mathcal{H}^{\otimes n}\simeq\bigoplus_{\alpha\in\partitions[n]}\mathbb{S}_{\alpha}(\mathcal{H})\otimes[\alpha],
\end{equation}
which is an isomorphism of $U(\mathcal{H})\times S_n$-representations. Here $\mathbb{S}_{\alpha}(\mathcal{H})$ is the irreducible representation of $U(\mathcal{H})$ with highest weight $\alpha$ if $\alpha$ has at most $\dim\mathcal{H}$ parts and the zero representation otherwise, while $[\alpha]$ is an irreducible representation of $S_n$. The orthogonal projection onto the isotypic component corresponding to $\alpha$ will be denoted by $P^{\mathcal{H}}_\alpha$ or $P_\alpha$ if the Hilbert space is clear from the context. The number of partitions of $n$ into at most $\dim\mathcal{H}$ parts is upper bounded by $(n+1)^{\dim\mathcal{H}}$. We need the following dimension estimates
\begin{align}
\dim\mathbb{S}_{\alpha}(\mathcal{H}) & \le(|\alpha|+1)^{d(d-1)/2}  \label{eq:Schurdimension}  \\
\dim[\alpha] & \ge \frac{1}{(|\alpha|+d)^{(d+2)(d-1)/2}}2^{|\alpha|\entropy(\frac{1}{|\alpha|}\alpha)}  \label{eq:symmetricgrpdimension}
\end{align}
where $d=\dim\mathcal{H}$.

We denote by $\overline{\partitions}$ the set of those functions $\naturals\to\nonnegativereals$ that are nonincreasing, finitely supported and that sum to $1$. In particular, when $\alpha\in\partitions[n]$, then we can consider its normalisation $\frac{1}{n}\alpha\in\overline{\partitions}$, where the multiplication is understood entrywise. $\overline{\partitions}$ is equipped with the metric induced by the $\ell^1$ norm, the open ball of radius $\epsilon$ around $\overline{\alpha}\in\overline{\partitions}$ is $\ball{\epsilon}{\overline{\alpha}}$. An element $\overline{\alpha}\in\overline{\partitions}$ can be viewed as a probability distribution and we can consider its Shannon entropy $\entropy(\overline{\alpha})$. When $\rho$ is a quantum state on a finite dimensional Hilbert space, then its spectrum (with multiplicities and ordered nonincreasingly) is an element of $\overline{\partitions}$.

Let $\alpha,\beta,\gamma\in\partitions[n]$. The Kronecker coefficient is defined as $g_{\alpha\beta\gamma}=\dim([\alpha]\otimes[\beta]\otimes[\gamma])^{S_n}$. We define $\Kron$ to be the closure in $\overline{\partitions}^3$ of
\begin{equation}
\setbuild{\left(\frac{1}{n}\alpha,\frac{1}{n}\beta,\frac{1}{n}\gamma\right)}{|\alpha|=|\beta|=|\gamma|=n,g_{\alpha\beta\gamma}\neq 0}.
\end{equation}
$\Kron$ is also the set of triples that arise as marginal spectra of tripartite pure states. $(\overline{\alpha},\overline{\beta},\overline{\gamma})\in\Kron$ implies $\entropy(\overline{\alpha})\le\entropy(\overline{\beta})+\entropy(\overline{\gamma})$ \cite{klyachko2004quantum,christandl2006spectra,christandl2006structure}.

The Kronecker coefficients also appear in decompositions for the unitary groups. Let $\mathcal{H}$ and $\mathcal{K}$ be Hilbert spaces and $\alpha\in\partitions[n]$. Then there is a map $U(\mathcal{H})\times U(\mathcal{K})\to U(\mathcal{H}\otimes\mathcal{K})$ (given by the tensor product of the unitary operators) and we have the isomorphism
\begin{equation}
\mathbb{S}_{\alpha}(\mathcal{H}\otimes\mathcal{K})\simeq\bigoplus_{\beta,\gamma\in\partitions[n]}\complexes^{g_{\alpha\beta\gamma}}\otimes\mathbb{S}_{\beta}(\mathcal{H})\otimes\mathbb{S}_{\gamma}(\mathcal{K})
\end{equation}
as $U(\mathcal{H})\times U(\mathcal{K})$-representations. In terms of the projections $P^{\mathcal{H}\otimes\mathcal{K}}_\alpha$, $P^{\mathcal{H}}_\beta\otimes\id_{\mathcal{K}^{\otimes n}}$ and $\id_{\mathcal{H}^{\otimes n}}\otimes P^{\mathcal{K}}_\gamma$ on $(\mathcal{H}\otimes\mathcal{K})^{\otimes n}$ we have the vanishing condition (here and below assuming that the individual projections do not vanish)
\begin{equation}\label{eq:Kronvanishing}
P^{\mathcal{H}\otimes\mathcal{K}}_\alpha(P^{\mathcal{H}}_\beta\otimes P^{\mathcal{K}}_\gamma)=0\iff g_{\alpha\beta\gamma}=0.
\end{equation}
Note that the projections $P^{\mathcal{H}\otimes\mathcal{K}}_\alpha$ and $P^{\mathcal{H}}_\beta\otimes P^{\mathcal{K}}_\gamma$ commute.

Let $0\le m\le n$ and $\alpha\in\partitions[n],\beta\in\partitions[m],\gamma\in\partitions[n-m]$. The Littlewood--Richardson coefficient is defined as $c^\alpha_{\beta\gamma}=\dim(([\beta]\otimes[\gamma])\otimes\Res^{S_{n}}_{S_m\times S_{n-m}}[\alpha])^{S_m\times S_{n-m}}$. We define $\LR{\bullet}$ to be the closure in $[0,1]\times\overline{\partitions}^3$ of
\begin{equation}
\setbuild{\left(\frac{m}{n},\frac{1}{n}\alpha,\frac{1}{m}\beta,\frac{1}{n-m}\gamma\right)}{|\alpha|=n,|\beta|=m>0,|\gamma|=n-m>0,c^\alpha_{\beta\gamma}\neq 0}
\end{equation}
and $\LR{q}$ as
\begin{equation}
\LR{q}=\setbuild{(\overline{\alpha},\overline{\beta},\overline{\gamma})}{(q,\overline{\alpha},\overline{\beta},\overline{\gamma})\in\LR{\bullet}}
\end{equation}
for $q\in[0,1]$. $\LR{q}$ is also the set of triples $(\overline{\alpha},\overline{\beta},\overline{\gamma})$ such that there exist quantum states $\rho,\sigma$ on a finite dimensional Hilbert space such that the spectra of $q\rho+(1-q)\sigma,\rho,\sigma$ are $\overline{\alpha},\overline{\beta},\overline{\gamma}$, respectively, which is a form of Horn's problem \cite{horn1962eigenvalues,lidskii1982spectral,klyachko1998stable,christandl2006structure}. $(\overline{\alpha},\overline{\beta},\overline{\gamma})\in\LR{q}$ implies $q\entropy(\overline{\beta})+(1-q)\entropy(\overline{\gamma})\le\entropy(\overline{\alpha})\le q\entropy(\overline{\beta})+(1-q)\entropy(\overline{\gamma})+h(q)$ \cite{christandl2018universal}.

The Littlewood--Richardson coefficients also appear in decompositions for the unitary groups in two ways. Let $\mathcal{H}$ and $\mathcal{K}$ be Hilbert spaces and $\alpha\in\partitions[n]$. Then $U(\mathcal{H})\times U(\mathcal{K})\le U(\mathcal{H}\oplus\mathcal{K})$ (as block diagonal operators) and we have the isomorphism
\begin{equation}
\mathbb{S}_\alpha(\mathcal{H}\oplus\mathcal{K})\simeq\bigoplus_{\beta,\gamma}\complexes^{c^\alpha_{\beta\gamma}}\otimes\mathbb{S}_{\beta}(\mathcal{H})\otimes\mathbb{S}_{\gamma}(\mathcal{K})
\end{equation}
as $U(\mathcal{H})\times U(\mathcal{K})$-representations, where the sum is over partitions with $|\beta|+|\gamma|=n=|\alpha|$. Consider the representation on $(\mathcal{H}\oplus\mathcal{K})^{\otimes n}\simeq\bigoplus_{m=0}^{n}\complexes^{\binom{n}{m}}\otimes\mathcal{H}^{\otimes m}\otimes\mathcal{K}^{\otimes n-m}$. In the direct summands the first factor corresponds to the possible $m$-element subsets of the $n$ factors where the space $\mathcal{H}$ is chosen. In general, to specify a vector or an operator on this space in terms of this isomorphism, one needs to choose a bijection between the basis elements of $\complexes^{\binom{n}{m}}$ and the subsets, and in addition an ordering of the $m$ and $n-m$ factors among themselves. However, we will only encounter instances where the vectors or operators are $S_m\times S_{n-m}$-invariant and we sum over all the $m$-element subsets, which eliminates the need for these choices. In particular, we consider the commuting projections $P^{\mathcal{H}\oplus\mathcal{K}}_\alpha$, $\id_{\complexes^{\binom{n}{m}}}\otimes P^{\mathcal{H}}_\beta\otimes\id_{\mathcal{K}^{\otimes n-m}}$ and $\id_{\complexes^{\binom{n}{m}}}\otimes\id_{\mathcal{H}^{\otimes m}}\otimes P^{\mathcal{K}}_\gamma$, in terms of which we have the vanishing condition
\begin{equation}\label{eq:LRvanishingsum}
P^{\mathcal{H}\oplus\mathcal{K}}_\alpha(\id_{\complexes^{\binom{n}{m}}}\otimes P^{\mathcal{H}}_\beta\otimes P^{\mathcal{K}}_\gamma)=0\iff c^\alpha_{\beta\gamma}=0.
\end{equation}

The second decomposition involving the Littlewood--Richardson coefficients is that of the tensor product of representations of a unitary group $U(\mathcal{H})$. For $\beta\in\partitions[m]$ and $\gamma\in\partitions[n-m]$ we have the isomorphism
\begin{equation}
\mathbb{S}_\beta(\mathcal{H})\otimes\mathbb{S}_\gamma(\mathcal{H})\simeq\bigoplus_{\alpha\in\partitions[n]}\mathbb{S}_\alpha(\mathcal{H})
\end{equation}
as $U(\mathcal{H})$-representations. Choosing a factorisation $\mathcal{H}^{\otimes n}\simeq\mathcal{H}^{\otimes m}\otimes\mathcal{H}^{\otimes n-m}$ we may consider the projections $P^{\mathcal{H}}_\alpha$, $P^{\mathcal{H}}_\beta\otimes\id_{\mathcal{H}^{\otimes n-m}}$ and $\id_{\mathcal{H}^{\otimes m}}\otimes P^{\mathcal{H}}_\gamma$. These projections commute and satisfy the vanishing condition
\begin{equation}\label{eq:LRvanishingproduct}
P^{\mathcal{H}}_\alpha(P^{\mathcal{H}}_\beta\otimes P^{\mathcal{H}}_\gamma)=0\iff c^\alpha_{\beta\gamma}=0.
\end{equation}

Next we consider Hilbert spaces of multipartite systems and introduce a compact notation in order to simplify the formulas later. $k$ denotes the number of subsystems (tensor factors), which can be considered fixed throughout.  We use $\lambda,\mu,\nu$ to denote $k$-tuples of partitions of some natural number $n$ and write $\lambda=(\lambda_1,\ldots,\lambda_k)$ when referring to the individual partitions (we will not need to label the parts of the partitions). If $\mathcal{H}=\mathcal{H}_1\otimes\cdots\otimes\mathcal{H}_k$, then we can decompose each factor in $\mathcal{H}^{\otimes n}$ using the Schur--Weyl decomposition
\begin{equation}
(\mathcal{H}_1\otimes\cdots\otimes\mathcal{H}_k)^{\otimes n}\simeq\bigoplus_{\lambda\in\partitions[n]^k}\bigotimes_{j=1}^k\mathbb{S}_{\lambda_j}(\mathcal{H}_j)\otimes[\lambda_j].
\end{equation}
This is an isomorphism of $U(\mathcal{H}_1)\times\cdots\times U(\mathcal{H}_k)\times S_n^k$-representations. A tuple of partitions $\lambda$ can be identified with a dominant weight for the compact Lie group $U(\mathcal{H}_1)\times\cdots\times U(\mathcal{H}_k)$ (there are other dominant weights, but these are the ones that we shall encounter). $P^{\mathcal{H}}_{\lambda}$ (also $P_{\lambda}$ if the Hilbert space is understood) will denote the orthogonal projection onto the direct summand corresponding to $\lambda$. It satisfies
\begin{equation}\label{eq:localprojectorproduct}
P^{\mathcal{H}}_{\lambda}=P^{\mathcal{H}_1}_{\lambda_1}\otimes\cdots\otimes P^{\mathcal{H}_k}_{\lambda_k}.
\end{equation}

When $\lambda\in\partitions[n]^k$, we will also write $\frac{1}{n}\lambda\in\overline{\partitions}^k$ for $(\frac{1}{n}\lambda_1,\ldots,\frac{1}{n}\lambda_k)$, meaning that every element of every partition in the $k$-tuple is rescaled. On $\overline{\partitions}^k$ we consider the distance induced by the maximum of the $1$-norms. For $\overline{\lambda}=(\overline{\lambda_1},\ldots,\overline{\lambda_k})\in\overline{\partitions}^k$ and a probability distribution $\theta\in\distributions([k])$ we will frequently use the weighted average of their entropies
\begin{equation}
\entropy_{\theta}(\overline{\lambda})=\sum_{j=1}^k\theta(j)\entropy(\overline{\lambda_j}).
\end{equation}

When $\overline{\lambda},\overline{\mu},\overline{\nu}\in\overline{\partitions}^k$, we will write $(\overline{\lambda},\overline{\mu},\overline{\nu})\in\Kron^k$ to mean $\forall j\in[k]:(\overline{\lambda_j},\overline{\mu_j},\overline{\nu_j})\in\Kron$. By taking convex combinations we can see that $(\overline{\lambda},\overline{\mu},\overline{\nu})\in\Kron^k$ implies
\begin{equation}
\entropy_{\theta}(\overline{\lambda})\le\entropy_{\theta}(\overline{\mu})+\entropy_{\theta}(\overline{\nu}).
\end{equation}
Similarly, when $\lambda,\mu,\nu\in\partitions[n]^k$ we will use $g_{\lambda\mu\nu}\neq 0$ as an abbreviation for $\forall j\in[k]:g_{\lambda_j\mu_j\nu_j}\neq 0$ (one could define $g_{\lambda\mu\nu}=\prod_{j=1}^k g_{\lambda_j\mu_j\nu_j}$ but this value will not play a role, the only thing that matters is if it is zero or not).

When $\overline{\lambda},\overline{\mu},\overline{\nu}\in\overline{\partitions}^k$, we will write $(\overline{\lambda},\overline{\mu},\overline{\nu})\in\LR{q}^k$ to mean $\forall j\in[k]:(\overline{\lambda_j},\overline{\mu_j},\overline{\nu_j})\in\LR{q}$. $(\overline{\lambda},\overline{\mu},\overline{\nu})\in\LR{q}^k$ implies
\begin{equation}
q\entropy_{\theta}(\overline{\mu})+(1-q)\entropy_{\theta}(\overline{\nu})\le\entropy_{\theta}(\overline{\lambda})\le q\entropy_{\theta}(\overline{\mu})+(1-q)\entropy_{\theta}(\overline{\nu})+h(q).
\end{equation}
Similarly, when $\lambda\in\partitions[n]^k,\mu\in\partitions[m]^k,\nu\in\partitions[n-m]^k$ we will use $c^\lambda_{\mu\nu}\neq 0$ as an abbreviation for $\forall j\in[k]:c^{\lambda_j}_{\mu_j\nu_j}\neq 0$ (again, one may think $c^{\lambda}_{\mu\nu}=\prod_{j=1}^k c^{\lambda_j}_{\mu_j\nu_j}$ but the value itself will not be used).

\section{Simultaneous spectrum estimation}\label{sec:estimation}

In \cite{alicki1988symmetry} Alicki, Rudicki and Sadowski and in \cite{keyl2001estimating} Keyl and Werner proposed an estimator for the spectrum of a density matrix, based on measuring the Schur--Weyl projectors $P_\alpha$ on $n$ identical copies of a quantum state. The measurement outcomes are labelled with the partitions $\alpha\in\partitions[n]$ and $\frac{1}{n}\alpha\in\overline{\partitions}$ is the estimate for the ordered spectrum. They showed that as $n\to\infty$, the distributions converge weakly to the Dirac measure on the spectrum and satisfy a large deviation principle with rate function given by the relative entropy. For our purposes the following formulation will be convenient: if $r$ denotes the ordered spectrum of $\rho$ and $\overline{\alpha}\in\overline{\partitions}$ then
\begin{equation}\label{eq:KeylWerner}
\lim_{\epsilon\to0}\lim_{n\to\infty}-\frac{1}{n}\log\sum_{\substack{\alpha\in\partitions[n]  \\  \frac{1}{n}\alpha\in\ball{\epsilon}{\overline{\alpha}}}}\Tr(P_\alpha\rho^{\otimes n})=\relativeentropy{\overline{\alpha}}{r}.
\end{equation}
Suppose that $\ketbra{\varphi}{\varphi}\in\states(\mathcal{H}_1\otimes\mathcal{H}_2)$ is a purification of $\rho\in\states(\mathcal{H}_1)$. If we perform the measurement independently on $\mathcal{H}_1^{\otimes n}$ and on $\mathcal{H}_2^{\otimes n}$ then it is easy to see that the outcomes will be perfectly correlated and of course both sides see the same exponential behaviour as in \eqref{eq:KeylWerner}.

In this section we will study the rate function for the same estimator in a multipartite setting. Let $\varphi\in\mathcal{H}=\mathcal{H}_1\otimes\cdots\otimes\mathcal{H}_k$ be a unit vector and suppose that the $k$ parties perform a measurement with the local Schur--Weyl projectors $P^{\mathcal{H}_j}_{\lambda_j}$ on $\varphi^{\otimes n}$. Note that this is equivalent to measuring $P^{\mathcal{H}}_{\lambda}$ (see \eqref{eq:localprojectorproduct}). The rescaled outcome $\frac{1}{n}\lambda\in\overline{\partitions}^k$ serves as the estimate of the $k$-tuple of marginal spectra. As in the bipartite case, each marginal estimate looks like \eqref{eq:KeylWerner} but this time the correlation between the estimates is more complicated. We regard the resulting rate function as a multipartite generalisation of the relative entropy and it will play a central role in our construction of the entanglement monotones. Recall that the classical relative entropy satisfies 
\begin{equation}\label{eq:relativeentropyproduct}
\relativeentropy{Q}{P_1\otimes P_2}\ge\relativeentropy{Q_1}{P_1}+\relativeentropy{Q_2}{P_2}
\end{equation}
where $Q_1$ and $Q_2$ are the marginals of $Q$, and
\begin{equation}\label{eq:relativeentropysum}
\relativeentropy{qQ_1\oplus(1-q)Q_2}{P_1\oplus P_2}=q\relativeentropy{Q_1}{P_1}+(1-q)\relativeentropy{Q_2}{P_2}-h(q),
\end{equation}
where $Q_1\in\distributions(\mathcal{X}_1)$, $Q_2\in\distributions(\mathcal{X}_2)$, $\oplus$ is the direct sum resulting in a distribution on $\mathcal{X}_1\cup\mathcal{X}_2$ and $h(p)=-p\log p-(1-p)\log(1-p)$. The results in this section can be viewed as analogous properties satisfied by the rate function in the simultaneous spectrum estimation problem and may be of independent interest.

We make the following definition:
\begin{definition}\label{def:ratefunction}
Let $\overline{\lambda}\in\overline{\partitions}^k$ and $\varphi\in\mathcal{H}=\mathcal{H}_1\otimes\cdots\otimes\mathcal{H}_k$. The rate function is defined as
\begin{equation}
\ratefunction{\overline{\lambda}}{\varphi}=\lim_{\epsilon\to 0}\lim_{n\to\infty}-\frac{1}{n}\log\sum_{\substack{\lambda\in\partitions[n]^k  \\  \frac{1}{n}\lambda\in \ball{\epsilon}{\overline{\lambda}}}}\norm{P_\lambda\varphi^{\otimes n}}^2
\end{equation}
\end{definition}
To see that this quantity is well defined (possibly $\infty$), we only need to show that the limit as $n\to\infty$ exists, which is then clearly monotone in $\epsilon$. We postpone the proof of this fact (Proposition~\ref{prop:limballexists}, see also \cite{botero2020large}) and of the following technical lemmas to Section~\ref{sec:technical}.
\begin{lemma}\label{lem:subsequenceliminfbound}
Let $\varphi\in\mathcal{H}$, $(n_l)_{l\in\naturals}$ a sequence of natural numbers such that $\lim_{l\to\infty}n_l=\infty$ and $(\lambda^{(n_l)})_{l\in\naturals}$ a sequence such that $\lambda^{(n_l)}\in\partitions[n_l]^k$ and $\lim_{l\to\infty}\frac{1}{n_l}\lambda^{(n_l)}=\overline{\lambda}$. Then
\begin{equation}
\liminf_{l\to\infty}-\frac{1}{n_l}\log\norm{P_{\lambda^{(n_l)}}\varphi^{\otimes n_l}}^2\ge\ratefunction{\overline{\lambda}}{\varphi}.
\end{equation}
\end{lemma}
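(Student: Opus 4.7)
The plan is to exploit the fact that for large $l$ the partition $\lambda^{(n_l)}$ itself is among the partitions being summed over in the definition of $\ratefunction{\overline{\lambda}}{\varphi}$, so that $\norm{P_{\lambda^{(n_l)}}\varphi^{\otimes n_l}}^2$ is bounded above by that sum; then I pass to limits.

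First I would fix $\epsilon>0$. Since $\frac{1}{n_l}\lambda^{(n_l)}\to\overline{\lambda}$ in the metric on $\overline{\partitions}^k$, there exists $L(\epsilon)$ such that $\frac{1}{n_l}\lambda^{(n_l)}\in\ball{\epsilon}{\overline{\lambda}}$ for all $l\ge L(\epsilon)$. For such $l$, the term $\norm{P_{\lambda^{(n_l)}}\varphi^{\otimes n_l}}^2\ge 0$ appears in the sum defining the rate function at scale $n_l$ and radius $\epsilon$, so
\begin{equation*}
\norm{P_{\lambda^{(n_l)}}\varphi^{\otimes n_l}}^2\le\sum_{\substack{\lambda\in\partitions[n_l]^k\\\frac{1}{n_l}\lambda\in\ball{\epsilon}{\overline{\lambda}}}}\norm{P_\lambda\varphi^{\otimes n_l}}^2,
\end{equation*}
and taking $-\frac{1}{n_l}\log$ reverses the inequality.

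Second, I would take the liminf of both sides as $l\to\infty$. By Proposition~\ref{prop:limballexists} the right-hand side has a limit along the full sequence of natural numbers, so the liminf along the subsequence $(n_l)$ equals that full limit. This gives
\begin{equation*}
\liminf_{l\to\infty}-\frac{1}{n_l}\log\norm{P_{\lambda^{(n_l)}}\varphi^{\otimes n_l}}^2\ge\lim_{n\to\infty}-\frac{1}{n}\log\sum_{\substack{\lambda\in\partitions[n]^k\\\frac{1}{n}\lambda\in\ball{\epsilon}{\overline{\lambda}}}}\norm{P_\lambda\varphi^{\otimes n}}^2.
\end{equation*}
Finally I would let $\epsilon\to 0$. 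The left-hand side is independent of $\epsilon$, while the right-hand side is monotone non-decreasing in the shrinking parameter $\epsilon$ (smaller balls mean fewer nonnegative terms in the sum, hence a larger $-\frac{1}{n}\log$), so its supremum as $\epsilon\to 0$ is by definition $\ratefunction{\overline{\lambda}}{\varphi}$.

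There is essentially no obstacle in this proof once Proposition~\ref{prop:limballexists} is granted: the lemma is a direct ``one term bounds the sum'' consequence of the definition, combined with the fact that the limit over a subsequence equals the full limit whenever the latter exists. The genuine technical work is packaged into Proposition~\ref{prop:limballexists} (existence of the large deviation rate), which is deferred to the later technical section.
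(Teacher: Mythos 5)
Your proof is correct and is essentially the paper's own argument: bound the single term $\norm{P_{\lambda^{(n_l)}}\varphi^{\otimes n_l}}^2$ by the ball sum once $\frac{1}{n_l}\lambda^{(n_l)}\in\ball{\epsilon}{\overline{\lambda}}$, invoke Proposition~\ref{prop:limballexists} so that the subsequence liminf of the sum equals the full limit, and then let $\epsilon\to 0$ using monotonicity. No substantive differences from the paper's proof.
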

\begin{lemma}\label{lem:ratefunctionsinglelambdalimit}
For every $\varphi\in\mathcal{H}$ and $\overline{\lambda}\in\overline{\partitions}^k$ there exists a sequence $(\lambda^{(n)})_{n\in\naturals}$ such that $\lambda^{(n)}\in\partitions[n]^k$,
\begin{align}
\lim_{n\to\infty}\frac{1}{n}\lambda^{(n)} & = \overline{\lambda}
\intertext{and}
\lim_{n\to\infty}-\frac{1}{n}\log\norm{P_{\lambda^{(n)}}\varphi^{\otimes n}}^2 & = \ratefunction{\overline{\lambda}}{\varphi}.
\end{align}
\end{lemma}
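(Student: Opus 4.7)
The plan is a straightforward diagonal/pigeonhole argument: since the number of $k$-tuples of partitions of $n$ is only polynomial in $n$, the exponential rate of $\sum_{\lambda \in B_\epsilon}\norm{P_\lambda \varphi^{\otimes n}}^2$ coincides with that of its largest term, and then one lets $\epsilon$ shrink to zero along $n$.

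First, I would fix $\overline{\lambda}\in\overline{\partitions}^k$ and, for each $\epsilon>0$ and each $n$ large enough that the set $S(n,\epsilon)=\{\lambda\in\partitions[n]^k : \frac{1}{n}\lambda\in\ball{\epsilon}{\overline{\lambda}}\}$ is nonempty, pick $\lambda^{(n,\epsilon)}\in S(n,\epsilon)$ maximising $\norm{P_\lambda\varphi^{\otimes n}}^2$. Using the elementary bound $|S(n,\epsilon)|\le(n+1)^{kD}$ with $D=\max_j\dim\mathcal{H}_j$ (from the count of partitions with at most $D$ parts), one has
\begin{equation}
\sum_{\lambda\in S(n,\epsilon)}\norm{P_\lambda\varphi^{\otimes n}}^2\le(n+1)^{kD}\norm{P_{\lambda^{(n,\epsilon)}}\varphi^{\otimes n}}^2,
\end{equation}
so if $I_\varphi^\epsilon(\overline{\lambda})$ denotes the inner limit $\lim_n-\tfrac{1}{n}\log\sum_{\lambda\in S(n,\epsilon)}\norm{P_\lambda\varphi^{\otimes n}}^2$ (which exists by Proposition~\ref{prop:limballexists}), we get
\begin{equation}
\limsup_{n\to\infty}-\frac{1}{n}\log\norm{P_{\lambda^{(n,\epsilon)}}\varphi^{\otimes n}}^2\le I_\varphi^\epsilon(\overline{\lambda}),
\end{equation}
since the polynomial prefactor contributes $O(\log n/n)\to 0$. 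The trivial inequality $\norm{P_{\lambda^{(n,\epsilon)}}\varphi^{\otimes n}}^2\le\sum_{\lambda\in S(n,\epsilon)}\norm{P_\lambda\varphi^{\otimes n}}^2$ gives the matching $\liminf\ge I_\varphi^\epsilon(\overline{\lambda})$, so $\lim_n-\tfrac{1}{n}\log\norm{P_{\lambda^{(n,\epsilon)}}\varphi^{\otimes n}}^2=I_\varphi^\epsilon(\overline{\lambda})$.

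Next, recall $I_\varphi^\epsilon(\overline{\lambda})$ is monotone in $\epsilon$ and $I_\varphi^\epsilon(\overline{\lambda})\to\ratefunction{\overline{\lambda}}{\varphi}$ as $\epsilon\to 0$. I would now diagonalise: choose integers $n_1<n_2<\cdots$ such that for every $n\ge n_k$ we have both $S(n,1/k)\ne\emptyset$ and
\begin{equation}
\left|-\frac{1}{n}\log\norm{P_{\lambda^{(n,1/k)}}\varphi^{\otimes n}}^2-I_\varphi^{1/k}(\overline{\lambda})\right|<\frac{1}{k}.
\end{equation}
Setting $\epsilon(n)=1/k$ and $\lambda^{(n)}=\lambda^{(n,1/k)}$ for $n_k\le n<n_{k+1}$ yields $\frac{1}{n}\lambda^{(n)}\in\ball{\epsilon(n)}{\overline{\lambda}}$ with $\epsilon(n)\to 0$, hence $\frac{1}{n}\lambda^{(n)}\to\overline{\lambda}$. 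Moreover
\begin{equation}
\left|-\frac{1}{n}\log\norm{P_{\lambda^{(n)}}\varphi^{\otimes n}}^2-\ratefunction{\overline{\lambda}}{\varphi}\right|\le\frac{1}{k}+\left|I_\varphi^{1/k}(\overline{\lambda})-\ratefunction{\overline{\lambda}}{\varphi}\right|\xrightarrow{n\to\infty}0,
\end{equation}
which is the desired convergence.

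The only mildly delicate point is ensuring $S(n,1/k)\ne\emptyset$ for the chosen $n$: this is why the $\epsilon_k=1/k$ schedule shrinks slowly (much slower than $1/n$), so that the lattice $\frac{1}{n}\partitions[n]^k$ always meets $\ball{1/k}{\overline{\lambda}}$ once $n$ is large enough relative to $k$. If $\ratefunction{\overline{\lambda}}{\varphi}=\infty$, $S(n,1/k)$ may be empty for all $n$ at some $k$; in that case one takes a coarser sequence $\epsilon_k$ so that all $S(n,\epsilon_k)$ are nonempty, which is possible as long as $\overline{\lambda}$ lies in the closure of $\bigcup_n\frac{1}{n}\partitions[n]^k$ (which is all of $\overline{\partitions}^k$), and the estimate above still forces $-\tfrac{1}{n}\log\norm{P_{\lambda^{(n)}}\varphi^{\otimes n}}^2\to\infty=\ratefunction{\overline{\lambda}}{\varphi}$. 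I do not expect any real obstacle beyond this bookkeeping, since the substantive content — existence of $I_\varphi^\epsilon(\overline{\lambda})$ and its limit as $\epsilon\to 0$ — is provided by Proposition~\ref{prop:limballexists}.
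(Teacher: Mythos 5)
Your proposal follows essentially the same route as the paper: for each $\epsilon$ pick the norm-maximising $\lambda$ in the ball (the polynomially many tuples in $\partitions[n]^k$ make its exponential rate coincide with the rate $I^{\epsilon}$ of the whole sum, which exists by Proposition~\ref{prop:limballexists}), then diagonalise over a shrinking sequence of radii with thresholds $n_k$; the paper does exactly this with radii $\epsilon_m\to 0$ and thresholds $n_0(m)$. The only substantive difference is how the second direction and the infinite-rate case are handled. The paper imposes just the one-sided condition $-\tfrac1n\log\norm{P_{\lambda^{(m,n)}}\varphi^{\otimes n}}^2\le I_{\epsilon_m}+\epsilon_m$ and obtains the matching lower bound from Lemma~\ref{lem:subsequenceliminfbound}, which works uniformly, including when $I_{\epsilon_m}=\infty$. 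Your two-sided requirement $\left|-\tfrac1n\log\norm{P_{\lambda^{(n,1/k)}}\varphi^{\otimes n}}^2-I^{1/k}\right|<1/k$ is ill-defined once $I^{1/k}=\infty$, and your proposed repair misdiagnoses the issue: $S(n,1/k)$ is never the obstruction (it is nonempty for all sufficiently large $n$, since $\frac1n\partitions[n]^k$ approximates any point of $\overline{\partitions}^k$); the obstruction is that all the norms in the ball may vanish or decay superexponentially. Worse, coarsening the radii until the sums are nonzero could force the radii to stay bounded away from $0$ (this happens if $I^{\epsilon}=\infty$ for all $\epsilon$ below some threshold), which would destroy the required convergence $\frac1n\lambda^{(n)}\to\overline{\lambda}$. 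The fix is a one-liner: keep $\epsilon_k=1/k$ and, whenever $I^{1/k}=\infty$, replace the closeness condition by $-\tfrac1n\log\norm{P_{\lambda^{(n,1/k)}}\varphi^{\otimes n}}^2\ge k$ (valid for large $n$ because the maximiser's rate is at least the rate of the sum), or simply adopt the paper's one-sided bound and invoke Lemma~\ref{lem:subsequenceliminfbound} for the other inequality. With that adjustment your argument is complete and matches the paper's proof.
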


We mention that the rate function can be expressed via a single-letter formula as a special case of \cite{franks2020minimal,botero2020large}:
\begin{equation}
\ratefunction{\overline{\lambda}}{\varphi}=\inf_U\sup_{A,N}\langle\overline{\lambda},\alpha\rangle-\log\Tr N^*2^{A/2}U^*\ketbra{\varphi}{\varphi}U2^{A/2}N,
\end{equation}
where the infimum is over tensor product unitaries $U$, the supremum is over $A\in\reals^{\dim\mathcal{H}_1}\oplus\cdots\oplus\reals^{\dim\mathcal{H}_k}$, identified with diagonal matrices as $A_1\otimes I\otimes\cdots\otimes I+I\otimes A_2\otimes I\otimes\cdots\otimes I+\cdots$ and over tensor products $N$ of upper triangular unipotent matrices. We will not make use of this expression in the proofs below, but we expect it to be useful for computations.

First we derive some immediate properties of the rate function that will be used in Section~\ref{sec:spectralpoints}.
\begin{proposition}[Basic properties of the rate function]\label{prop:ratebasic}
Let $\varphi\in\mathcal{H}=\mathcal{H}_1\otimes\cdots\otimes\mathcal{H}_k$.
\begin{enumerate}[(i)]
\item\label{it:ratescaling} $\ratefunction{\overline{\lambda}}{\sqrt{p}\varphi}=\ratefunction{\overline{\lambda}}{\varphi}-\log p$ for every $\overline{\lambda}\in\overline{\partitions}^k$ and $p>0$.
\item\label{it:ratesimplemonotone} If $\psi=(A_1\otimes\cdots\otimes A_k)\varphi$ where $\forall j:A_j\in\boundeds(\mathcal{H}_j)$ and $A_j^*A_j\le I$ then $\ratefunction{\overline{\lambda}}{\varphi}\le\ratefunction{\overline{\lambda}}{\psi}$ for every $\overline{\lambda}\in\overline{\partitions}^k$
\item\label{it:ratemarginal} If $\norm{\varphi}=1$ and $\overline{\lambda}$ is the collection of its marginal spectra then $\ratefunction{\overline{\lambda}}{\varphi}=0$ (see also \cite[Corollary 3.25.]{botero2020large}).
\end{enumerate}
\end{proposition}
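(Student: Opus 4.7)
The strategy is to read off each of the three properties directly from Definition~\ref{def:ratefunction} by manipulating the summand $\norm{P_\lambda \varphi^{\otimes n}}^2$ before applying $-\frac{1}{n}\log$ and the two iterated limits. No appeal to the variational formula is needed. Property (\ref{it:ratescaling}) is pure homogeneity: from $(\sqrt{p}\varphi)^{\otimes n} = p^{n/2}\varphi^{\otimes n}$ we get $\norm{P_\lambda(\sqrt{p}\varphi)^{\otimes n}}^2 = p^n \norm{P_\lambda \varphi^{\otimes n}}^2$, so the factor $p^n$ pulls out of the sum and contributes an additive constant $-\log p$ after $-\frac{1}{n}\log$ is applied, independently of $\epsilon$ and $n$; taking the limits then yields the claim.

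For (\ref{it:ratesimplemonotone}), the crucial observation is that $A_j^{\otimes n}$ commutes with $P^{\mathcal{H}_j}_{\lambda_j}$. This holds because, by Schur--Weyl duality, $P^{\mathcal{H}_j}_{\lambda_j}$ is the orthogonal projection onto an $S_n$-isotypic component of $\mathcal{H}_j^{\otimes n}$, while $A_j^{\otimes n}$ is manifestly $S_n$-equivariant and hence preserves every such component. Factoring $P_\lambda$ locally via \eqref{eq:localprojectorproduct} then gives
\[
\norm{P_\lambda \psi^{\otimes n}}^2 = \norm{(A_1 \otimes \cdots \otimes A_k)^{\otimes n} P_\lambda \varphi^{\otimes n}}^2 \le \norm{P_\lambda \varphi^{\otimes n}}^2,
\]
the inequality following from $A_j^* A_j \le I$, which makes the global tensor a contraction. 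Summing over $\frac{1}{n}\lambda \in \ball{\epsilon}{\overline{\lambda}}$ and applying $-\frac{1}{n}\log$ reverses the inequality, and passing to the limits produces $\ratefunction{\overline{\lambda}}{\psi} \ge \ratefunction{\overline{\lambda}}{\varphi}$.

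Property (\ref{it:ratemarginal}) requires a two-sided bound. Nonnegativity is immediate: since $\|\varphi\|=1$, the family $\{\norm{P_\lambda \varphi^{\otimes n}}^2\}_\lambda$ is a probability distribution on $\partitions[n]^k$, so every partial sum is at most $1$ and $-\frac{1}{n}\log$ of it is nonnegative. For the matching upper bound, the plan is to reduce to the single-site estimator by marginalisation. Using \eqref{eq:localprojectorproduct} and completeness of the Schur--Weyl projectors on each factor, the sum of $\norm{P_\lambda \varphi^{\otimes n}}^2$ over all indices $\lambda_i$ with $i \neq j$ equals $\Tr(P^{\mathcal{H}_j}_{\lambda_j} \rho_j^{\otimes n})$, where $\rho_j$ is the $j$-th single-site marginal. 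By hypothesis its ordered spectrum is $\overline{\lambda_j}$, so \eqref{eq:KeylWerner} implies $\Pr[\norm[1]{\frac{1}{n}\lambda_j - \overline{\lambda_j}} \ge \epsilon] \to 0$ as $n \to \infty$ for every $\epsilon > 0$. A union bound over $j \in [k]$ (recall the metric on $\overline{\partitions}^k$ is the maximum of the $1$-norms) then gives
\[
\sum_{\frac{1}{n}\lambda \in \ball{\epsilon}{\overline{\lambda}}} \norm{P_\lambda \varphi^{\otimes n}}^2 \to 1,
\]
so $-\frac{1}{n}\log$ of this sum tends to $0$, and combined with nonnegativity this pins $\ratefunction{\overline{\lambda}}{\varphi}$ to $0$.

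The only genuinely nontrivial ingredient is the Schur--Weyl commutation used in (\ref{it:ratesimplemonotone}); everything else is direct unwinding of Definition~\ref{def:ratefunction} combined with the single-site Keyl--Werner large deviation principle. No delicate estimate is required because the $\epsilon \to 0$ limit is taken after $n \to \infty$, so crude union bounds suffice.
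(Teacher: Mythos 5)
Your proposal is correct and follows essentially the same route as the paper: (i) via homogeneity of the summands, (ii) via the commutation of $(A_1\otimes\cdots\otimes A_k)^{\otimes n}$ with $P_\lambda$ together with the contraction bound from $A_j^*A_j\le I$, and (iii) via reduction to the single-site Keyl--Werner concentration and a union bound over the $k$ sites, which is exactly the paper's "which implies" step spelled out. No gaps; the extra detail you give (Schur--Weyl justification of the commutation, explicit marginalisation) is consistent with what the paper leaves implicit.
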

\begin{proof}
\ref{it:ratescaling}:
This follows from Definition~\ref{def:ratefunction} and the equality
\begin{equation}
-\frac{1}{n}\log\sum_{\substack{\lambda\in\partitions[n]^k  \\  \frac{1}{n}\lambda\in \ball{\epsilon}{\overline{\lambda}}}}\norm{P_\lambda(\sqrt{p}\varphi)^{\otimes n}}^2=-\frac{1}{n}\log\sum_{\substack{\lambda\in\partitions[n]^k  \\  \frac{1}{n}\lambda\in \ball{\epsilon}{\overline{\lambda}}}}\norm{P_\lambda\varphi^{\otimes n}}^2-\log p.
\end{equation}

\ref{it:ratesimplemonotone}:
$(A_1\otimes\cdots\otimes A_k)^{\otimes n}$ commutes with $P_\lambda$, therefore
\begin{equation}
\begin{split}
\norm{P_\lambda\left((A_1\otimes\cdots\otimes A_k)\varphi\right)^{\otimes n}}^2
 & = \norm{(A_1\otimes\cdots\otimes A_k)^{\otimes n}P_\lambda\varphi^{\otimes n}}^2  \\
 & = \langle P_\lambda\varphi^{\otimes n},(A_1^*A_1\otimes\cdots\otimes A_k^*A_k)^{\otimes n}P_\lambda\varphi^{\otimes n}\rangle  \\
 & \le \norm{P_\lambda\varphi^{\otimes n}}^2.
\end{split}
\end{equation}
From this the statement follows using Definition~\ref{def:ratefunction}.

\ref{it:ratemarginal}: Let $\rho_j$ be the $j$th marginal of $\ketbra{\varphi}{\varphi}$. \eqref{eq:KeylWerner} implies that for every $\epsilon>0$ and $j$ we have
\begin{equation}
\lim_{n\to\infty}\sum_{\alpha\in\partitions[n]\cap n\ball{\epsilon}{\overline{\lambda_j}}}\Tr P_\alpha\rho_j^{\otimes n}=1,
\end{equation}
which implies
\begin{equation}
\lim_{n\to\infty}\sum_{\substack{\lambda\in\partitions[n]^k  \\  \frac{1}{n}\lambda\in \ball{\epsilon}{\overline{\lambda}}}}\norm{P_\lambda\varphi^{\otimes n}}^2=1.
\end{equation}
Therefore the limit in Definition~\ref{def:ratefunction} is $0$ (even without dividing by $n$).
\end{proof}

The following inequality is analogous to \eqref{eq:relativeentropyproduct} and will be used in the proof of submultiplicativity in Proposition~\ref{prop:uppersub}.
\begin{proposition}[Rate function and tensor product]\label{prop:rateproduct}
Let $\psi\in\mathcal{H}=\mathcal{H}_1\otimes\cdots\otimes\mathcal{H}_k$ and $\varphi\in\mathcal{K}=\mathcal{K}_1\otimes\cdots\otimes\mathcal{K}_k$. For every $\overline{\lambda}\in\overline{\partitions}^k$ the inequality
\begin{equation}
\ratefunction{\overline{\lambda}}{\psi\otimes\varphi}\ge\inf_{\substack{\overline{\mu},\overline{\nu}\in\overline{\partitions}^k  \\  (\overline{\lambda},\overline{\mu},\overline{\nu})\in\Kron^k}}\ratefunction{\overline{\mu}}{\psi}+\ratefunction{\overline{\nu}}{\varphi}
\end{equation}
holds.
\end{proposition}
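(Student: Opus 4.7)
The plan is to use Lemma~\ref{lem:ratefunctionsinglelambdalimit} to replace the ball-sum definition of $\ratefunction{\overline{\lambda}}{\psi\otimes\varphi}$ by a pointwise single-$\lambda$ limit, then to bound that norm via the Kronecker vanishing condition \eqref{eq:Kronvanishing}, and finally to extract convergent subsequences and apply Lemma~\ref{lem:subsequenceliminfbound} to the individual factors.

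First I would choose a sequence $(\lambda^{(n)})$ from Lemma~\ref{lem:ratefunctionsinglelambdalimit} so that $\frac{1}{n}\lambda^{(n)}\to\overline{\lambda}$ and $\lim_n -\frac{1}{n}\log\norm{P_{\lambda^{(n)}}(\psi\otimes\varphi)^{\otimes n}}^2=\ratefunction{\overline{\lambda}}{\psi\otimes\varphi}$. Identifying $(\psi\otimes\varphi)^{\otimes n}$ with $\psi^{\otimes n}\otimes\varphi^{\otimes n}$ in $\mathcal{H}^{\otimes n}\otimes\mathcal{K}^{\otimes n}$, the projections $P_\mu^\mathcal{H}\otimes P_\nu^\mathcal{K}$ commute factor-by-factor with $P_{\lambda^{(n)}}^{\mathcal{H}\otimes\mathcal{K}}$. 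Inserting the resolution of identity $\sum_{\mu,\nu\in\partitions[n]^k}P_\mu^\mathcal{H}\otimes P_\nu^\mathcal{K}=I$, using orthogonality of distinct terms, and applying \eqref{eq:Kronvanishing} componentwise yields
\begin{equation}
\norm{P_{\lambda^{(n)}}(\psi\otimes\varphi)^{\otimes n}}^2=\sum_{\substack{\mu,\nu\in\partitions[n]^k\\g_{\lambda^{(n)}\mu\nu}\neq 0}}\norm{P_{\lambda^{(n)}}(P_\mu^\mathcal{H}\otimes P_\nu^\mathcal{K})(\psi^{\otimes n}\otimes\varphi^{\otimes n})}^2\le\sum_{\substack{\mu,\nu\in\partitions[n]^k\\g_{\lambda^{(n)}\mu\nu}\neq 0}}\norm{P_\mu\psi^{\otimes n}}^2\norm{P_\nu\varphi^{\otimes n}}^2,
\end{equation}
the final inequality coming from dropping the outer $P_{\lambda^{(n)}}$. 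The number of summands is polynomial in $n$, so the right-hand side is at most $\mathrm{poly}(n)\cdot\max_{\mu,\nu}\norm{P_\mu\psi^{\otimes n}}^2\norm{P_\nu\varphi^{\otimes n}}^2$, the max ranging over pairs with $g_{\lambda^{(n)}\mu\nu}\neq 0$.

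Let $(\mu^{(n)},\nu^{(n)})$ be a pair achieving the maximum. Since only partitions with at most $\dim\mathcal{H}_j$ (resp.\ $\dim\mathcal{K}_j$) parts contribute, compactness of the relevant bounded subsets of $\overline{\partitions}^k$ lets me extract a subsequence $n_l\to\infty$ along which $\frac{1}{n_l}\mu^{(n_l)}\to\overline{\mu}$, $\frac{1}{n_l}\nu^{(n_l)}\to\overline{\nu}$, and, by further extraction, each of $-\frac{1}{n_l}\log\norm{P_{\mu^{(n_l)}}\psi^{\otimes n_l}}^2$ and $-\frac{1}{n_l}\log\norm{P_{\nu^{(n_l)}}\varphi^{\otimes n_l}}^2$ converges separately. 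Taking $-\frac{1}{n_l}\log$ in the bound and passing to the limit, the polynomial prefactor contributes $o(1)$ and Lemma~\ref{lem:subsequenceliminfbound} applied to each factor gives
\begin{equation}
\ratefunction{\overline{\lambda}}{\psi\otimes\varphi}\ge\ratefunction{\overline{\mu}}{\psi}+\ratefunction{\overline{\nu}}{\varphi}.
\end{equation}
Since $g_{\lambda^{(n_l)}\mu^{(n_l)}\nu^{(n_l)}}\neq 0$ for every $l$, the rescaled triples lie in the set whose closure defines $\Kron^k$, so the limit $(\overline{\lambda},\overline{\mu},\overline{\nu})$ belongs to $\Kron^k$. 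Taking the infimum over all such admissible $(\overline{\mu},\overline{\nu})$ then gives the claim.

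The main technical care is needed in the final passage to the limit: because $\liminf$ is not additive, I must extract a further subsequence along which each of the two $-\frac{1}{n_l}\log$ terms converges individually before invoking Lemma~\ref{lem:subsequenceliminfbound}; the degenerate case in which $\norm{P_{\lambda^{(n)}}(\psi\otimes\varphi)^{\otimes n}}^2=0$ for all large $n$ (making $\ratefunction{\overline{\lambda}}{\psi\otimes\varphi}=\infty$) is handled trivially, as the asserted inequality becomes vacuous.
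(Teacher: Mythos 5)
Your proposal is correct and follows essentially the same route as the paper: invoke Lemma~\ref{lem:ratefunctionsinglelambdalimit} for a single-sequence realisation of $\ratefunction{\overline{\lambda}}{\psi\otimes\varphi}$, expand via the Kronecker vanishing condition \eqref{eq:Kronvanishing}, bound the polynomially many terms by a maximum, extract a convergent subsequence of maximisers, and finish with Lemma~\ref{lem:subsequenceliminfbound} and closedness of $\Kron^k$. The only cosmetic difference is that the paper uses superadditivity of $\liminf$ where you pass to a further subsequence, which is equally valid.
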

\begin{proof}
Let $d=\sum_{j=1}^k(\dim\mathcal{H}_j+\dim\mathcal{K}_j)$ and $\lambda\in\partitions[n]^k$. Using that the sum of Schur--Weyl projections is the identity, the estimate on the number of partitions with bounded length and the vanishing condition \eqref{eq:Kronvanishing} we have the inequality
\begin{equation}\label{eq:tensorproductmainestimate}
\begin{split}
\norm{P^{\mathcal{H}\otimes\mathcal{K}}_{\lambda}(\psi\otimes\varphi)^{\otimes n}}^2
 & = \sum_{\substack{\mu,\nu\in\partitions[n]^k  \\  g_{\lambda\mu\nu}\neq 0}}\norm{P^{\mathcal{H}\otimes\mathcal{K}}_{\lambda}(P^{\mathcal{H}}_{\mu}\psi^{\otimes n}\otimes P^{\mathcal{K}}_{\nu}\varphi^{\otimes n})}^2  \\
 & \le \sum_{\substack{\mu,\nu\in\partitions[n]^k  \\  g_{\lambda\mu\nu}\neq 0}}\norm{P^{\mathcal{H}}_{\mu}\psi^{\otimes n}\otimes P^{\mathcal{K}}_{\nu}\varphi^{\otimes n}}^2  \\
 & \le (n+1)^d\max_{\substack{\mu,\nu\in\partitions[n]^k  \\  g_{\lambda\mu\nu}\neq 0}}\norm{P^{\mathcal{H}}_{\mu}\psi^{\otimes n}}^2\norm{P^{\mathcal{K}}_{\nu}\varphi^{\otimes n}}^2
\end{split}
\end{equation}
By Lemma~\ref{lem:ratefunctionsinglelambdalimit} we can choose a sequence $\lambda^{(1)},\lambda^{(2)},\ldots$ such that $\lambda^{(n)}\in\partitions[n]^k$, $\lim_{n\to\infty}\frac{1}{n}\lambda^{(n)}=\overline{\lambda}$ and
\begin{equation}
\lim_{n\to\infty}-\frac{1}{n}\log\norm{P^{\mathcal{H}\otimes\mathcal{K}}_{\lambda}(\psi\otimes\varphi)^{\otimes n}}^2=\ratefunction{\overline{\lambda}}{\psi\otimes\varphi}.
\end{equation}
For every $n$ choose $\mu^{(n)},\nu^{(n)}\in\partitions[n]^k$ such that $g_{\lambda^{(n)}\mu^{(n)}\nu^{(n)}}\neq 0$ and attaining the maximum in \eqref{eq:tensorproductmainestimate}. Choose a subsequence $n_l$ such that both $\frac{1}{n_l}\mu^{(n_l)}$ and $\frac{1}{n_l}\nu^{(n_l)}$ converge (possible since finite dimensional slices of $\overline{\partitions}$ are compact), and let their limits be $\overline{\mu},\overline{\nu}$. $\Kron^k$ is closed, therefore $(\overline{\lambda},\overline{\mu},\overline{\nu})\in\Kron^k$.
\begin{equation}
\begin{split}
\ratefunction{\overline{\lambda}}{\psi\otimes\varphi}
 & = \lim_{n\to\infty}-\frac{1}{n}\log\norm{P^{\mathcal{H}\otimes\mathcal{K}}_{\lambda^{(n)}}(\psi\otimes\varphi)^{\otimes n}}^2  \\
 & = \lim_{l\to\infty}-\frac{1}{n_l}\log\norm{P^{\mathcal{H}\otimes\mathcal{K}}_{\lambda^{(n_l)}}(\psi\otimes\varphi)^{\otimes n_l}}^2  \\
 & \ge \liminf_{l\to\infty}-\frac{1}{n_l}\log\norm{P^{\mathcal{H}}_{\mu^{(n_l)}}\psi^{\otimes n_l}}^2-\frac{1}{n_l}\log\norm{P^{\mathcal{K}}_{\nu^{(n_l)}}\varphi^{\otimes n_l}}^2  \\
 & \ge \ratefunction{\overline{\mu}}{\psi}+\ratefunction{\overline{\nu}}{\varphi}.
\end{split}
\end{equation}
The first inequality follows from \eqref{eq:tensorproductmainestimate} and the second one from Lemma~\ref{lem:subsequenceliminfbound}.
\end{proof}

The following two inequalities should be compared with \eqref{eq:relativeentropysum}. In Section~\ref{sec:spectralpoints} the first one  (Proposition~\ref{prop:ratesum}) will be used in the proof of additivity of our monotones, while the second one (Proposition~\ref{prop:rateprojection}) is needed in the proof of monotonicity.
\begin{proposition}[Rate function and direct sum]\label{prop:ratesum}
Let $\psi\in\mathcal{H}=\mathcal{H}_1\otimes\cdots\otimes\mathcal{H}_k$ and $\varphi\in\mathcal{K}=\mathcal{K}_1\otimes\cdots\otimes\mathcal{K}_k$. For every $\overline{\lambda}\in\overline{\partitions}^k$ the inequality
\begin{equation}
\ratefunction{\overline{\lambda}}{\psi\oplus\varphi}\ge\inf_{q\in[0,1]}\inf_{\substack{\overline{\mu},\overline{\nu}\in\overline{\partitions}^k  \\  (\overline{\lambda},\overline{\mu},\overline{\nu})\in\LR{q}^k}}q\ratefunction{\overline{\mu}}{\psi}+(1-q)\ratefunction{\overline{\nu}}{\varphi}-h(q)
\end{equation}
holds.
\end{proposition}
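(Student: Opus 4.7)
The proof parallels that of Proposition~\ref{prop:rateproduct}, with the Kronecker vanishing condition \eqref{eq:Kronvanishing} replaced by the Littlewood--Richardson condition \eqref{eq:LRvanishingsum}. The new ingredient is the \emph{diagonal structure} of $(\psi\oplus\varphi)^{\otimes n}$ inside $\bigotimes_{j=1}^k(\mathcal{H}_j\oplus\mathcal{K}_j)^{\otimes n}$. Expanding with respect to the per-party decomposition $(\mathcal{H}_j\oplus\mathcal{K}_j)^{\otimes n}=\bigoplus_{m_j=0}^n\complexes^{\binom{n}{m_j}}\otimes\mathcal{H}_j^{\otimes m_j}\otimes\mathcal{K}_j^{\otimes n-m_j}$, each copy of $\psi\oplus\varphi$ selects either $\psi$ or $\varphi$ \emph{uniformly} across all parties. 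Consequently only summands with $m_1=\cdots=m_k=m$ contribute, and in the $m$-summand the state reads $\ket{W_m}\otimes\psi^{\otimes m}\otimes\varphi^{\otimes n-m}$, where $\ket{W_m}=\sum_{|S|=m}\ket{S}^{\otimes k}\in(\complexes^{\binom{n}{m}})^{\otimes k}$ is the diagonal vector of norm squared $\binom{n}{m}$.

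The plan is now to insert the LR resolution of identity. Within each $\lambda_j$-isotypic component, the projections $\id\otimes P^{\mathcal{H}_j}_{\mu_j}\otimes P^{\mathcal{K}_j}_{\nu_j}$ for $|\mu_j|+|\nu_j|=n$ commute with $P^{\mathcal{H}_j\oplus\mathcal{K}_j}_{\lambda_j}$ and form a resolution of identity. Combining across parties and using orthogonality,
\begin{equation*}
\norm{P_\lambda(\psi\oplus\varphi)^{\otimes n}}^2=\sum_{m,\mu,\nu}\norm{P_\lambda(\id\otimes P^{\mathcal{H}}_\mu\otimes P^{\mathcal{K}}_\nu)(\psi\oplus\varphi)^{\otimes n}}^2,
\end{equation*}
where by the diagonal structure only terms with $|\mu_j|=m$, $|\nu_j|=n-m$ for a common $m$ contribute, and by \eqref{eq:LRvanishingsum} only those with $c^{\lambda_j}_{\mu_j\nu_j}\neq 0$ for every $j$. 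Each surviving summand is bounded above by $\binom{n}{m}\norm{P^{\mathcal{H}}_\mu\psi^{\otimes m}}^2\norm{P^{\mathcal{K}}_\nu\varphi^{\otimes n-m}}^2$, and the number of nonzero terms is polynomial in $n$.

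Taking $-\frac{1}{n}\log$ and letting $\lambda^{(n)}$ be as in Lemma~\ref{lem:ratefunctionsinglelambdalimit} so that the left-hand side converges to $\ratefunction{\overline{\lambda}}{\psi\oplus\varphi}$, I pick minimising $(m^{(n)},\mu^{(n)},\nu^{(n)})$ and pass to a subsequence along which $m^{(n_l)}/n_l\to q\in[0,1]$, $\frac{1}{m^{(n_l)}}\mu^{(n_l)}\to\overline{\mu}$ and $\frac{1}{n_l-m^{(n_l)}}\nu^{(n_l)}\to\overline{\nu}$ (possible since finite-dimensional slices of $\overline{\partitions}^k$ are compact). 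Closedness of $\LR{\bullet}$ then gives $(\overline{\lambda},\overline{\mu},\overline{\nu})\in\LR{q}^k$, while $-\frac{1}{n}\log\binom{n}{m}\to -h(q)$ combined with Lemma~\ref{lem:subsequenceliminfbound} applied to the $\psi$- and $\varphi$-factors separately produces
\begin{equation*}
\ratefunction{\overline{\lambda}}{\psi\oplus\varphi}\ge -h(q)+q\ratefunction{\overline{\mu}}{\psi}+(1-q)\ratefunction{\overline{\nu}}{\varphi},
\end{equation*}
which is bounded below by the infimum claimed in the proposition.

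The main obstacle is arranging the decomposition so that the combinatorial factor $\binom{n}{m}$ carried by the diagonal vector $\ket{W_m}$ is correctly allocated and not double-counted against the $\binom{n}{m}$ multiplicities that appear in each party's LR decomposition. The boundary cases $q\in\{0,1\}$ are absorbed by the closure in the definition of $\LR{\bullet}$; in those limits the proposition degenerates to a statement about $\ratefunction{\cdot}{\psi}$ or $\ratefunction{\cdot}{\varphi}$ alone, which follows from the same estimate with one of the factors trivial.
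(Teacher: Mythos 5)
Your proposal is correct and follows essentially the same route as the paper: the per-party decomposition of $(\mathcal{H}_j\oplus\mathcal{K}_j)^{\otimes n}$ with the diagonal (unit-tensor) structure of $(\psi\oplus\varphi)^{\otimes n}$, the vanishing condition \eqref{eq:LRvanishingsum}, the polynomial bound on the number of terms, and then Lemma~\ref{lem:ratefunctionsinglelambdalimit}, a compactness subsequence, closedness of $\LR{\bullet}^k$, and Lemma~\ref{lem:subsequenceliminfbound}. The ``double-counting'' worry you raise does not materialise: the surviving summand equals $\binom{n}{m}\norm{P^{\mathcal{H}}_\mu\psi^{\otimes m}}^2\norm{P^{\mathcal{K}}_\nu\varphi^{\otimes n-m}}^2$ exactly, since the projections act trivially on the diagonal factor, which is precisely the paper's estimate.
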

\begin{proof}
Let $d=1+\sum_{j=1}^k(\dim\mathcal{H}_j+\dim\mathcal{K}_j)$ and $\lambda\in\partitions[n]^k$. Using that the sum of Schur--Weyl projections is the identity, the estimate on the number of partitions with bounded length and the vanishing condition \eqref{eq:LRvanishingsum} we have the inequality
\begin{equation}\label{eq:directsummainestimate}
\begin{split}
\norm{P^{\mathcal{H}\oplus\mathcal{K}}_{\lambda}(\psi\oplus\varphi)^{\otimes n}}^2
 & = \sum_{m=0}^n\sum_{\substack{\mu\in\partitions[m]^k  \\  \nu\in\partitions[n-m]^k  \\  c^{\lambda}_{\mu\nu}\neq 0}}\norm{P^{\mathcal{H}\oplus\mathcal{K}}_\lambda(\id_{\complexes^{\binom{n}{m}}}\otimes P^{\mathcal{H}}_\mu\otimes P^{\mathcal{K}}_\nu)\left(\unittensor{\binom{n}{m}}\otimes\psi^{\otimes m}\otimes\varphi^{\otimes n-m}\right)}^2  \\
 & \le \sum_{m=0}^n\sum_{\substack{\mu\in\partitions[m]^k  \\  \nu\in\partitions[n-m]^k  \\  c^{\lambda}_{\mu\nu}\neq 0}}\norm{(\id_{\complexes^{\binom{n}{m}}}\otimes P^{\mathcal{H}}_\mu\otimes P^{\mathcal{K}}_\nu)\left(\unittensor{\binom{n}{m}}\otimes\psi^{\otimes m}\otimes\varphi^{\otimes n-m}\right)}^2  \\
 & = \sum_{m=0}^n\binom{n}{m}\sum_{\substack{\mu\in\partitions[m]^k  \\  \nu\in\partitions[n-m]^k  \\  c^{\lambda}_{\mu\nu}\neq 0}}\norm{P^{\mathcal{H}}_{\mu}\psi^{\otimes m}}^2\norm{P^{\mathcal{K}}_{\nu}\varphi^{\otimes n-m}}^2  \\
 & \le (n+1)^d\max_{\substack{0\le m\le n  \\  \mu\in\partitions[m]^k  \\  \nu\in\partitions[n-m]^k  \\  c^{\lambda}_{\mu\nu}\neq 0}}\binom{n}{m}\norm{P^{\mathcal{H}}_{\mu}\psi^{\otimes m}}^2\norm{P^{\mathcal{K}}_{\nu}\varphi^{\otimes n-m}}^2
\end{split}
\end{equation}
By Lemma~\ref{lem:ratefunctionsinglelambdalimit} we can choose a sequence $\lambda^{(1)},\lambda^{(2)},\ldots$ such that $\lambda^{(n)}\in\partitions[n]^k$, $\lim_{n\to\infty}\frac{1}{n}\lambda^{(n)}=\overline{\lambda}$ and
\begin{equation}
\lim_{n\to\infty}-\frac{1}{n}\log\norm{P^{\mathcal{H}\oplus\mathcal{K}}_{\lambda}(\psi\oplus\varphi)^{\otimes n}}^2=\ratefunction{\overline{\lambda}}{\psi\oplus\varphi}.
\end{equation}
For every $n$ choose $0\le m^{(n)}\le n$, $\mu^{(n)}\in\partitions[m^{(n)}]^k,\nu^{(n)}\in\partitions[n-m^{(n)}]^k$ such that $c^{\lambda^{(n)}}_{\mu^{(n)}\nu^{(n)}}\neq 0$ and attaining the maximum in \eqref{eq:directsummainestimate}. Choose a subsequence $n_l$ such that $\frac{1}{n_l}m^{(n_l)}$, $\frac{1}{m^{(n_l)}}\mu^{(n_l)}$ and $\frac{1}{n_l-m^{(n_l)}}\nu^{(n_l)}$ converge, and let their limits be $q,\overline{\mu},\overline{\nu}$. $\LR{\bullet}^k$ is closed, therefore $(\overline{\lambda},\overline{\mu},\overline{\nu})\in\LR{q}^k$.
\begin{equation}
\begin{split}
\ratefunction{\overline{\lambda}}{\psi\oplus\varphi}
 & = \lim_{n\to\infty}-\frac{1}{n}\log\norm{P^{\mathcal{H}\oplus\mathcal{K}}_{\lambda^{(n)}}(\psi\oplus\varphi)^{\otimes n}}^2  \\
 & = \lim_{l\to\infty}-\frac{1}{n_l}\log\norm{P^{\mathcal{H}\oplus\mathcal{K}}_{\lambda^{(n_l)}}(\psi\oplus\varphi)^{\otimes n_l}}^2  \\
 & \ge \liminf_{l\to\infty}-\frac{1}{n_l}\log \binom{n_l}{m^{(n_l)}}-\frac{m^{(n_l)}}{n_l}\frac{1}{m^{(n_l)}}\log \norm{P^{\mathcal{H}}_{\mu^{(n_l)}}\psi^{\otimes m^{(n_l)}}}^2  \\  &\qquad-\frac{n_l-m^{(n_l)}}{n_l}\frac{1}{n_l-m^{(n_l)}}\log \norm{P^{\mathcal{K}}_{\nu^{(n_l)}}\varphi^{\otimes n_l-m^{(n_l)}}}^2  \\
 & \ge -h(q)+q\ratefunction{\overline{\mu}}{\psi}+(1-q)\ratefunction{\overline{\nu}}{\varphi}.
\end{split}
\end{equation}
The first inequality follows from \eqref{eq:directsummainestimate} and the second one from Lemma~\ref{lem:subsequenceliminfbound}.
\end{proof}

\begin{proposition}[Rate function and local projections]\label{prop:rateprojection}
Let $\psi\in\mathcal{H}=\mathcal{H}_1\otimes\cdots\otimes\mathcal{H}_k$ and $\Pi=\Pi^2=\Pi^*\in\boundeds(\mathcal{H}_j)$ for some $j\in[k]$. Consider the vectors $\psi_1=(I\otimes\cdots\otimes I\otimes \Pi\otimes I\otimes\cdots I)\psi$ and $\psi_2=(I\otimes\cdots\otimes I\otimes(I-\Pi)\otimes I\otimes\cdots I)\psi$. For every $\overline{\mu},\overline{\nu}\in\overline{\partitions}^k$ and $q\in[0,1]$ the inequality
\begin{equation}
q\ratefunction{\overline{\mu}}{\psi_1}+(1-q)\ratefunction{\overline{\nu}}{\psi_2}-h(q) \ge \inf_{\substack{\overline{\lambda}\in\overline{\partitions}  \\  (\overline{\lambda},\overline{\mu},\overline{\nu})\in\LR{q}^k}}\ratefunction{\overline{\lambda}}{\psi}
\end{equation}
holds.
\end{proposition}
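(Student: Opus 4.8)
The plan is to reverse the logic of Proposition~\ref{prop:ratesum}: there we started from $\psi\oplus\varphi$ and decomposed, here we want to reconstruct a single vector $\psi=\psi_1\oplus\psi_2$ (viewing $\mathcal{H}_j$ as containing the orthogonal subspaces $\Pi\mathcal{H}_j$ and $(I-\Pi)\mathcal{H}_j$, so that $\psi$ lives in $(\Pi\mathcal{H}_j\oplus(I-\Pi)\mathcal{H}_j)\otimes\bigotimes_{i\neq j}\mathcal{H}_i$) and bound its rate function from above by those of $\psi_1$ and $\psi_2$. Concretely, I would fix $\overline{\mu},\overline{\nu}$ and $q$, invoke Lemma~\ref{lem:ratefunctionsinglelambdalimit} at site $j$ to get sequences $\mu^{(m)}\in\partitions[m]^k$ and $\nu^{(n-m)}\in\partitions[n-m]^k$ realising $\ratefunction{\overline{\mu}}{\psi_1}$ and $\ratefunction{\overline{\nu}}{\psi_2}$ along the scales $m=\lfloor qn\rfloor$ and $n-m$, and then look at how $P_{\mu}^{\Pi\mathcal{H}_j}\psi_1^{\otimes m}\otimes P_{\nu}^{(I-\Pi)\mathcal{H}_j}\psi_2^{\otimes(n-m)}$ sits inside $\psi^{\otimes n}$ after symmetrising over the $\binom{n}{m}$ ways of interleaving the two kinds of tensor factors at site $j$.

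The key computation is the analogue of the displayed identity in \eqref{eq:directsummainestimate}, run in the other direction: since $\psi^{\otimes n}$, expanded over which of the $n$ copies of $\mathcal{H}_j$ land in $\Pi\mathcal{H}_j$ versus $(I-\Pi)\mathcal{H}_j$, contains the $S_m\times S_{n-m}$-symmetrisation of $\psi_1^{\otimes m}\otimes\psi_2^{\otimes(n-m)}$ with combinatorial weight $\binom{n}{m}$, applying the local projector $P^{\mathcal{H}_j}_{\lambda_j}$ and using the Littlewood--Richardson vanishing condition \eqref{eq:LRvanishingsum} gives
\begin{equation}
\norm{P^{\mathcal{H}_j}_{\lambda_j}\psi^{\otimes n}}^2 \;\ge\; \binom{n}{m}\norm{(\id\otimes P^{\Pi\mathcal{H}_j}_{\mu_j}\otimes P^{(I-\Pi)\mathcal{H}_j}_{\nu_j})(\psi_1^{\otimes m}\otimes\psi_2^{\otimes(n-m)})}^2
\end{equation}
whenever $c^{\lambda_j}_{\mu_j\nu_j}\neq 0$ (one direction of the Schur--Weyl expansion, keeping only a single summand); taking the tensor product over all $k$ sites (trivially for $i\neq j$) yields a lower bound on $\norm{P_\lambda\psi^{\otimes n}}^2$ in terms of $\binom{n}{m}\norm{P_\mu\psi_1^{\otimes m}}^2\norm{P_\nu\psi_2^{\otimes(n-m)}}^2$ for any $\lambda$ with $c^\lambda_{\mu\nu}\neq 0$. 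Taking $-\frac1n\log$, using $\frac1n\log\binom{n}{\lfloor qn\rfloor}\to h(q)$, and choosing a convergent subsequence of $\frac1n\lambda^{(n)}$ — which lands in $\LR{q}^k$ by closedness, since each $(\lambda_j,\mu_j,\nu_j)$ has nonvanishing LR coefficient at the right scales — turns this into $\ratefunction{\overline{\lambda}}{\psi}\le q\ratefunction{\overline{\mu}}{\psi_1}+(1-q)\ratefunction{\overline{\nu}}{\psi_2}-h(q)$ for some $\overline{\lambda}\in\LR{q}^k$, using Lemma~\ref{lem:subsequenceliminfbound} to pass from the $\liminf$ back up to the rate function. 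Since this particular $\overline{\lambda}$ is admissible in the infimum on the right-hand side of the claimed inequality, we are done.

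The main obstacle I anticipate is bookkeeping the identification $\mathcal{H}_j\simeq\Pi\mathcal{H}_j\oplus(I-\Pi)\mathcal{H}_j$ cleanly — in particular making sure $\psi^{\otimes n}$ really does contain the symmetrised $\psi_1^{\otimes m}\otimes\psi_2^{\otimes(n-m)}$ with exactly the stated weight, and that the projector $P^{\mathcal{H}_j}_{\lambda_j}$ acting on $(\Pi\mathcal{H}_j\oplus(I-\Pi)\mathcal{H}_j)^{\otimes n}$ interacts with $P^{\Pi\mathcal{H}_j}_{\mu_j}\otimes P^{(I-\Pi)\mathcal{H}_j}_{\nu_j}$ exactly as \eqref{eq:LRvanishingsum} asserts (this is the same setup as in the paragraph preceding \eqref{eq:LRvanishingsum}, so the $S_m\times S_{n-m}$-invariance argument there applies verbatim). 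A secondary subtlety is the degenerate cases $q=0$ or $q=1$ (where one of $\psi_1,\psi_2$ may vanish or the relevant scale is eventually constant), which should be handled by the same limiting convention already used for the rate function, or absorbed by noting the inequality is trivial when a term on the left is $+\infty$.
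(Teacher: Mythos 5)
Your overall route is the same as the paper's (realise $\ratefunction{\overline{\mu}}{\psi_1}$ and $\ratefunction{\overline{\nu}}{\psi_2}$ along the scales $m=\lfloor qn\rfloor$ and $n-m$ via Lemma~\ref{lem:ratefunctionsinglelambdalimit}, interleave $\psi_1^{\otimes m}\otimes\psi_2^{\otimes(n-m)}$ inside $\psi^{\otimes n}$, invoke an LR vanishing condition and closedness of $\LR{\bullet}^k$, and finish with Lemma~\ref{lem:subsequenceliminfbound}), but your key displayed inequality is false as stated, and this is a genuine gap. You claim $\norm{P^{\mathcal{H}_j}_{\lambda_j}\psi^{\otimes n}}^2\ge\binom{n}{m}\norm{(\id\otimes P_{\mu_j}\otimes P_{\nu_j})(\psi_1^{\otimes m}\otimes\psi_2^{\otimes(n-m)})}^2$ for \emph{every} $\lambda_j$ with $c^{\lambda_j}_{\mu_j\nu_j}\neq 0$, "keeping only a single summand". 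Nonvanishing of the LR coefficient only says the corresponding isotypic block is not annihilated as an operator identity; it does not guarantee that $P_\lambda$ captures any definite fraction of the norm of the particular symmetrised vector. Concretely, take $\dim\mathcal{H}_i=1$ for $i\neq j$, $\mathcal{H}_j=\complexes^2$, $\Pi=\ketbra{0}{0}$ and $\psi$ with $(\ket{0}+\ket{1})/\sqrt{2}$ at site $j$: then $\psi^{\otimes n}$ is symmetric at site $j$, so $P_{(n-1,1)}\psi^{\otimes n}=0$, although $c^{(n-1,1)}_{(m)(n-m)}\neq 0$ and $\binom{n}{m}\norm{P_{(m)}\psi_1^{\otimes m}}^2\norm{P_{(n-m)}\psi_2^{\otimes(n-m)}}^2=\binom{n}{m}2^{-n}>0$. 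What is true is the aggregate statement used in \eqref{eq:projectionmainestimate}: orthogonality of the interleavings gives $\binom{n}{m}\norm{P_\mu\psi_1^{\otimes m}}^2\norm{P_\nu\psi_2^{\otimes(n-m)}}^2=\sum_{\lambda\,:\,c^\lambda_{\mu\nu}\neq 0}\norm{P_\lambda\sum_\sigma\sigma\cdot(P_\mu\psi_1^{\otimes m}\otimes P_\nu\psi_2^{\otimes(n-m)})}^2\le\sum_{\lambda\,:\,c^\lambda_{\mu\nu}\neq 0}\norm{P_\lambda\psi^{\otimes n}}^2$, and since at most $(n+1)^d$ terms are admissible one selects for each $n$ a \emph{maximising} $\lambda^{(n)}$, losing only a polynomial factor that vanishes at the level of exponents. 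Your later step "choose a convergent subsequence of $\frac1n\lambda^{(n)}$" only makes sense after this selection; with your quantifier the argument breaks exactly here.

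Two smaller points. First, the degenerate cases are not disposed of by "a term on the left is $+\infty$": at $q=1$ the left-hand side is the (generally finite) quantity $\ratefunction{\overline{\mu}}{\psi_1}$, and the case is settled instead by part~\ref{it:ratesimplemonotone} of Proposition~\ref{prop:ratebasic}, which gives $\ratefunction{\overline{\mu}}{\psi}\le\ratefunction{\overline{\mu}}{\psi_1}$; likewise one should assume $\psi_1,\psi_2\neq 0$ when $q\in(0,1)$. Second, the $\binom{n}{m}$ weight does require the orthogonality argument you flag as a bookkeeping worry: distinct coset representatives of $S_n/(S_m\times S_{n-m})$ send $P_\mu\psi_1^{\otimes m}\otimes P_\nu\psi_2^{\otimes(n-m)}$ to mutually orthogonal vectors because $\psi_1,\psi_2$ have orthogonal supports at site $j$ and $P_\mu\otimes P_\nu$ is $S_m\times S_{n-m}$-invariant; the paper carries this out, and your direct-sum bookkeeping via \eqref{eq:LRvanishingsum} is an acceptable substitute for the paper's use of \eqref{eq:LRvanishingproduct}. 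With the aggregation-and-maximiser repair your proof becomes essentially the paper's.
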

\begin{proof}
If $q=1$ (or $q=0$) then the right hand side is $\ratefunction{\overline{\mu}}{\psi}$ (or $\ratefunction{\overline{\nu}}{\psi}$) and the inequality follows from part~\ref{it:ratesimplemonotone} of Proposition~\ref{prop:ratebasic}, therefore we can assume $q\in(0,1)$ and that $\psi_1$ and $\psi_2$ are nonzero. Let $d=\sum_{j=1}^k\dim\mathcal{H}_j$, $0\le m\le n$, $\mu\in\partitions[m]^k$, $\nu\in\partitions[n-m]^k$. We define the disjoint projections $\Pi_0,\Pi_1,\ldots,\Pi_n$ via the generating function
\begin{equation}
\sum_{i=0}^n\Pi_it^i=I\otimes\cdots\otimes I\otimes ((I-\Pi)+t\Pi)^{\otimes n}\otimes I\otimes\cdots\otimes I.
\end{equation}

Consider the $S_n$-action on $\mathcal{H}^{\otimes n}$ that permutes the factors and permutations of the vector $(P_\mu\psi_1^{\otimes m})\otimes(P_\nu\psi_2^{\otimes n-m})$. $\Pi_m$ is $S_n$-invariant, therefore it commutes with $P_\lambda$. $\psi_1$ and $\psi_2$ are supported in the orthogonal subspaces $\Pi\mathcal{H}_j$ and $(I-\Pi)\mathcal{H}_j$ at site $j$ and $P_\mu\otimes P_\nu$ commutes both with $S_m\times S_{n-m}$ and with $\Pi^{\otimes m}\otimes(I-\Pi)^{n-m}$. Therefore every element $\sigma\in S_n$ either fixes this product (this happens iff $\sigma\in S_m\times S_{n-m}$) or sends it to an orthogonal one. The same is true for $\psi_1^{\otimes m}\otimes \psi_2^{\otimes n-m}$, and the sum of its distinct permutations is $\Pi_m\psi^{\otimes n}$. Using this and the vanishing condition \eqref{eq:LRvanishingproduct} we get (summing over distinct permutations $\sigma$, i.e. over a set of representatives of the cosets $S_n/S_m\times S_{n-m}$)
\begin{equation}\label{eq:projectionmainestimate}
\begin{split}
\binom{n}{m}\norm{P_{\mu}\psi_1^{\otimes m}}^2\norm{P_{\nu}\psi_2^{\otimes n-m}}^2
 & = \norm{\sum_{\sigma}\sigma\cdot(P_{\mu}\psi_1^{\otimes m}\otimes P_{\nu}\psi_2^{\otimes n-m})}^2  \\
 & = \sum_{\substack{\lambda\in\partitions[n]^k  \\  c^{\lambda}_{\mu\nu}\neq 0}}\norm{P_{\lambda}\sum_{\sigma}\sigma\cdot(P_{\mu}\psi_1^{\otimes m}\otimes P_{\nu}\psi_2^{\otimes n-m})}^2  \\
 & \le \sum_{\substack{\lambda\in\partitions[n]^k  \\  c^{\lambda}_{\mu\nu}\neq 0}}\norm{P_{\lambda}\sum_{\sigma}\sigma\cdot(\psi_1^{\otimes m}\otimes \psi_2^{\otimes n-m})}^2  \\
 & = \sum_{\substack{\lambda\in\partitions[n]^k  \\  c^{\lambda}_{\mu\nu}\neq 0}}\norm{P_{\lambda}\Pi_m\psi^{\otimes n}}^2  \\
 & \le \sum_{\substack{\lambda\in\partitions[n]^k  \\  c^{\lambda}_{\mu\nu}\neq 0}}\norm{P_{\lambda}\psi^{\otimes n}}^2  \\
 & \le (n+1)^d\max_{\substack{\lambda\in\partitions[n]^k  \\  c^{\lambda}_{\mu\nu}\neq 0}}\norm{P_{\lambda}\psi^{\otimes n}}^2
\end{split}
\end{equation}

Let $m^{(1)},m^{(2)},\ldots$, $\mu^{(1)},\mu^{(2)},\ldots$, $\nu^{(1)},\nu^{(2)},\ldots$ be sequences such that $1\le m^{(n)}\le n-1$, $\mu^{(n)}\in\partitions[m^{(n)}]^k$, $\nu^{(n)}\in\partitions[n-m^{(n)}]^k$ and
\begin{gather}
\lim_{n\to\infty}\frac{m^{(n)}}{n} = q  \\
\lim_{n\to\infty}\frac{1}{m^{(n)}}\mu^{(n)} = \overline{\mu}  \\
\lim_{n\to\infty}-\frac{1}{m^{(n)}}\log\norm{P_{\mu^{(n)}}\psi_1^{\otimes m^{(n)}}}^2 = \ratefunction{\overline{\mu}}{\psi_1}  \\
\lim_{n\to\infty}\frac{1}{n-m^{(n)}}\nu^{(n)} = \overline{\nu}  \\
\lim_{n\to\infty}-\frac{1}{n-m^{(n)}}\log\norm{P_{\nu^{(n)}}\psi_2^{\otimes n-m^{(n)}}}^2 = \ratefunction{\overline{\mu}}{\psi_2}.
\end{gather}
For every $n$ choose $\lambda^{(n)}\in\partitions[n]^k$ such that $c^{\lambda^{(n)}}_{\mu^{(n)}\nu^{(n)}}\neq 0$ and attaining the maximum in \eqref{eq:projectionmainestimate}. Choose a subsequence $n_l$ such that $\frac{1}{n_l}\lambda^{(n_l)}$ converges and let its limit be $\overline{\lambda}$. $\LR{\bullet}^k$ is closed, therefore $(\overline{\lambda},\overline{\mu},\overline{\nu})\in\LR{q}^k$.
\begin{equation}
\begin{split}
q\ratefunction{\overline{\mu}}{\psi_1}+(1-q)\ratefunction{\overline{\nu}}{\psi_2}-h(q)
 & = \lim_{n\to\infty}-\frac{1}{n}\log\norm{P_{\mu^{(n)}}\psi_1^{\otimes m^{(n)}}}^2  \\  & \qquad-\frac{1}{n}\log\norm{P_{\nu^{(n)}}\psi_2^{\otimes n-m^{(n)}}}^2-\frac{1}{n}\log\binom{n}{m^{(n)}}  \\
 & = \lim_{l\to\infty}-\frac{1}{n_l}\log\norm{P_{\mu^{(n_l)}}\psi_1^{\otimes m^{(n_l)}}}^2  \\  & \qquad-\frac{1}{n_l}\log\norm{P_{\nu^{(n_l)}}\psi_2^{\otimes n_l-m^{(n_l)}}}^2-\frac{1}{n_l}\log\binom{n_l}{m^{(n_l)}}  \\
 & \ge \lim_{l\to\infty}-\frac{1}{n_l}\log\norm{P_{\lambda^{(n_l)}}\psi^{\otimes n_l}}^2  \\
 & \ge \ratefunction{\overline{\lambda}}{\psi}.
\end{split}
\end{equation}
The first inequality follows from \eqref{eq:projectionmainestimate} and the second one from Lemma~\ref{lem:subsequenceliminfbound}.
\end{proof}

\section{LOCC spectral points}\label{sec:spectralpoints}

In this section we construct our family of functionals on pure unnormalised states which are monotone under trace-nonincreasing local operations and classical communication, additive under the direct sum and multiplicative under the tensor product, i.e. are LOCC spectral points in the terminology of \cite{jensen2019asymptotic}. This famlily is parametrised by a number $\alpha\in[0,1]$ and a point $\theta$ in the simplex $\distributions([k])$. For $\alpha=0$ they reduce to the quantum functionals introduced in \cite{christandl2018universal}, while for $\alpha=1$ they collapse to a single function, the norm squared.

Like the quantum functionals of \cite{christandl2018universal} (and also the support functionals of Strassen \cite{strassen1991degeneration}), our functionals come in two flavours, an ``upper'' and a ``lower'' family. In Section~\ref{sec:upperfunctional} we define the upper version in terms of asymptotic representation theoretical data: the Shannon entropy, which measures the growth rate of the dimensions of the representations of the symmetric group and the rate function studied in Section~\ref{sec:estimation}, and prove that it is submultiplicative, subadditive and satisfies \eqref{eq:monotonicitycondition}. In Section~\ref{sec:lowerfunctional} we define the lower counterpart in terms of the SLOCC orbit and prove that it is supermultiplicative. When $\alpha=0$ both functionals reduce to the ones defined in \cite{christandl2018universal} and are equal to each other. Below we will assume that $\alpha>0$. In this case we do not know if the lower and the upper functionals coincide. Nevertheless, in Section~\ref{sec:comparing} we show that the regularisation of the lower one equals the upper one and is superadditive, which implies that the upper functionals are LOCC spectral points.

\subsection{Upper functionals}\label{sec:upperfunctional}

We start by defining the ``upper'' family of the functionals. After the definition we prove its algebraic (Proposition~\ref{prop:uppersub}) and monotonicity (Proposition~\ref{prop:uppermonotone}) properties. The main tools in this section are the inequalities satisfied by the rate function, proved in Section~\ref{sec:estimation}, and the entropy inequalities related to the Kronecker and Littlewood--Richardson coefficients, as explained in Section~\ref{sec:notations}.
\begin{definition}[Upper functionals]\label{def:upper}
Let $\alpha\in(0,1]$ and $\theta\in\distributions([k])$. The logarithmic upper functional is
\begin{equation}\label{eq:upperdef}
\logupperLOCC{\alpha,\theta}(\ket{\psi}) = \sup_{\overline{\lambda}\in\overline{\partitions}^k}\left[(1-\alpha)\entropy_{\theta}(\overline{\lambda})-\alpha \ratefunction{\overline{\lambda}}{\psi}\right]
\end{equation}
and the upper functional is $\upperLOCC{\alpha,\theta}(\ket{\psi})=2^{\logupperLOCC{\alpha,\theta}(\ket{\psi})}$.
\end{definition}
We remark that $\ratefunction{\overline{\lambda}}{\psi}$ is lower semicontinuous and infinite outside a compact set, therefore for every $\alpha\in(0,1]$ the supremum in \eqref{eq:upperdef} is attained.

\begin{proposition}[Submultiplicativity and subadditivity of the upper functional]\label{prop:uppersub}
For any $\alpha\in(0,1]$, $p\in[0,\infty)$, $\theta\in\distributions([k])$ and vectors $\ket{\psi}\in\mathcal{H}_1\otimes\cdots\otimes\mathcal{H}_k$ and $\ket{\varphi}\in\mathcal{K}_1\otimes\cdots\otimes\mathcal{K}_k$ the followings hold:
\begin{enumerate}[(i)]
\item\label{it:upperscaling} $\upperLOCC{\alpha,\theta}(\sqrt{p}\ket{\psi})=p^\alpha \upperLOCC{\alpha,\theta}(\ket{\psi})$.
\item\label{it:uppersubmultiplicative} $\upperLOCC{\alpha,\theta}(\ket{\psi}\otimes\ket{\varphi})\le \upperLOCC{\alpha,\theta}(\ket{\psi})\upperLOCC{\alpha,\theta}(\ket{\varphi})$
\item\label{it:uppersubadditive} $\upperLOCC{\alpha,\theta}(\ket{\psi}\oplus\ket{\varphi})\le \upperLOCC{\alpha,\theta}(\ket{\psi})+\upperLOCC{\alpha,\theta}(\ket{\varphi})$
\end{enumerate}
\end{proposition}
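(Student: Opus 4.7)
The plan is to derive each of the three parts by pairing the corresponding inequality for the rate function from Section~\ref{sec:estimation} with the matching entropy inequality recorded in Section~\ref{sec:notations}.

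Part~\ref{it:upperscaling} follows immediately from part~\ref{it:ratescaling} of Proposition~\ref{prop:ratebasic}: since $\ratefunction{\overline{\lambda}}{\sqrt{p}\psi}=\ratefunction{\overline{\lambda}}{\psi}-\log p$, the objective in Definition~\ref{def:upper} shifts by the $\overline{\lambda}$-independent constant $\alpha\log p$, hence $\logupperLOCC{\alpha,\theta}(\sqrt{p}\ket{\psi})=\logupperLOCC{\alpha,\theta}(\ket{\psi})+\alpha\log p$ and exponentiation finishes the job. The edge case $p=0$ is handled separately: $\ratefunction{\overline{\lambda}}{0}=+\infty$ for every $\overline{\lambda}$, so $\upperLOCC{\alpha,\theta}(0)=0=0^\alpha\upperLOCC{\alpha,\theta}(\ket\psi)$ for $\alpha>0$.

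For part~\ref{it:uppersubmultiplicative}, for each $\overline{\lambda}\in\overline{\partitions}^k$ I would use Proposition~\ref{prop:rateproduct} to produce $(\overline{\lambda},\overline{\mu},\overline{\nu})\in\Kron^k$ attaining the infimum (possible by lower semicontinuity of the rate function and compactness of the relevant sublevel set), then apply the entropy inequality $\entropy_\theta(\overline{\lambda})\le\entropy_\theta(\overline{\mu})+\entropy_\theta(\overline{\nu})$ valid on $\Kron^k$. This bounds $(1-\alpha)\entropy_\theta(\overline{\lambda})-\alpha\ratefunction{\overline{\lambda}}{\psi\otimes\varphi}$ by a sum of a $\overline{\mu}$-dependent piece and a $\overline{\nu}$-dependent piece, each at most the corresponding $\logupperLOCC{\alpha,\theta}$. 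Taking the supremum over $\overline{\lambda}$ and exponentiating yields the claim.

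Part~\ref{it:uppersubadditive} follows the same template but with an extra twist. Combining Proposition~\ref{prop:ratesum} with the upper $\LR{q}^k$ entropy inequality $\entropy_\theta(\overline{\lambda})\le q\entropy_\theta(\overline{\mu})+(1-q)\entropy_\theta(\overline{\nu})+h(q)$, one finds, for each $\overline{\lambda}$, some $q\in[0,1]$ and $(\overline{\lambda},\overline{\mu},\overline{\nu})\in\LR{q}^k$ with
\begin{align*}
(1-\alpha)\entropy_\theta(\overline{\lambda})-\alpha\ratefunction{\overline{\lambda}}{\psi\oplus\varphi}
&\le q\bigl[(1-\alpha)\entropy_\theta(\overline{\mu})-\alpha\ratefunction{\overline{\mu}}{\psi}\bigr]\\
&\quad+(1-q)\bigl[(1-\alpha)\entropy_\theta(\overline{\nu})-\alpha\ratefunction{\overline{\nu}}{\varphi}\bigr]+h(q),
\end{align*}
where the $h(q)$ term appears with coefficient $1-\alpha$ from the entropy inequality and $\alpha$ from Proposition~\ref{prop:ratesum}, summing to $1$. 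Taking the supremum over $\overline{\lambda}$, bounding the first two bracketed terms by the respective $\logupperLOCC{\alpha,\theta}$'s, exponentiating, and then applying the weighted AM--GM inequality
\[
a^q b^{1-q}\,2^{h(q)}=\left(\frac{a}{q}\right)^{\!q}\!\left(\frac{b}{1-q}\right)^{\!1-q}\le a+b
\]
with $a=\upperLOCC{\alpha,\theta}(\ket{\psi})$, $b=\upperLOCC{\alpha,\theta}(\ket{\varphi})$, yields the desired inequality. The only step that genuinely requires attention is this last one: the binary entropies produced by the two independent inequalities add up to \emph{exactly} the $h(q)$ needed for weighted AM--GM to collapse $a^q b^{1-q}$ into $a+b$, so the matching of coefficients is the crux of the argument, while everything else is essentially bookkeeping.
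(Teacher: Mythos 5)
Your proposal is correct and follows essentially the same route as the paper: the scaling property of the rate function for \ref{it:upperscaling}, Proposition~\ref{prop:rateproduct} combined with the $\Kron^k$ entropy inequality for \ref{it:uppersubmultiplicative}, and Proposition~\ref{prop:ratesum} combined with the $\LR{q}^k$ entropy inequality for \ref{it:uppersubadditive}, with the $(1-\alpha)h(q)+\alpha h(q)=h(q)$ bookkeeping identified exactly as in the paper. The only cosmetic differences are your explicit treatment of $p=0$ and your weighted AM--GM justification of $\max_q\bigl[q\log a+(1-q)\log b+h(q)\bigr]\le\log(a+b)$, which the paper simply invokes as an identity.
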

\begin{proof}
\ref{it:upperscaling}:
The first term inside the supremum in \eqref{eq:upperdef} does not change if we replace $\ket{\psi}$ with $\sqrt{p}\ket{\psi}$, while the second term acquires an $\alpha\log p$ term by the scaling property of the rate function (Proposition~\ref{prop:ratebasic},~\ref{it:ratescaling}). Therefore $\logupperLOCC{\alpha,\theta}(\sqrt{p}\ket{\psi})=\logupperLOCC{\alpha,\theta}(\ket{\psi})+\alpha\log p$, which is equivalent to the statement.

\ref{it:uppersubmultiplicative}:
We use Proposition~\ref{prop:rateproduct} to bound $\logupperLOCC{\alpha,\theta}(\ket{\psi}\otimes\ket{\varphi})$ as
\begin{equation}
\begin{split}
\logupperLOCC{\alpha,\theta}(\ket{\psi}\otimes\ket{\varphi})
 & = \sup_{\overline{\lambda}\in\overline{\partitions}^k}\left[(1-\alpha)\entropy_{\theta}(\overline{\lambda})-\alpha \ratefunction{\overline{\lambda}}{\psi\otimes\varphi}\right]  \\
 & \le \sup_{\overline{\lambda}\in\overline{\partitions}^k}\sup_{\substack{\overline{\mu},\overline{\nu}\in\overline{\partitions}^k  \\  (\overline{\lambda},\overline{\mu},\overline{\nu})\in\Kron^k}}\left[(1-\alpha)\entropy_{\theta}(\overline{\lambda})-\alpha \ratefunction{\overline{\mu}}{\psi}-\alpha \ratefunction{\overline{\nu}}{\varphi}\right]  \\
 & \le \sup_{\overline{\lambda}\in\overline{\partitions}^k}\sup_{\substack{\overline{\mu},\overline{\nu}\in\overline{\partitions}^k  \\  (\overline{\lambda},\overline{\mu},\overline{\nu})\in\Kron^k}}\left[(1-\alpha)\left(\entropy_{\theta}(\overline{\mu})+\entropy_{\theta}(\overline{\nu})\right)-\alpha \ratefunction{\overline{\mu}}{\psi}-\alpha \ratefunction{\overline{\nu}}{\varphi}\right]  \\
 & = \sup_{\overline{\mu},\overline{\nu}\in\overline{\partitions}^k}\left[(1-\alpha)\left(\entropy_{\theta}(\overline{\mu})+\entropy_{\theta}(\overline{\nu})\right)-\alpha \ratefunction{\overline{\mu}}{\psi}-\alpha \ratefunction{\overline{\nu}}{\varphi}\right]  \\
 & = \logupperLOCC{\alpha,\theta}(\ket{\psi})+\logupperLOCC{\alpha,\theta}(\ket{\varphi}),
\end{split}
\end{equation}
where the second inequality uses that $(\overline{\lambda},\overline{\mu},\overline{\nu})\in\Kron^k$ implies $\entropy_{\theta}(\overline{\lambda})\le\entropy_{\theta}(\overline{\mu})+\entropy_{\theta}(\overline{\nu})$.

\ref{it:uppersubadditive}:
We use Proposition~\ref{prop:ratesum} to bound $\logupperLOCC{\alpha,\theta}(\ket{\psi}\oplus\ket{\varphi})$ as
\begin{equation}
\begin{split}
\logupperLOCC{\alpha,\theta}(\ket{\psi}\oplus\ket{\varphi})
 & = \sup_{\overline{\lambda}\in\overline{\partitions}^k}\left[(1-\alpha)\entropy_{\theta}(\overline{\lambda})-\alpha \ratefunction{\overline{\lambda}}{\psi\oplus\varphi}\right]  \\
 & \le \sup_{\overline{\lambda}\in\overline{\partitions}^k}\max_{q\in[0,1]}\sup_{\substack{\overline{\mu},\overline{\nu}\in\overline{\partitions}^k  \\  (\overline{\lambda},\overline{\mu},\overline{\nu})\in\LR{q}^k}}\big[(1-\alpha)\entropy_{\theta}(\overline{\lambda})  \\ &\qquad-\alpha(q\ratefunction{\overline{\mu}}{\psi}+(1-q)\ratefunction{\overline{\nu}}{\varphi}-h(q))\big]  \\
 & \le \sup_{\overline{\lambda}\in\overline{\partitions}^k}\max_{q\in[0,1]}\sup_{\substack{\overline{\mu},\overline{\nu}\in\overline{\partitions}^k  \\  (\overline{\lambda},\overline{\mu},\overline{\nu})\in\LR{q}^k}}\big[(1-\alpha)\left(q\entropy_{\theta}(\overline{\mu})+(1-q)\entropy_{\theta}(\overline{\nu})+h(q)\right)  \\ &\qquad-\alpha(q\ratefunction{\overline{\mu}}{\psi}+(1-q)\ratefunction{\overline{\nu}}{\varphi}-h(q))\big]  \\
 & = \max_{q\in[0,1]}\sup_{\overline{\mu},\overline{\nu}\in\overline{\partitions}^k}\big[(1-\alpha)\left(q\entropy_{\theta}(\overline{\mu})+(1-q)\entropy_{\theta}(\overline{\nu})+h(q)\right)  \\ &\qquad-\alpha(q\ratefunction{\overline{\mu}}{\psi}+(1-q)\ratefunction{\overline{\nu}}{\varphi}-h(q))\big]  \\
 & = \max_{q\in[0,1]}q\logupperLOCC{\alpha,\theta}(\ket{\psi})+(1-q)\logupperLOCC{\alpha,\theta}(\ket{\varphi})+h(q)  \\
 & = \log\left(\upperLOCC{\alpha,\theta}(\ket{\psi})+\upperLOCC{\alpha,\theta}(\ket{\varphi})\right).
\end{split}
\end{equation}
The second inequality uses that $(\overline{\lambda},\overline{\mu},\overline{\nu})\in\LR{q}^k$ implies $\entropy_{\theta}(\overline{\lambda})\le q\entropy_{\theta}(\overline{\mu})+(1-q)\entropy_{\theta}(\overline{\nu})+h(q)$.
\end{proof}

\begin{proposition}\label{prop:uppermonotone}
If $\Pi$ is a projection on $\mathcal{H}_j$ and $\ket{\psi}\in\mathcal{H}_1,\otimes\cdots\otimes\mathcal{H}_k$, then
\begin{equation}
\upperLOCC{\alpha,\theta}(\ket{\psi})^{1/\alpha}\ge \upperLOCC{\alpha,\theta}(\Pi_j\ket{\psi})^{1/\alpha}+\upperLOCC{\alpha,\theta}((I-\Pi)_j\ket{\psi})^{1/\alpha}.
\end{equation}
\end{proposition}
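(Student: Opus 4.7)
Using the elementary identity $\log(a+b)=\max_{q\in[0,1]}\bigl[q\log a+(1-q)\log b+h(q)\bigr]$ applied to $a=\upperLOCC{\alpha,\theta}(\Pi_j\ket{\psi})^{1/\alpha}$ and $b=\upperLOCC{\alpha,\theta}((I-\Pi)_j\ket{\psi})^{1/\alpha}$, the target inequality is equivalent (after multiplying through by $\alpha$) to the one-parameter family: for every $q\in[0,1]$,
\begin{equation}\label{eq:plan_reduction}
q\logupperLOCC{\alpha,\theta}(\Pi_j\ket{\psi})+(1-q)\logupperLOCC{\alpha,\theta}((I-\Pi)_j\ket{\psi})+\alpha h(q)\le\logupperLOCC{\alpha,\theta}(\ket{\psi}).
\end{equation}
So my first step is to reduce the proposition to establishing \eqref{eq:plan_reduction} for each $q$.

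Fix $q\in[0,1]$ and write $\psi_1=\Pi_j\psi$, $\psi_2=(I-\Pi)_j\psi$. If either vector vanishes, the left-hand side of \eqref{eq:plan_reduction} reduces (using Proposition~\ref{prop:uppersub}\ref{it:upperscaling} to interpret $\upperLOCC{\alpha,\theta}(0)=0$) to the upper functional of the nonzero one, and monotonicity under $A_j=\Pi$ follows directly from part~\ref{it:ratesimplemonotone} of Proposition~\ref{prop:ratebasic}. In the main case when both $\psi_1,\psi_2\neq 0$, use the remark after Definition~\ref{def:upper} to select optimisers $\overline{\mu}^\ast,\overline{\nu}^\ast$ attaining the suprema in $\logupperLOCC{\alpha,\theta}(\psi_1)$ and $\logupperLOCC{\alpha,\theta}(\psi_2)$. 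Substituting these, the left-hand side of \eqref{eq:plan_reduction} rewrites as
\begin{equation}
(1-\alpha)\bigl[q\entropy_\theta(\overline{\mu}^\ast)+(1-q)\entropy_\theta(\overline{\nu}^\ast)\bigr]-\alpha\bigl[q\ratefunction{\overline{\mu}^\ast}{\psi_1}+(1-q)\ratefunction{\overline{\nu}^\ast}{\psi_2}\bigr]+\alpha h(q).
\end{equation}

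Next I apply Proposition~\ref{prop:rateprojection} to the rate-function term, which yields, for every $\epsilon>0$, some $\overline{\lambda}$ with $(\overline{\lambda},\overline{\mu}^\ast,\overline{\nu}^\ast)\in\LR{q}^k$ such that $q\ratefunction{\overline{\mu}^\ast}{\psi_1}+(1-q)\ratefunction{\overline{\nu}^\ast}{\psi_2}-h(q)\ge \ratefunction{\overline{\lambda}}{\psi}-\epsilon$. Crucially, the $\alpha h(q)$ term in \eqref{eq:plan_reduction} is exactly what cancels the $-\alpha h(q)$ picked up here. The expression is then bounded by $(1-\alpha)\bigl[q\entropy_\theta(\overline{\mu}^\ast)+(1-q)\entropy_\theta(\overline{\nu}^\ast)\bigr]-\alpha\ratefunction{\overline{\lambda}}{\psi}+\alpha\epsilon$. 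Finally the Littlewood--Richardson entropy inequality from Section~\ref{sec:notations}, namely $q\entropy_\theta(\overline{\mu}^\ast)+(1-q)\entropy_\theta(\overline{\nu}^\ast)\le\entropy_\theta(\overline{\lambda})$ on $\LR{q}^k$, gives the further bound $(1-\alpha)\entropy_\theta(\overline{\lambda})-\alpha\ratefunction{\overline{\lambda}}{\psi}+\alpha\epsilon\le\logupperLOCC{\alpha,\theta}(\ket{\psi})+\alpha\epsilon$. Letting $\epsilon\to 0$ closes \eqref{eq:plan_reduction}.

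The main bookkeeping obstacle is the precise matching of $h(q)$ factors: Proposition~\ref{prop:rateprojection} contributes $-h(q)$ to the rate-function side, the LR entropy inequality allows $+h(q)$ slack on the entropy side, and the $\alpha h(q)$ on the left of \eqref{eq:plan_reduction} is exactly the weighted difference $(1-\alpha)\cdot 0-\alpha\cdot(-h(q))$ needed to make the two steps compatible. Beyond this, the argument is a clean chaining of the lemmas developed for the rate function with the convex-analytic rewriting of $\log(a+b)$, and no further technicalities are anticipated.
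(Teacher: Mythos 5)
Your proposal is correct and follows essentially the same route as the paper: the paper likewise fixes $q,\overline{\mu},\overline{\nu}$, invokes Proposition~\ref{prop:rateprojection} with an $\epsilon$-near-optimal $\overline{\lambda}$, applies the lower Littlewood--Richardson entropy bound $q\entropy_\theta(\overline{\mu})+(1-q)\entropy_\theta(\overline{\nu})\le\entropy_\theta(\overline{\lambda})$, and finishes with the identity $\log(a+b)=\max_q[q\log a+(1-q)\log b+h(q)]$, only presented with the supremum over $\overline{\mu},\overline{\nu}$ and the maximisation over $q$ at the end rather than up front. The only stylistic difference is your explicit use of attained optimisers and the separate treatment of the degenerate case $\Pi_j\ket{\psi}=0$ or $(I-\Pi)_j\ket{\psi}=0$, both of which are fine.
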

\begin{proof}
Let $\ket{\psi_1}=\Pi_j\ket{\psi}$ and $\ket{\psi_2}=(I-\Pi)_j\ket{\psi}$. Choose $\overline{\mu},\overline{\nu}\in\overline{\partitions}^k$ and $q\in[0,1]$. By Proposition~\ref{prop:rateprojection}, for every $\epsilon>0$, there is a $k$-tuple $\overline{\lambda}\in\overline{\partitions}^k$ such that
\begin{equation}
\ratefunction{\overline{\lambda}}{\psi}-\epsilon\le q\ratefunction{\overline{\mu}}{\psi_1}+(1-q)\ratefunction{\overline{\nu}}{\psi_2}-h(q)
\end{equation}
and $(\overline{\lambda},\overline{\mu},\overline{\nu})\in\LR{q}^k$.

By definition,
\begin{equation}
\begin{split}
\log \upperLOCC{\alpha,\theta}(\ket{\psi})^{1/\alpha}
 & = \frac{1}{\alpha}\logupperLOCC{\alpha,\theta}(\ket{\psi})  \\
 & \ge \frac{1-\alpha}{\alpha}\entropy_{\theta}(\overline{\lambda})-\ratefunction{\overline{\lambda}}{\psi}  \\
 & \ge \frac{1-\alpha}{\alpha}\left(q\entropy_{\theta}(\overline{\mu})+(1-q)\entropy_{\theta}(\overline{\nu})\right)-q\ratefunction{\overline{\mu}}{\psi_1}-(1-q)\ratefunction{\overline{\nu}}{\psi_2}+h(q)-\epsilon.
\end{split}
\end{equation}
The inequality also holds if we take $\epsilon\to 0$ and the supremum over $\overline{\mu},\overline{\nu}$, therefore
\begin{equation}
\log \upperLOCC{\alpha,\theta}(\ket{\psi})^{1/\alpha}\ge q\log \upperLOCC{\alpha,\theta}(\ket{\psi_1})^{1/\alpha}+(1-q)\log \upperLOCC{\alpha,\theta}(\ket{\psi_2})^{1/\alpha}+h(q).
\end{equation}
Finally, the maximum of the right hand side over $q$ is
\begin{equation}
\log\left(\upperLOCC{\alpha,\theta}(\Pi_j\ket{\psi})^{1/\alpha}+\upperLOCC{\alpha,\theta}((I-\Pi)_j\ket{\psi})^{1/\alpha}\right).
\end{equation}
\end{proof}

\subsection{Lower functionals}\label{sec:lowerfunctional}

Next we define the lower versions of our functionals and show basic properties such as monotonicity and supermultiplicativity (Proposition~\ref{prop:lowerbasic}). The latter property implies that the regularisation of the lower functional exists. The regularised functional inherits the algebraic properties and in addition satisfies superadditivity under the direct sum (Proposition~\ref{prop:asymptoticlowerbasic}).

In the following we will use the notation $\psi\succ\varphi$ to mean that there exist linear operators $A_1,\ldots,A_k$ ($A_j\in\boundeds(\mathcal{H}_j)$) satisfying $A_j^*A_j\le I$ for all $j\in[k]$ such that $\varphi=(A_1\otimes\cdots\otimes A_k)\psi$. In addition, when $\varphi\in\mathcal{H}\setminus\{0\}$ we set
\begin{equation}
\entropy_{\theta}(\varphi)=\sum_{j=1}^k\theta_j
\entropy\left(\frac{\ketbra{\varphi}{\varphi}_j}{\norm{\varphi}^2}\right),
\end{equation}
where the subscript $j$ refers to the $j$th marginal.
\begin{definition}[Lower functionals]\label{def:lower}
The logarithmic lower functional is 
\begin{equation}
\loglowerLOCC{\alpha,\theta}(\ket{\psi}) = \sup_{\substack{\ket{\varphi}\in\mathcal{H}  \\  \psi\succ\varphi}}\left[(1-\alpha)\entropy_{\theta}(\varphi)+\alpha\log\norm{\varphi}^2\right].
\end{equation}
The lower functional is $\lowerLOCC{\alpha,\theta}(\ket{\psi}):=2^{\loglowerLOCC{\alpha,\theta}(\ket{\psi})}$.
\end{definition}

\begin{proposition}[Basic properties of the lower functional]\label{prop:lowerbasic}
For any $\alpha\in[0,\infty)$, $p\in[0,\infty)$, $\theta\in\distributions([k])$ and vectors $\ket{\psi}\in\mathcal{H}_1\otimes\cdots\otimes\mathcal{H}_k$ and $\ket{\varphi}\in\mathcal{K}_1\otimes\cdots\otimes\mathcal{K}_k$ the followings hold:
\begin{enumerate}[(i)]
\item\label{it:lowerscaling} $\lowerLOCC{\alpha,\theta}(\sqrt{p}\ket{\psi})=p^\alpha \lowerLOCC{\alpha,\theta}(\ket{\psi})$.
\item\label{it:lowernormalised} $\lowerLOCC{\alpha,\theta}(\unittensor{r})=r$.
\item\label{it:lowersimplemonotone} If $\psi_1\succ\psi_2$ then $\lowerLOCC{\alpha,\theta}(\ket{\psi_1})\ge\lowerLOCC{\alpha,\theta}(\ket{\psi_2})$.
\item\label{it:lowersupermultiplicative} $\lowerLOCC{\alpha,\theta}(\ket{\psi_1}\otimes\ket{\psi_2})\ge \lowerLOCC{\alpha,\theta}(\ket{\psi_1})\lowerLOCC{\alpha,\theta}(\ket{\psi_2})$
\end{enumerate}
\end{proposition}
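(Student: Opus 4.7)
My plan is to prove the four properties more or less directly from Definition~\ref{def:lower}, exploiting three elementary observations: (a) $\entropy_{\theta}$ depends only on the one-dimensional subspace spanned by $\varphi$, so it is scale-invariant; (b) the relation $\succ$ is preserved under composition and tensor product of the local operators; (c) for a tensor product state the single-party marginals factorise.

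For part~\ref{it:lowerscaling}, I would reparametrise the supremum in $\loglowerLOCC{\alpha,\theta}(\sqrt{p}\ket{\psi})$: $\sqrt{p}\psi\succ\varphi$ iff $\psi\succ\varphi/\sqrt{p}$, so writing $\varphi=\sqrt{p}\tilde\varphi$ with $\psi\succ\tilde\varphi$ gives $\entropy_\theta(\sqrt{p}\tilde\varphi)=\entropy_\theta(\tilde\varphi)$ and $\log\norm{\sqrt{p}\tilde\varphi}^2=\log p+\log\norm{\tilde\varphi}^2$, from which $\loglowerLOCC{\alpha,\theta}(\sqrt{p}\ket{\psi})=\alpha\log p+\loglowerLOCC{\alpha,\theta}(\ket{\psi})$. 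For part~\ref{it:lowersimplemonotone}, if $\psi_1\succ\psi_2$ via operators $A_j$ and $\psi_2\succ\varphi$ via $B_j$, then $\psi_1\succ\varphi$ via $B_jA_j$, since $(B_jA_j)^*(B_jA_j)=A_j^*B_j^*B_jA_j\le A_j^*A_j\le I$. Hence the supremum defining $\loglowerLOCC{\alpha,\theta}(\ket{\psi_1})$ is over a larger set. For part~\ref{it:lowersupermultiplicative}, given $\psi_1\succ\varphi_1$ via $A_j$ and $\psi_2\succ\varphi_2$ via $B_j$, the operators $A_j\otimes B_j$ on $\mathcal{H}_j\otimes\mathcal{K}_j$ satisfy $(A_j\otimes B_j)^*(A_j\otimes B_j)=A_j^*A_j\otimes B_j^*B_j\le I$ and witness $\psi_1\otimes\psi_2\succ\varphi_1\otimes\varphi_2$. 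The $j$th marginal of $\varphi_1\otimes\varphi_2$ is the tensor product of the $j$th marginals, so after normalisation $\entropy_\theta(\varphi_1\otimes\varphi_2)=\entropy_\theta(\varphi_1)+\entropy_\theta(\varphi_2)$, and $\norm{\varphi_1\otimes\varphi_2}^2=\norm{\varphi_1}^2\norm{\varphi_2}^2$. Summing these and taking the supremum over $\varphi_1,\varphi_2$ independently yields supermultiplicativity.

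Part~\ref{it:lowernormalised} is the only one requiring a small additional argument. The lower bound $\lowerLOCC{\alpha,\theta}(\unittensor{r})\ge r$ is obtained by choosing $\varphi=\unittensor{r}$ itself (via $A_j=I$): the $j$th marginal of $\unittensor{r}$ is $\sum_i\ketbra{i}{i}$, whose normalisation is maximally mixed on an $r$-dimensional space, so $\entropy(\ketbra{\varphi}{\varphi}_j/\norm{\varphi}^2)=\log r$ for every $j$, while $\log\norm{\varphi}^2=\log r$. For the upper bound, suppose $\unittensor{r}\succ\varphi$ via $A_j$. Then $\varphi=\sum_{i=1}^r A_1\ket{i}\otimes\cdots\otimes A_k\ket{i}$, so $\norm{\varphi}^2=\sum_i\prod_j\norm{A_j\ket{i}}^2\le r$ since each factor is at most $1$. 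Moreover, each single-party marginal of $\varphi$ has rank at most $r$, hence its normalised form has entropy at most $\log r$, giving $\entropy_\theta(\varphi)\le\log r$. Combining, $(1-\alpha)\entropy_\theta(\varphi)+\alpha\log\norm{\varphi}^2\le\log r$.

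I do not anticipate a genuine obstacle: every step is either a direct bookkeeping computation or an immediate consequence of the definitions. The only subtle point worth being careful about is the bookkeeping of tensor factors in~\ref{it:lowersupermultiplicative}, where one must keep the ``party $j$'' structure $(\mathcal{H}_j\otimes\mathcal{K}_j)$ separate from the ``copy'' structure, so that $A_j\otimes B_j$ is understood as a local operator at party $j$ rather than crossing party boundaries.
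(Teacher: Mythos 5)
Your handling of (i), (iii) and (iv) is correct and essentially the same as the paper's proof: for (i) you reparametrise the supremum using $\sqrt{p}\psi\succ\varphi\iff\psi\succ\varphi/\sqrt{p}$ and the scale-invariance of $\entropy_{\theta}$; for (iii) you compose witnesses, checking $(B_jA_j)^*(B_jA_j)\le A_j^*A_j\le I$; for (iv) you tensor witnesses and use that marginals and norms factorise, then take the two suprema independently. These match the paper step for step.

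The one genuine flaw is in your upper bound for (ii). The identity $\norm{\varphi}^2=\sum_i\prod_j\norm{A_j\ket{i}}^2$ is false in general, because the vectors $A_1\ket{i}\otimes\cdots\otimes A_k\ket{i}$ for different $i$ need not be pairwise orthogonal, so cross terms contribute. Concretely, take $k=2$, $r=2$ and $A_1=A_2=\ketbra{+}{+}$ with $\ket{+}=(\ket{1}+\ket{2})/\sqrt{2}$ (so $A_j^*A_j\le I$): then $\varphi=\ket{+}\otimes\ket{+}$ has $\norm{\varphi}^2=1$, while $\sum_i\prod_j\norm{A_j\ket{i}}^2=\tfrac{1}{2}$. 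Note that the true norm here even exceeds your sum, so the step cannot be rescued by weakening it to an inequality in the direction you need. The conclusion $\norm{\varphi}^2\le r$ is nevertheless true, and the correct (and the paper's) argument is simply that each $A_j$ is a contraction, hence so is $A_1\otimes\cdots\otimes A_k$, giving $\norm{\varphi}^2\le\norm{\unittensor{r}}^2=r$. With that one substitution, and keeping your rank argument for $\entropy_{\theta}(\varphi)\le\log r$ (which coincides with the paper's observation that local ranks cannot increase), part (ii) is also complete.
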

\begin{proof}
\ref{it:lowerscaling}:
If we replace $\ket{\psi}$ with $\sqrt{p}\ket{\psi}$, then the allowed $\ket{\varphi}$ also get rescaled by $\sqrt{p}$. The first term in the supremum is not sensitive to this, while the second gets an additional $\alpha\log p$ term. Therefore $\loglowerLOCC{\alpha,\theta}(\sqrt{p}\ket{\psi})=\loglowerLOCC{\alpha,\theta}(\ket{\psi})+\alpha\log p$, which is equivalent to the statement.

\ref{it:lowernormalised}:
The entropies in the supremum are upper bounded by $\log r$ for any choice of $\ket{\varphi}$ since the local ranks cannot increase under separable operations. The norm is also nonincreasing, therefore $\log\norm{\varphi}\le\log\norm{\unittensor{r}}^2=\log r$. This proves that $\loglowerLOCC{\alpha,\theta}(\unittensor{r})\le \log r$. On the other hand, $\ket{\varphi}=\unittensor{r}$ is feasible and achieves this upper bound.

\ref{it:lowersimplemonotone}:
If $\psi_2\succ\varphi$ then by assumption also $\psi_1\succ\varphi$, therefore the supremum for $\psi_1$ is taken over a larger set than for $\psi_2$, which implies $\loglowerLOCC{\alpha,\theta}(\ket{\psi_1})\ge\loglowerLOCC{\alpha,\theta}(\psi_2)$.

\ref{it:lowersupermultiplicative}:
If $\psi_i\succ\varphi_i$ ($i=1,2$) then $\psi_1\otimes\psi_2\succ\varphi_1\otimes\varphi_2$, therefore
\begin{equation}
\begin{split}
\loglowerLOCC{\alpha,\theta}(\ket{\psi_1}\otimes\ket{\psi_2})
 & \ge (1-\alpha)\entropy_{\theta}(\varphi_1\otimes\varphi_2)+\alpha\log\norm{\varphi_1\otimes\varphi_2}^2  \\
 & = (1-\alpha)\entropy_{\theta}(\varphi_1)+\alpha\log\norm{\varphi_1}^2+(1-\alpha)\entropy_{\theta}(\varphi_2)+\alpha\log\norm{\varphi_2}^2.
\end{split}
\end{equation}
Now take the supremum over the admissible $\ket{\varphi_1}$ and $\ket{\varphi_2}$ to get $\loglowerLOCC{\alpha,\theta}(\ket{\psi_1}\otimes\ket{\psi_2})\ge \loglowerLOCC{\alpha,\theta}(\ket{\psi_1})+\loglowerLOCC{\alpha,\theta}(\ket{\psi_2})$.
\end{proof}

Part~\ref{it:lowersupermultiplicative} of Proposition~\ref{prop:lowerbasic} says that $\loglowerLOCC{\alpha,\theta}$ is superadditive under tensor product. We also have the additive upper bound $\loglowerLOCC{\alpha,\theta}(\ket{\psi})\le(1-\alpha)\sum_{j=1}^k\theta_j\log\dim\mathcal{H}_j+\alpha\log\norm{\psi}^2$. Therefore the regularised lower functional exists and is finite, so the following definition is meaningful.
\begin{definition}[Asymptotic lower functional]\label{def:asymptoticlower}
The asymptotic logarithmic lower functional is
\begin{equation}
\logasymptoticlowerLOCC{\alpha,\theta}(\ket{\psi})=\lim_{n\to\infty}\frac{1}{n}\loglowerLOCC{\alpha,\theta}(\ket{\psi}^{\otimes n})
\end{equation}
and the asymptotic lower functional is
\begin{equation}
\asymptoticlowerLOCC{\alpha,\theta}(\ket{\psi})=\lim_{n\to\infty}\sqrt[n]{\lowerLOCC{\alpha,\theta}(\ket{\psi}^{\otimes n})}=2^{\logasymptoticlowerLOCC{\alpha,\theta}(\ket{\psi})}.
\end{equation}
\end{definition}

\begin{proposition}[Basic properties of the asymptotic lower functional]\label{prop:asymptoticlowerbasic}
For any $\alpha\in[0,\infty)$, $p\in[0,\infty)$, $\theta\in\distributions([k])$ and vectors $\ket{\psi}\in\mathcal{H}_1\otimes\cdots\otimes\mathcal{H}_k$ and $\ket{\varphi}\in\mathcal{K}_1\otimes\cdots\otimes\mathcal{K}_k$ the followings hold:
\begin{enumerate}[(i)]
\item\label{it:asymptoticlowerscaling} $\asymptoticlowerLOCC{\alpha,\theta}(\sqrt{p}\ket{\psi})=p^\alpha \asymptoticlowerLOCC{\alpha,\theta}(\ket{\psi})$.
\item\label{it:asymptoticlowernormalised} $\asymptoticlowerLOCC{\alpha,\theta}(\unittensor{r})=r$.
\item\label{it:asymptoticlowersupermultiplicative} $\asymptoticlowerLOCC{\alpha,\theta}(\ket{\psi_1}\otimes\ket{\psi_2})\ge \asymptoticlowerLOCC{\alpha,\theta}(\ket{\psi_1})\asymptoticlowerLOCC{\alpha,\theta}(\ket{\psi_2})$
\item\label{it:asymptoticlowersuperadditive} $\asymptoticlowerLOCC{\alpha,\theta}(\ket{\psi_1}\oplus\ket{\psi_2})\ge \asymptoticlowerLOCC{\alpha,\theta}(\ket{\psi_1})+\asymptoticlowerLOCC{\alpha,\theta}(\ket{\psi_2})$
\end{enumerate}
\end{proposition}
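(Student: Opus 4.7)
The plan is as follows. Parts~\ref{it:asymptoticlowerscaling}--\ref{it:asymptoticlowersupermultiplicative} follow immediately from the corresponding statements in Proposition~\ref{prop:lowerbasic}: substituting $\ket{\psi}^{\otimes n}$ and invoking \ref{it:lowerscaling}, \ref{it:lowernormalised} (using $\unittensor{r}^{\otimes n}=\unittensor{r^n}$), or \ref{it:lowersupermultiplicative} as appropriate, taking $n$th roots, and passing to the limit yields the three assertions. The existence and finiteness of the defining limits are already observed just before Definition~\ref{def:asymptoticlower} via Fekete's lemma.

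The interesting content is~\ref{it:asymptoticlowersuperadditive}. The geometric key step is to establish the restriction
\begin{equation*}
(\ket{\psi_1}\oplus\ket{\psi_2})^{\otimes n}\succ\unittensor{\binom{n}{m}}\otimes\ket{\psi_1}^{\otimes m}\otimes\ket{\psi_2}^{\otimes n-m}
\end{equation*}
for every $0\le m\le n$. To produce it, view each party $j$'s local space $(\mathcal{H}_j\oplus\mathcal{K}_j)^{\otimes n}$ through the orthogonal decomposition $\bigoplus_{S\subseteq[n]}\bigotimes_{i\in S}\mathcal{H}_j\otimes\bigotimes_{i\notin S}\mathcal{K}_j$ and let $A_j$ be the local projection onto the direct sum of those summands with $|S|=m$ followed by a local unitary mapping each summand isometrically onto $\ket{S}\otimes\mathcal{H}_j^{\otimes m}\otimes\mathcal{K}_j^{\otimes n-m}$, where the reordering of factors depends only on $S$ (and is the same for every $j$). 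Because every party uses the same labelling of $m$-subsets by basis vectors of $\complexes^{\binom{n}{m}}$, applying $A_1\otimes\cdots\otimes A_k$ to $(\ket{\psi_1}\oplus\ket{\psi_2})^{\otimes n}$ collects the $\binom{n}{m}$ orthogonal contributions into the GHZ-type superposition $\sum_{|S|=m}\ket{S}^{\otimes k}\otimes\ket{\psi_1}^{\otimes m}\otimes\ket{\psi_2}^{\otimes n-m}=\unittensor{\binom{n}{m}}\otimes\ket{\psi_1}^{\otimes m}\otimes\ket{\psi_2}^{\otimes n-m}$, and each $A_j$ is a partial isometry hence satisfies $A_j^*A_j\le I$.

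Given this restriction, Proposition~\ref{prop:lowerbasic}\ref{it:lowersimplemonotone}, \ref{it:lowersupermultiplicative}, and~\ref{it:lowernormalised} together yield
\begin{equation*}
\lowerLOCC{\alpha,\theta}\bigl((\ket{\psi_1}\oplus\ket{\psi_2})^{\otimes n}\bigr)\ge\binom{n}{m}\,\lowerLOCC{\alpha,\theta}(\ket{\psi_1}^{\otimes m})\,\lowerLOCC{\alpha,\theta}(\ket{\psi_2}^{\otimes n-m}).
\end{equation*}
Taking logarithms, dividing by $n$, and letting $n\to\infty$ along a subsequence with $m/n\to q\in[0,1]$ (using $\frac{1}{n}\log\binom{n}{m}\to h(q)$) produces
\begin{equation*}
\logasymptoticlowerLOCC{\alpha,\theta}(\ket{\psi_1}\oplus\ket{\psi_2})\ge h(q)+q\logasymptoticlowerLOCC{\alpha,\theta}(\ket{\psi_1})+(1-q)\logasymptoticlowerLOCC{\alpha,\theta}(\ket{\psi_2}).
\end{equation*}
Setting $a=\asymptoticlowerLOCC{\alpha,\theta}(\ket{\psi_1})$ and $b=\asymptoticlowerLOCC{\alpha,\theta}(\ket{\psi_2})$ and maximising $h(q)+q\log a+(1-q)\log b$ over $q\in[0,1]$ (the maximum $\log(a+b)$ is attained at $q=a/(a+b)$) concludes the argument.

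The main obstacle I anticipate is the bookkeeping required to verify that the $k$ local partial isometries $A_j$ can be chosen coherently so that the output carries a genuine $\unittensor{\binom{n}{m}}$ factor rather than just a direct-sum register: this $\binom{n}{m}$-dimensional GHZ contribution is precisely what upgrades the naive supermultiplicative bound $\lowerLOCC{\alpha,\theta}(\ket{\psi_1}^{\otimes m})\,\lowerLOCC{\alpha,\theta}(\ket{\psi_2}^{\otimes n-m})$ to the combinatorial lower bound that, after optimising over $q$, gives a $+$ rather than a $\max$ on the right-hand side of~\ref{it:asymptoticlowersuperadditive}.
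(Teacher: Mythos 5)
Your proposal is correct and follows essentially the same route as the paper: parts (i)--(iii) by applying Proposition~\ref{prop:lowerbasic} to $\ket{\psi}^{\otimes n}$ and taking limits, and part (iv) by expanding $(\ket{\psi_1}\oplus\ket{\psi_2})^{\otimes n}$ over the binomial blocks, keeping the $m\approx qn$ block via a local projection to get the restriction to $\unittensor{\binom{n}{m}}\otimes\ket{\psi_1}^{\otimes m}\otimes\ket{\psi_2}^{\otimes n-m}$, then combining monotonicity, supermultiplicativity and the value on unit tensors before maximising over $q$. Your explicit construction of the coherent local partial isometries producing the $\unittensor{\binom{n}{m}}$ factor is just a more detailed account of the identification the paper sets up in Section~\ref{sec:notations} and invokes in its one-line justification.
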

\begin{proof}
\ref{it:asymptoticlowerscaling}:
By part~\ref{it:lowerscaling} of Proposition~\ref{prop:lowerbasic} we have
\begin{equation}
\begin{split}
\asymptoticlowerLOCC{\alpha,\theta}(\sqrt{p}\ket{\psi})
 & = \lim_{n\to\infty}\sqrt[n]{\lowerLOCC{\alpha,\theta}((\sqrt{p}\ket{\psi})^{\otimes n})}  \\
 & = \lim_{n\to\infty}\sqrt[n]{p^{n\alpha}\lowerLOCC{\alpha,\theta}(\ket{\psi})}=p^\alpha\asymptoticlowerLOCC{\alpha,\theta}(\sqrt{p}\ket{\psi}).
\end{split}
\end{equation}

\ref{it:asymptoticlowernormalised}:
By part~\ref{it:lowernormalised} of Proposition~\ref{prop:lowerbasic} we have
\begin{equation}
\asymptoticlowerLOCC{\alpha,\theta}(\unittensor{r})
 = \lim_{n\to\infty}\sqrt[n]{\lowerLOCC{\alpha,\theta}(\unittensor{r}^{\otimes n})}
 = \lim_{n\to\infty}\sqrt[n]{\lowerLOCC{\alpha,\theta}(\unittensor{r^n})}
 = \lim_{n\to\infty}\sqrt[n]{r^n}=r.
\end{equation}

\ref{it:asymptoticlowersupermultiplicative}:
By part~\ref{it:lowersupermultiplicative} of Proposition~\ref{prop:lowerbasic} we have
\begin{equation}
\begin{split}
\asymptoticlowerLOCC{\alpha,\theta}(\ket{\psi_1}\otimes\ket{\psi_2})
 & = \lim_{n\to\infty}\sqrt[n]{\lowerLOCC{\alpha,\theta}((\ket{\psi_1}\otimes\ket{\psi_2})^{\otimes n})}  \\
 & \ge \lim_{n\to\infty}\sqrt[n]{\lowerLOCC{\alpha,\theta}(\ket{\psi_1}^{\otimes n})\lowerLOCC{\alpha,\theta}(\ket{\psi_2}^{\otimes n})}  \\
 & = \asymptoticlowerLOCC{\alpha,\theta}(\ket{\psi_1})\asymptoticlowerLOCC{\alpha,\theta}(\ket{\psi_2})
\end{split}
\end{equation}

\ref{it:asymptoticlowersuperadditive}:
Let $q\in[0,1]$. For any $n\in\naturals$ we expand the tensor power of the direct sum as
\begin{equation}
(\ket{\psi_1}\oplus\ket{\psi_2})^{\otimes n}=\bigoplus_{m=0}^n\unittensor{\binom{n}{m}}\otimes\ket{\psi_1}^{\otimes m}\otimes\ket{\psi_2}^{\otimes(n-m)}.
\end{equation}
Keep only the $m=\lfloor qn\rfloor$ term. This can be done by a local projection, which is a separable operation with one Kraus operator, i.e. $(\ket{\psi_1}\oplus\ket{\psi_2})^{\otimes n}\succ\unittensor{\binom{n}{m}}\otimes\ket{\psi_1}^{\otimes m}\otimes\ket{\psi_2}^{\otimes(n-m)}$. Therefore, using \ref{it:lowersimplemonotone} and \ref{it:lowersupermultiplicative} of Proposition~\ref{prop:lowerbasic} we have
\begin{equation}
\begin{split}
\loglowerLOCC{\alpha,\theta}((\ket{\psi_1}\oplus\ket{\psi_2})^{\otimes n})
 & \ge \loglowerLOCC{\alpha,\theta}\left(\unittensor{\binom{n}{\lfloor qn\rfloor}}\otimes\ket{\psi_1}^{\otimes\lfloor qn\rfloor}\otimes\ket{\psi_2}^{\otimes n-\lfloor qn\rfloor}\right)  \\
 & \ge \log\binom{n}{\lfloor qn\rfloor}+\loglowerLOCC{\alpha,\theta}(\ket{\psi_1}^{\otimes\lfloor qn\rfloor})+\loglowerLOCC{\alpha,\theta}(\ket{\psi_2}^{\otimes n-\lfloor qn\rfloor}).
\end{split}
\end{equation}
Divide both sides by $n$ and take the limit $n\to\infty$:
\begin{equation}
\begin{split}
\logasymptoticlowerLOCC{\alpha,\theta}(\ket{\psi_1}\oplus\ket{\psi_2})
 & = \lim_{n\to\infty}\frac{1}{n}\loglowerLOCC{\alpha,\theta}((\ket{\psi_1}\oplus\ket{\psi_2})^{\otimes n})  \\
 & \ge \lim_{n\to\infty}\frac{1}{n}\log\binom{n}{\lfloor qn\rfloor}+\lim_{n\to\infty}\frac{\lfloor qn\rfloor}{n}\frac{1}{\lfloor qn\rfloor}\loglowerLOCC{\alpha,\theta}(\ket{\psi_1}^{\otimes\lfloor qn\rfloor})  \\ &\qquad+\lim_{n\to\infty}\frac{n-\lfloor qn\rfloor}{n}\frac{1}{n-\lfloor qn\rfloor}\loglowerLOCC{\alpha,\theta}(\ket{\psi_2}^{\otimes n-\lfloor qn\rfloor})  \\
 & = h(q)+q\logasymptoticlowerLOCC{\alpha,\theta}(\ket{\psi_1})+(1-q)\logasymptoticlowerLOCC{\alpha,\theta}(\ket{\psi_2}).
\end{split}
\end{equation}
To finish the proof, take the maximum of the right hand side over $q$.
\end{proof}

\subsection{Comparing the functionals}\label{sec:comparing}

The aim of this section is to prove that $\logasymptoticlowerLOCC{\alpha,\theta}=\logupperLOCC{\alpha,\theta}$, from which our main theorem immediately follows. The inequality $\logasymptoticlowerLOCC{\alpha,\theta}(\ket{\psi})\le \logupperLOCC{\alpha,\theta}(\ket{\psi})$ relies on the spectrum estimation theorem (similarly to \cite[Theorem 3.24]{christandl2018universal}) and basic properties of the rate function. The reverse inequality is an application of the dimension estimates \eqref{eq:Schurdimension} and \eqref{eq:symmetricgrpdimension}.

\begin{proposition}
$\loglowerLOCC{\alpha,\theta}(\ket{\psi})\le \logupperLOCC{\alpha,\theta}(\ket{\psi})$
\end{proposition}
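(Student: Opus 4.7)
The plan is to bound $\loglowerLOCC{\alpha,\theta}(\ket{\psi})$ term by term by choosing a specific $\overline{\lambda}$ in the supremum defining $\logupperLOCC{\alpha,\theta}(\ket{\psi})$. Fix any $\ket{\varphi}$ with $\psi\succ\varphi$, so $\varphi=(A_1\otimes\cdots\otimes A_k)\psi$ with $A_j^*A_j\le I$, and let $\overline{\lambda}\in\overline{\partitions}^k$ be the $k$-tuple of ordered marginal spectra of the normalised state $\varphi/\norm{\varphi}$. By the definition of $\entropy_{\theta}(\varphi)$, this choice already gives $\entropy_{\theta}(\overline{\lambda})=\entropy_{\theta}(\varphi)$, so the first term in \eqref{eq:upperdef} evaluated at this $\overline{\lambda}$ equals $(1-\alpha)\entropy_{\theta}(\varphi)$.

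The key point is to show $\ratefunction{\overline{\lambda}}{\psi}\le-\log\norm{\varphi}^2$. I would combine the three parts of Proposition~\ref{prop:ratebasic} as follows: part~\ref{it:ratemarginal} applied to the unit vector $\varphi/\norm{\varphi}$ (whose marginal spectra are $\overline{\lambda}$) gives $\ratefunction{\overline{\lambda}}{\varphi/\norm{\varphi}}=0$; the scaling property~\ref{it:ratescaling} with $p=\norm{\varphi}^2$ then yields $\ratefunction{\overline{\lambda}}{\varphi}=-\log\norm{\varphi}^2$; finally, since $\varphi=(A_1\otimes\cdots\otimes A_k)\psi$ with local contractions, the monotonicity in part~\ref{it:ratesimplemonotone} (with the roles of $\psi$ and $\varphi$ matching its hypothesis exactly) yields $\ratefunction{\overline{\lambda}}{\psi}\le\ratefunction{\overline{\lambda}}{\varphi}=-\log\norm{\varphi}^2$.

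Putting these two computations together, the quantity inside the supremum defining $\logupperLOCC{\alpha,\theta}(\ket{\psi})$ at this particular $\overline{\lambda}$ is at least
\begin{equation}
(1-\alpha)\entropy_{\theta}(\overline{\lambda})-\alpha\ratefunction{\overline{\lambda}}{\psi}\ge(1-\alpha)\entropy_{\theta}(\varphi)+\alpha\log\norm{\varphi}^2,
\end{equation}
which is the expression appearing in Definition~\ref{def:lower}. Taking the supremum over all $\ket{\varphi}$ with $\psi\succ\varphi$ on the right-hand side yields $\loglowerLOCC{\alpha,\theta}(\ket{\psi})\le\logupperLOCC{\alpha,\theta}(\ket{\psi})$.

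There is no real obstacle here; the argument is essentially bookkeeping once one notices that choosing $\overline{\lambda}$ to be the marginal spectra of the normalised $\varphi$ simultaneously optimises the entropy term (turning it into $\entropy_{\theta}(\varphi)$) and makes the rate function trivially computable via its scaling and vanishing properties. The only point requiring mild care is the direction of the inequality in Proposition~\ref{prop:ratebasic}\ref{it:ratesimplemonotone}: the hypothesis there is that $\psi$ is obtained from $\varphi$ by local contractions, which, read in our setting with $\psi\succ\varphi$, is exactly the statement that the rate function at the ``larger'' state $\psi$ is no bigger than at $\varphi$, giving us the inequality in the needed direction.
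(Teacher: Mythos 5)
Your proposal is correct and follows essentially the same route as the paper: for each $\varphi$ with $\psi\succ\varphi$ you take $\overline{\lambda}$ to be the marginal spectra of $\varphi/\norm{\varphi}$, use parts~\ref{it:ratemarginal}, \ref{it:ratescaling} and~\ref{it:ratesimplemonotone} of Proposition~\ref{prop:ratebasic} to get $\ratefunction{\overline{\lambda}}{\psi}\le-\log\norm{\varphi}^2$, and then take the supremum over $\varphi$. Your reading of the direction of the monotonicity in part~\ref{it:ratesimplemonotone} (with the roles of the two vectors swapped relative to its statement) is the right one, so there is no gap.
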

\begin{proof}
Let $\ket{\varphi}=(A_1\otimes\cdots\otimes A_k)\ket{\psi}$ where $A_j^*A_j\le I$ for all $j$. Let $\overline{\lambda}=(\overline{\lambda_1},\ldots,\overline{\lambda_k})\in\overline{\partitions}^k$ be the decreasingly ordered marginal spectra of $\ketbra{\varphi}{\varphi}/\norm{\varphi}^2$. Then we have
\begin{equation}
\begin{split}
\ratefunction{\overline{\lambda}}{\psi}
 & \le\ratefunction{\overline{\lambda}}{\varphi}  \\
 & = \ratefunction{\overline{\lambda}}{\varphi/\norm{\varphi}}+\log\norm{\varphi}^{-2}  \\
 & = -\log\norm{\varphi}^2
\end{split}
\end{equation}
by Proposition~\ref{prop:ratebasic}. It follows from Definition~\ref{def:upper} that
\begin{equation}
\begin{split}
\logupperLOCC{\alpha,\theta}(\ket{\psi})
 & \ge (1-\alpha)\entropy_{\theta}(\overline{\lambda})-\alpha \ratefunction{\overline{\lambda}}{\psi}  \\
 & \ge (1-\alpha)\entropy_{\theta}(\overline{\lambda})+\alpha\log\norm{\varphi}^2  \\
 & = (1-\alpha)\entropy_{\theta}(\varphi)+\alpha\log\norm{\varphi}^2.
\end{split}
\end{equation}
The supremum of the right hand side over the possible $\ket{\varphi}$ is $\loglowerLOCC{\alpha,\theta}(\ket{\psi})$.
\end{proof}

\begin{corollary}\label{cor:asymptoticlowerleupper}
$\logasymptoticlowerLOCC{\alpha,\theta}(\ket{\psi})\le \logupperLOCC{\alpha,\theta}(\ket{\psi})$.
\end{corollary}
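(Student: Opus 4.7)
The plan is to combine the just-proved pointwise inequality $\loglowerLOCC{\alpha,\theta}\le\logupperLOCC{\alpha,\theta}$ with submultiplicativity of the upper functional (Proposition~\ref{prop:uppersub}~\ref{it:uppersubmultiplicative}, written additively for the logarithm). No new ingredient beyond these two facts should be needed.

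First, I would apply the preceding proposition to the tensor power $\ket{\psi}^{\otimes n}$ to obtain $\loglowerLOCC{\alpha,\theta}(\ket{\psi}^{\otimes n})\le\logupperLOCC{\alpha,\theta}(\ket{\psi}^{\otimes n})$. Then I would invoke submultiplicativity of $\upperLOCC{\alpha,\theta}$ repeatedly (equivalently, subadditivity of $\logupperLOCC{\alpha,\theta}$ under tensor product) to conclude $\logupperLOCC{\alpha,\theta}(\ket{\psi}^{\otimes n})\le n\logupperLOCC{\alpha,\theta}(\ket{\psi})$. Chaining these two bounds gives
\begin{equation}
\frac{1}{n}\loglowerLOCC{\alpha,\theta}(\ket{\psi}^{\otimes n})\le \logupperLOCC{\alpha,\theta}(\ket{\psi})
\end{equation}
for every $n\in\naturals$, and passing to the limit $n\to\infty$ using Definition~\ref{def:asymptoticlower} yields $\logasymptoticlowerLOCC{\alpha,\theta}(\ket{\psi})\le\logupperLOCC{\alpha,\theta}(\ket{\psi})$.

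There is no real obstacle here; the statement is an immediate consequence of the preceding proposition together with the fact that the upper functional (which bounds the lower one) is itself submultiplicative under the tensor product, so regularising the lower one cannot surpass it.
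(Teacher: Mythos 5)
Your argument is correct and coincides with the paper's own proof: both apply the pointwise inequality $\loglowerLOCC{\alpha,\theta}\le\logupperLOCC{\alpha,\theta}$ to $\ket{\psi}^{\otimes n}$, use subadditivity of $\logupperLOCC{\alpha,\theta}$ under tensor product to bound $\logupperLOCC{\alpha,\theta}(\ket{\psi}^{\otimes n})\le n\logupperLOCC{\alpha,\theta}(\ket{\psi})$, and pass to the limit defining the regularisation. No changes needed.
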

\begin{proof}
$\logupperLOCC{\alpha,\theta}$ is subadditive under tensor product, therefore
\begin{equation}
\logasymptoticlowerLOCC{\alpha,\theta}(\ket{\psi})=\lim_{n\to\infty}\frac{1}{n}\loglowerLOCC{\alpha,\theta}(\ket{\psi}^{\otimes n})\le\lim_{n\to\infty}\frac{1}{n}\logupperLOCC{\alpha,\theta}(\ket{\psi}^{\otimes n})\le\logupperLOCC{\alpha,\theta}(\ket{\psi}).
\end{equation}
\end{proof}

\begin{proposition}\label{prop:asymptoticlowergeupper}
$\logasymptoticlowerLOCC{\alpha,\theta}(\ket{\psi})\ge \logupperLOCC{\alpha,\theta}(\ket{\psi})$.
\end{proposition}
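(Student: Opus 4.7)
The plan is to exhibit, for each $\overline{\lambda}\in\overline{\partitions}^k$ and each $n$, a descendant $\ket{\varphi_n}$ of $\ket{\psi}^{\otimes n}$ (with $\psi^{\otimes n}\succ\varphi_n$) whose contribution to $\frac{1}{n}\loglowerLOCC{\alpha,\theta}(\ket{\psi}^{\otimes n})$ approaches $(1-\alpha)\entropy_\theta(\overline{\lambda})-\alpha\ratefunction{\overline{\lambda}}{\psi}$; taking the supremum over $\overline{\lambda}$ then yields the claimed inequality.

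First I would fix $\overline{\lambda}$ and invoke Lemma~\ref{lem:ratefunctionsinglelambdalimit} to obtain a sequence $\lambda^{(n)}\in\partitions[n]^k$ with $\frac{1}{n}\lambda^{(n)}\to\overline{\lambda}$ and $-\frac{1}{n}\log\norm{P_{\lambda^{(n)}}\ket{\psi}^{\otimes n}}^2\to\ratefunction{\overline{\lambda}}{\psi}$. Setting $\ket{\varphi_n}=P_{\lambda^{(n)}}\ket{\psi}^{\otimes n}$ and using the factorisation \eqref{eq:localprojectorproduct}, this vector is obtained from $\ket{\psi}^{\otimes n}$ by applying the local Schur--Weyl projections $P_{\lambda_j^{(n)}}^{\mathcal{H}_j}$ at each party, each a contraction on the corresponding tensor-power space, so $\psi^{\otimes n}\succ\varphi_n$, and $\log\norm{\varphi_n}^2=\log\norm{P_{\lambda^{(n)}}\psi^{\otimes n}}^2$ supplies the linear-in-$\alpha$ term at the correct asymptotic rate.

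The main step is a lower bound on the marginal entropies of $\ket{\varphi_n}$. Since $\ket{\psi}^{\otimes n}$ is invariant under the $S_n$-action permuting copies and each local projection commutes with this action, $\ket{\varphi_n}$ is itself $S_n$-invariant. Consequently the $j$th marginal $\rho_j^{(n)}$ lies in the commutant of the $S_n$-action on $\mathcal{H}_j^{\otimes n}$ and, being supported on the $\lambda_j^{(n)}$ isotypic component, must take the block form $\sigma_j^{(n)}\otimes I_{[\lambda_j^{(n)}]}$ by Schur--Weyl duality. After normalisation the second tensor factor is maximally mixed on $[\lambda_j^{(n)}]$, hence $\entropy(\rho_j^{(n)}/\norm{\varphi_n}^2)\ge\log\dim[\lambda_j^{(n)}]$. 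The dimension bound \eqref{eq:symmetricgrpdimension} together with continuity of the Shannon entropy on distributions supported on a fixed finite set yields $\frac{1}{n}\log\dim[\lambda_j^{(n)}]\to\entropy(\overline{\lambda_j})$.

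Assembling the estimates, one obtains $\frac{1}{n}\loglowerLOCC{\alpha,\theta}(\ket{\psi}^{\otimes n})\ge(1-\alpha)\sum_{j=1}^k\theta_j\frac{\log\dim[\lambda_j^{(n)}]}{n}+\frac{\alpha}{n}\log\norm{\varphi_n}^2$, which tends to $(1-\alpha)\entropy_\theta(\overline{\lambda})-\alpha\ratefunction{\overline{\lambda}}{\psi}$, and taking the supremum over $\overline{\lambda}$ delivers the claim. The delicate part is the marginal-entropy computation; once the block structure of the $S_n$-invariant marginal is identified, everything else is asymptotic assembly of ingredients already established.
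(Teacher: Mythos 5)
Your proposal is correct and follows essentially the same route as the paper: the same test vectors $\varphi_n=P_{\lambda^{(n)}}\psi^{\otimes n}$ obtained from Lemma~\ref{lem:ratefunctionsinglelambdalimit}, the same observation that $\psi^{\otimes n}\succ\varphi_n$ via the local Schur--Weyl projections, and the same asymptotic assembly followed by a supremum over $\overline{\lambda}$. The only (harmless) difference is the marginal-entropy step, where you identify the full block form $\sigma_j^{(n)}\otimes I_{[\lambda_j^{(n)}]}$ of the $S_n$-invariant marginal and conclude $\entropy\ge\log\dim[\lambda_j^{(n)}]$ directly, while the paper uses the maximally mixed factor together with the triangle inequality for the von Neumann entropy and the bound \eqref{eq:Schurdimension}; note also that only the lower bound \eqref{eq:symmetricgrpdimension} on $\dim[\lambda_j^{(n)}]$ (not the full limit you state) is actually needed.
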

\begin{proof}
Let $\psi\in\mathcal{H}=\mathcal{H}_1\otimes\cdots\otimes\mathcal{H}_k$, choose $\overline{\lambda}\in\overline{\partitions}^k$. Let $(\lambda^{(n)})_{n\in\naturals}$ be a sequence such that $\lambda^{(n)}\in\partitions[n]^k$, $\frac{1}{n}\lambda^{(n)}\to\overline{\lambda}$ and
\begin{equation}
\lim_{n\to\infty}-\frac{1}{n}\log\norm{P_{\lambda^{(n)}}\psi^{\otimes n}}^2=\ratefunction{\overline{\lambda}}{\psi},
\end{equation}
as in Lemma~\ref{lem:ratefunctionsinglelambdalimit}. By Definition~\ref{def:lower}, we can estimate $\loglowerLOCC{\alpha,\theta}(\ket{\psi}^{\otimes n})$ using $\psi^{\otimes n}\succ P_{\lambda^{(n)}}\psi^{\otimes n}=:\varphi_n$.

The $j$th marginal of $\varphi_n/\norm{\varphi_n}$ is a state on $\mathbb{S}_{\lambda^{(n)}_j}(\mathcal{H}_j)\otimes[\lambda^{(n)}_j]$, invariant under the action of $S_n$ on the second factor, thus the reduced state on that factor is maximally mixed. We use the triangle inequality for the von Neumann entropy and the dimension estimates \eqref{eq:Schurdimension} and \eqref{eq:symmetricgrpdimension} to get
\begin{equation}
\begin{split}
\entropy_{\theta}(\varphi_n)
 & \ge \sum_{j=1}^k\theta_j\left(\log\dim[\lambda^{(n)}_j]-\log\dim\mathbb{S}_{\lambda^{(n)}_j}(\mathcal{H}_j)\right)  \\
 & \ge n\sum_{j=1}^k\theta_j\left[\entropy\left(\frac{1}{n}\lambda^{(n)}_j\right)-\log(n+1)^{d_j(d_j-1)/2}(n+d_j)^{(d_j+2)(d_j-1)/2}\right]  \\
 & \ge n\entropy_{\theta}\left(\frac{1}{n}\lambda^{(n)}\right)-\log(n+d)^{d^2-1},
\end{split}
\end{equation}
where $d_j=\dim\mathcal{H}_j$ and $d=\dim\mathcal{H}$.

This leads to the lower bound
\begin{equation}
\begin{split}
\logasymptoticlowerLOCC{\alpha,\theta}(\ket{\psi})
 & \ge \limsup_{n\to\infty}\frac{1}{n}\loglowerLOCC{\alpha,\theta}(\ket{\psi}^{\otimes n})  \\
 & \ge \limsup_{n\to\infty}\frac{1}{n}\left[(1-\alpha)\entropy(\varphi_n)+\alpha\log\norm{\varphi_n}^2\right]  \\
 & \ge \limsup_{n\to\infty}\left[(1-\alpha)\entropy_{\theta}\left(\frac{1}{n}\lambda^{(n)}\right)-\frac{1}{n}(1-\alpha)\log(n+d)^{d^2-1}+\alpha\frac{1}{n}\log\norm{\varphi_n}^2\right]  \\
 & = (1-\alpha)\entropy_{\theta}(\overline{\lambda})-\alpha\ratefunction{\overline{\lambda}}{\psi}.
\end{split}
\end{equation}
To finish the proof we take the supremum over $\overline{\lambda}$ so that the right hand side becomes $\logupperLOCC{\alpha,\theta}(\ket{\psi})$.
\end{proof}

\begin{proof}[Proof of Theorem~\ref{thm:main}]
By Corollary~\ref{cor:asymptoticlowerleupper} and Proposition~\ref{prop:asymptoticlowergeupper} we see that $\logasymptoticlowerLOCC{\alpha,\theta}=\logupperLOCC{\alpha,\theta}$. The scaling property \ref{it:mainscaling} follows from part~\ref{it:upperscaling} of Proposition~\ref{prop:uppersub} (or from part~\ref{it:asymptoticlowerscaling} of Proposition~\ref{prop:asymptoticlowerbasic}), the normalisation \ref{it:mainnormalised} is the same as part~\ref{it:asymptoticlowernormalised} of Proposition~\ref{prop:asymptoticlowerbasic}. The multiplicativity \ref{it:mainmultiplicative} follows from part~\ref{it:asymptoticlowersupermultiplicative} of Proposition~\ref{prop:asymptoticlowerbasic} and part~\ref{it:uppersubmultiplicative} of Proposition~\ref{prop:uppersub}. The additivity \ref{it:mainadditive} follows from part~\ref{it:asymptoticlowersuperadditive} of Proposition~\ref{prop:asymptoticlowerbasic} and part~\ref{it:uppersubadditive} of Proposition~\ref{prop:uppersub}. The monotonicity property~\ref{it:mainmonotone} is a consequence of the scaling property, Proposition~\ref{prop:uppermonotone} and the equivalent condition \eqref{eq:monotonicitycondition} (from the characterisation \cite[Theorem 3.1.]{jensen2019asymptotic} of LOCC spectral points).
\end{proof}

\section*{Acknowledgements}

I thank Matthias Christandl for numerous discussions on the simultaneous spectrum estimation problem. This research was supported by the J\'anos Bolyai Research Scholarship of the Hungarian Academy of Sciences and the National Research, Development and Innovation Fund of Hungary within the Quantum Technology National Excellence Program (Project Nr.~2017-1.2.1-NKP-2017-00001) and via the research grants K124152, KH129601. Part of this work was done while the author was with QMATH.

\appendix

\section{Proofs of technical statements}\label{sec:technical}

\begin{proposition}\label{prop:limballexists}
Let $\overline{\lambda}\in\overline{\partitions}^k$, $\epsilon>0$ and $\varphi\in\mathcal{H}=\mathcal{H}_1\otimes\cdots\otimes\mathcal{H}_k$. Then the limit
\begin{equation}
\lim_{n\to\infty}\frac{1}{n}\log\sum_{\substack{\lambda\in\partitions[n]^k  \\  \frac{1}{n}\lambda\in \ball{\epsilon}{\overline{\lambda}}}}\norm{P_\lambda\varphi^{\otimes n}}^2
\end{equation}
exists (possibly $-\infty$, with $\log0$ interpreted as $-\infty$).
\end{proposition}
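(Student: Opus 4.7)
My plan is to show existence of $\lim_{n\to\infty}\frac{1}{n}\log S_n$, where $S_n:=\sum_{\lambda\in\partitions[n]^k,\,\lambda/n\in\ball{\epsilon}{\overline{\lambda}}}\norm{P_\lambda\varphi^{\otimes n}}^2$, via approximate super-multiplicativity in $n$ together with a Fekete-type lemma and the monotonicity of $S_n$ in $\epsilon$. The main tool is the Littlewood--Richardson decomposition of tensor products of unitary-group representations: starting from $\varphi^{\otimes(n+m)}=\varphi^{\otimes n}\otimes\varphi^{\otimes m}$ and applying \eqref{eq:LRvanishingproduct} componentwise, for $\mu\in\partitions[n]^k$ and $\nu\in\partitions[m]^k$ one obtains
\[
\norm{P_\mu\varphi^{\otimes n}}^2\norm{P_\nu\varphi^{\otimes m}}^2=\sum_{\lambda:\,c^\lambda_{\mu\nu}\neq 0}\norm{P_\lambda(P_\mu\otimes P_\nu)\varphi^{\otimes(n+m)}}^2.
\]

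The key geometric observation is that the componentwise partition sum $\mu+\nu\in\partitions[n+m]^k$ always has $c^{\mu+\nu}_{\mu\nu}\geq 1$ (via the canonical Yamanouchi filling of the skew diagram $(\mu+\nu)/\mu$ that places entry $i$ in each box of row $i$), and its normalisation $\frac{\mu+\nu}{n+m}=\frac{n}{n+m}\cdot\frac{\mu}{n}+\frac{m}{n+m}\cdot\frac{\nu}{m}$ is a convex combination, hence lies in $\ball{\epsilon}{\overline{\lambda}}$ whenever $\mu/n$ and $\nu/m$ do. Combined with the mutual orthogonality of the projectors $\{P_\mu\otimes P_\nu\}$, their commutativity with $P_\lambda$, and the polynomial bound $|\partitions[n+m]^k|\leq(n+m+1)^{kd}$, I would derive an inequality of the form
\[
S_n(\overline{\lambda},\epsilon')\cdot S_m(\overline{\lambda},\epsilon')\leq p(n+m)\cdot S_{n+m}(\overline{\lambda},\epsilon)
\]
for some polynomial $p$ and some $\epsilon'\leq\epsilon$. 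Taking logarithms and dividing by $n+m$, the sequence $\log S_n(\overline{\lambda},\cdot)$ becomes approximately super-additive in $n$, and Fekete's lemma together with sending $\epsilon'\to\epsilon$ via monotonicity yields convergence of $\frac{1}{n}\log S_n(\overline{\lambda},\epsilon)$.

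The hard step is that \eqref{eq:LRvanishingproduct} equates $\norm{P_\mu\varphi^{\otimes n}}^2\norm{P_\nu\varphi^{\otimes m}}^2$ to the \emph{full} sum over $\lambda$ with $c^\lambda_{\mu\nu}\neq 0$, not only the $\lambda=\mu+\nu$ summand, and in general some of these $\lambda$ can leave the ball. For instance with $\overline{\lambda}=(\tfrac12,\tfrac12,0)$ and $\mu=\nu=(n/2,n/2,0)$ one has $c^{(n,n/2,n/2)}_{\mu\nu}\neq 0$ even though $(n,n/2,n/2)/(n+m)=(\tfrac12,\tfrac14,\tfrac14)$ sits at $\ell^1$-distance $\tfrac12$ from $\overline{\lambda}$. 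The delicate part of the proof is therefore to control such outlying $\lambda$: one must show that after summing over $(\mu,\nu)\in\ball{\epsilon'}{\overline{\lambda}}^2$, their aggregate contribution is absorbed into $S_{n+m}(\overline{\lambda},\epsilon)$ with only polynomial loss. I would argue this by combining a pigeonhole over a finite cover of $\overline{\partitions}^k$ with the closedness of $\LR{\bullet}^k$ exploited in Propositions~\ref{prop:ratesum} and~\ref{prop:rateprojection}, or alternatively by invoking the general large-deviation results of \cite{franks2020minimal,botero2020large} referenced after Definition~\ref{def:ratefunction}.
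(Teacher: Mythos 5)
You have correctly located the crux, but you have not crossed it: the step you yourself flag as ``delicate'' is exactly the point where this approach breaks down, and neither of your suggested remedies repairs it. The identity coming from \eqref{eq:LRvanishingproduct} only tells you that the \emph{total} mass $\sum_{\lambda:\,c^\lambda_{\mu\nu}\neq 0}\norm{P_\lambda\varphi^{\otimes(n+m)}}^2$ dominates $\norm{P_\mu\varphi^{\otimes n}}^2\norm{P_\nu\varphi^{\otimes m}}^2$; to get Fekete-type supermultiplicativity of $S_n(\overline{\lambda},\epsilon)$ you would need this mass to sit on $\lambda$ with $\frac{1}{n+m}\lambda\in\ball{\epsilon}{\overline{\lambda}}$, and, as your own example $(\tfrac12,\tfrac12,0)$ shows, the LR-compatible $\lambda$ fill out a whole Horn-type polytope that leaves the ball. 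Pigeonholing only extracts \emph{some} $\lambda$ carrying a polynomial fraction of the product, with no control on where it lies, and the closedness of $\LR{\bullet}^k$ cannot force it back into the ball (closedness is used in Propositions~\ref{prop:ratesum} and~\ref{prop:rateprojection} for upper-bound directions, where leakage is harmless; here you need a \emph{lower} bound on the in-ball portion, which is a genuinely different statement). Falling back on \cite{franks2020minimal,botero2020large} is of course legitimate but amounts to citing the proposition rather than proving it.

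The paper's proof avoids the spreading problem altogether by replacing the isotypic projections $P_\lambda$ with the projections $W_\lambda$ onto highest weight vectors of weight $\lambda$: a tensor product of highest weight vectors is a highest weight vector of weight exactly the sum, so $W_{\lambda_0}^{\otimes q}\otimes W_{((1),\ldots,(1))}^{\otimes r}\le W_\mu\le P_\mu$ with $\mu=q\lambda_0+r((1),\ldots,(1))$ and no LR coefficients appear at all; since $\frac{1}{n}\mu\to\frac{1}{n_0}\lambda_0\in\ball{\epsilon}{\overline{\lambda}}$, the grown weight stays in the ball. The price is the comparison $W_\lambda\le P_\lambda\le\dim\mathbb{S}_\lambda(\mathcal{H})\int(U^{\otimes n})^*W_\lambda U^{\otimes n}\ed U$, which costs only a polynomial factor and requires choosing a good local unitary $U_0$, together with a density argument guaranteeing $W_{((1),\ldots,(1))}U_0\varphi\neq 0$ so that the remainder factor for $n\not\equiv 0\pmod{n_0}$ does not vanish. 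This yields $\liminf_n\frac1n\log S_n\ge\frac{1}{n_0}\log S_{n_0}-o(1)$ for all large $n_0$, hence the limit exists. If you want to salvage your route, the highest-weight-vector mechanism (or an equivalent way to multiply representations without LR spreading) is the missing ingredient; as written, the proposal has a genuine gap.
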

\begin{proof}
If $P_\lambda\varphi^{\otimes n}=0$ for every $n$ and $\lambda$ with $\frac{1}{n}\lambda\in \ball{\epsilon}{\overline{\lambda}}$ then the sequence is constant $-\infty$. We can thus assume $P_\lambda\varphi^{\otimes n}\neq 0$ for some $\lambda$. In particular, $\varphi\neq 0$.

Fix a maximal torus and a choice of positive Weyl chamber for $U(\mathcal{H}_1)\times\cdots\times U(\mathcal{H}_k)$. Let $W_\lambda\in\boundeds(\mathcal{H}^{\otimes n})$ be the orthogonal projection onto the subspace of highest weight vectors with weight $\lambda$. Then
\begin{equation}\label{eq:wandpestimate}
W_\lambda\le P_\lambda = \dim \mathbb{S}_\lambda(\mathcal{H})\int_{U(\mathcal{H}_1)\times\cdots\times U(\mathcal{H}_k)} (U^{\otimes n})^*W_\lambda U^{\otimes n}\ed U,
\end{equation}
where the integration is with respect to the Haar probability measure. $\dim \mathbb{S}_\lambda(\mathcal{H})\le(n+1)^{d_1}$ where $d_1=\sum_{j=1}^k\frac{\dim\mathcal{H}_j(\dim\mathcal{H}_j-1)}{2}$.

The only $k$-tuple for $n=1$ is $((1),\ldots,(1))$, therefore $P_{((1),\ldots,(1))}\varphi=\varphi\neq 0$. By \eqref{eq:wandpestimate} there is an $U_0\in U(\mathcal{H}_1)\times\cdots\times U(\mathcal{H}_k)$ such that $W_{((1),\ldots,(1))} U_0\varphi\neq 0$. In fact, the set
\begin{equation}\label{eq:nonvanishingU}
\setbuild{U\in U(\mathcal{H}_1)\times\cdots\times U(\mathcal{H}_k)}{W_{((1),\ldots,(1))} U\varphi\neq 0}
\end{equation}
is dense since for any $U$ one can find $A\in \mathfrak{u}(\mathcal{H}_1)\times\cdots\times \mathfrak{u}(\mathcal{H}_k)$ with $U_0U^{-1}=e^A$ and then the function
\begin{equation}
t\mapsto \norm{e^{tA}U\varphi}^2
\end{equation}
is real analytic and not identically zero, therefore arbitrarily small values of $t$ exist with $e^{tA}U\varphi\neq 0$.

Let $n_0\in\naturals$ such that
\begin{equation}\label{eq:sumnonzero}
\sum_{\substack{\lambda\in\partitions[n_0]^k  \\  \frac{1}{n_0}\lambda\in \ball{\epsilon}{\overline{\lambda}}}}\norm{P_\lambda\varphi^{\otimes n_0}}^2\neq 0.
\end{equation}
The number of nonzero terms is at most $(n_0+1)^{d_2}$ where $d_2=\sum_{j=1}^k\mathcal{H}_j$, therefore there is a $\lambda_0\in\partitions[n_0]^k$ such that $\frac{1}{n_0}\lambda_0\in \ball{\epsilon}{\overline{\lambda}}$ and
\begin{equation}
\norm{P_{\lambda_0}\varphi^{\otimes n_0}}^2\ge(n_0+1)^{-d_2}\sum_{\substack{\lambda\in\partitions[n_0]^k  \\  \frac{1}{n_0}\lambda\in \ball{\epsilon}{\overline{\lambda}}}}\norm{P_\lambda\varphi^{\otimes n_0}}^2.
\end{equation}
\eqref{eq:wandpestimate} implies that
\begin{equation}
\begin{split}
\norm{P_{\lambda_0}\varphi^{\otimes n_0}}^2
 & = \langle\varphi^{\otimes n_0},P_{\lambda_0}\varphi^{\otimes n_0}\rangle  \\
 & \le (n_0+1)^{d_1}\int_{U(\mathcal{H}_1)\times\cdots\times U(\mathcal{H}_k)} \langle (U\varphi)^{\otimes n_0},W_{\lambda_0}(U\varphi)^{\otimes n_0}\rangle\ed U,
\end{split}
\end{equation}
therefore there exists $U\in U(\mathcal{H}_1)\times\cdots\times U(\mathcal{H}_k)$ such that (with $d=d_1+d_2$)
\begin{equation}\label{eq:hwvlowerbound}
\norm{W_{\lambda_0}(U\varphi)^{\otimes n_0}}^2\ge(n_0+1)^{-d}\sum_{\substack{\lambda\in\partitions[n_0]^k  \\  \frac{1}{n_0}\lambda\in \ball{\epsilon}{\overline{\lambda}}}}\norm{P_\lambda\varphi^{\otimes n_0}}^2
\end{equation}
The left hand side of \eqref{eq:hwvlowerbound} is continuous in $U$ and the set \eqref{eq:nonvanishingU} is dense, therefore for every $\delta>0$ there exists a $U_0\in U(\mathcal{H}_1)\times\cdots\times U(\mathcal{H}_k)$ such that $W_{((1),\ldots,(1))}U_0\varphi\neq 0$ and
\begin{equation}
\norm{W_{\lambda_0}(U_0\varphi)^{\otimes n_0}}^2\ge(1-\delta)(n_0+1)^{-d}\sum_{\substack{\lambda\in\partitions[n_0]^k  \\  \frac{1}{n_0}\lambda\in \ball{\epsilon}{\overline{\lambda}}}}\norm{P_\lambda\varphi^{\otimes n_0}}^2.
\end{equation}

Let $n\in\naturals$ be arbitrary and write $n=qn_0+r$ with $q,r\in\naturals$, $r<n_0$. Then $\mu:=q\lambda_0+r((1),(1),\ldots,(1))\in\partitions[n]^k$ and
\begin{equation}
\frac{1}{n}\mu-\frac{1}{n_0}\lambda_0=\frac{r}{n}\left(((1),(1),\ldots,(1)))-\frac{1}{n_0}\lambda_0\right),
\end{equation}
which goes to $0$ as $n\to\infty$, therefore $\frac{1}{n}\mu\in\ball{\epsilon}{\overline{\lambda}}$ for every large enough $n$.

The tensor product of two highest weight vectors is itself a highest weight vector with the sum of the two weights, therefore $P_\mu\ge W_\mu\ge W_{\lambda_0}^{\otimes q}\otimes W_{((1),(1),\ldots,(1))}^{\otimes r}$. From this we get
\begin{equation}
\begin{split}
\liminf_{n\to\infty}\frac{1}{n}\log\sum_{\substack{\lambda\in\partitions[n]^k  \\  \frac{1}{n}\lambda\in \ball{\epsilon}{\overline{\lambda}}}}\norm{P_\lambda\varphi^{\otimes n}}^2
 & \ge \liminf_{n\to\infty}\frac{1}{n}\log\norm{W_{\lambda_0}(U_0\varphi)^{\otimes n_0}}^{2q}\norm{W_{((1),(1),\ldots,(1))}(U_0\varphi)}^{2r}  \\
 & = \liminf_{n\to\infty}\left(1-\frac{r}{n}\right)\frac{1}{n_0}\log\norm{W_{\lambda_0}(U_0\varphi)^{\otimes n_0}}^{2}+\frac{r}{n}\log\norm{W_{((1),(1),\ldots,(1))}(U_0\varphi)}^2  \\
 & = \frac{1}{n_0}\log\norm{W_{\lambda_0}(U_0\varphi)^{\otimes n_0}}^{2}  \\
 & \ge \frac{1}{n_0}\log(1-\delta)-\frac{d}{n_0}\log(n_0+1)+\frac{1}{n_0}\sum_{\substack{\lambda\in\partitions[n_0]^k  \\  \frac{1}{n_0}\lambda\in \ball{\epsilon}{\overline{\lambda}}}}\norm{P_\lambda\varphi^{\otimes n_0}}^2.
\end{split}
\end{equation}
The inequality is true for all $\delta>0$, therefore also for $\delta=0$. In particular, \eqref{eq:sumnonzero} is not only true for $n_0$, but also for every large enough $n$. Therefore we can take $\limsup_{n_0\to\infty}$ of both sides, and the resulting inequality means that the limit exists.
\end{proof}

\begin{proof}[Proof of Lemma~\ref{lem:subsequenceliminfbound}]
Let $\epsilon>0$. $\frac{n_l}{\lambda^{(n_l)}}\in\ball{\epsilon}{\overline{\lambda}}$ for every large enough $l$, therefore
\begin{equation}
\norm{P_{\lambda^{(n_l)}}}^2\le\sum_{\substack{\lambda\in\partitions[n]^k  \\  \frac{1}{n}\lambda\in \ball{\epsilon}{\overline{\lambda}}}}\norm{P_\lambda\varphi^{\otimes n}}^2.
\end{equation}
Apply the logarithm to both sides, multiply by $-\frac{1}{n_l}$ and take the limit inferior:
\begin{equation}
\begin{split}
\liminf_{n\to\infty}-\frac{1}{n_l}\norm{P_{\lambda^{(n_l)}}}^2
 & \ge\liminf_{n\to\infty}-\frac{1}{n_l}\sum_{\substack{\lambda\in\partitions[n_l]^k  \\  \frac{1}{n_l}\lambda\in \ball{\epsilon}{\overline{\lambda}}}}\norm{P_\lambda\varphi^{\otimes n_l}}^2  \\
 & =\lim_{n\to\infty}-\frac{1}{n}\sum_{\substack{\lambda\in\partitions[n]^k  \\  \frac{1}{n}\lambda\in \ball{\epsilon}{\overline{\lambda}}}}\norm{P_\lambda\varphi^{\otimes n}}^2,
\end{split}
\end{equation}
in the last step using Proposition~\ref{prop:limballexists}. The claim follows after taking the limit $\epsilon\to 0$.
\end{proof}

\begin{proof}[Proof of Lemma~\ref{lem:ratefunctionsinglelambdalimit}]
For every $\epsilon>0$ let
\begin{equation}
I_\epsilon=\lim_{n\to\infty}-\frac{1}{n}\log\sum_{\substack{\lambda\in\partitions[n]^k  \\  \frac{1}{n}\lambda\in \ball{\epsilon}{\overline{\lambda}}}}\norm{P_\lambda\varphi^{\otimes n}}^2
\end{equation}
so that $\lim_{\epsilon\to 0}I_\epsilon=\ratefunction{\overline{\lambda}}{\varphi}$.

Choose a sequence $(\epsilon_m)_{m=1}^\infty$ such that $\epsilon_1\ge 2$ (so that any $\epsilon_1$-ball in $\overline{\partitions}^k$ is equal to $\overline{\partitions}^k$), $\lim_{m\to\infty}\epsilon_m=0$ and for each $m$ a sequence $(\lambda^{(m,n)})_{n}$ such that $\lambda^{(m,n)}\in\partitions[n]^k$, $\frac{1}{n}\lambda^{(m,n)}\in\ball{\epsilon_m}{\overline{\lambda}}$ and $\norm{P_{\lambda^{(m,n)}}\varphi^{\otimes n}}$ is maximal subject to these conditions (such a choice may not be possible for small $n$, we only define the sequence for large enough $n$). Then
\begin{equation}
\lim_{n\to\infty}-\frac{1}{n}\log\norm{P_{\lambda^{(m,n)}}\varphi^{\otimes n}}^2=I_{\epsilon_m}.
\end{equation}
Let $n_0(m)$ be the smallest integer such that for all $n\ge n_0(m)$ the inequality
\begin{equation}
-\frac{1}{n}\log\norm{P_{\lambda^{(m,n)}}\varphi^{\otimes n}}^2\le I_{\epsilon_m}+\epsilon_m.
\end{equation}
For every $n\in\naturals$ let $M_n=\max\setbuild{m\in[n]}{n\ge n_0(m)}$. Then $(M_n)_{n\in\naturals}$ is an increasing sequence that is clearly unbounded. We claim that the sequence $\lambda^{(n)}:=\lambda^{(M_n,n)}$ satisfies the requirements. Indeed,
\begin{equation}
\norm[1]{\frac{1}{n}\lambda^{(n)}-\overline{\lambda}}=\norm[1]{\frac{1}{n}\lambda^{(M_n,n)}-\overline{\lambda}}\le\epsilon_{M_n}\to 0
\end{equation}
and $n\ge n_0(M_n)$, therefore
\begin{equation}
\begin{split}
\limsup_{n\to\infty}-\frac{1}{n}\log\norm{P_{\lambda^{(n)}}\varphi^{\otimes n}}^2
 & = \limsup_{n\to\infty}-\frac{1}{n}\log\norm{P_{\lambda^{(M_n,n)}}\varphi^{\otimes n}}^2  \\
 & \le \limsup_{n\to\infty}(I_{\epsilon_{M_n}}+\epsilon_{M_n}) = \ratefunction{\overline{\lambda}}{\varphi}.
\end{split}
\end{equation}
The opposite bound on the limit inferior follows from Lemma~\ref{lem:subsequenceliminfbound}.
\end{proof}

\bibliography{refs}{}

\newcommand{\etalchar}[1]{$^{#1}$}
\begin{thebibliography}{WDGC13}

\bibitem[Ari96]{arikan1996inequality}
Erdal Arikan.
\newblock An inequality on guessing and its application to sequential decoding.
\newblock {\em IEEE Transactions on Information Theory}, 42(1):99--105, 1996.
\newblock \href {http://dx.doi.org/10.1109/18.481781}
  {\path{doi:10.1109/18.481781}}.

\bibitem[ARS88]{alicki1988symmetry}
Robert Alicki, S{\l}awomir Rudnicki, and S{\l}awomir Sadowski.
\newblock Symmetry properties of product states for the system of {$N$}
  $n$-level atoms.
\newblock {\em Journal of mathematical physics}, 29(5):1158--1162, 1988.
\newblock \href {http://dx.doi.org/10.1063/1.527958}
  {\path{doi:10.1063/1.527958}}.

\bibitem[BCV20]{botero2020large}
Alonso Botero, Matthias Christandl, and P{\'e}ter Vrana.
\newblock Large deviation principle for moment map estimation.
\newblock 2020.
\newblock \href {http://arxiv.org/abs/2004.14504} {\path{arXiv:2004.14504}}.

\bibitem[BPR{\etalchar{+}}00]{bennett2000exact}
Charles~H Bennett, Sandu Popescu, Daniel Rohrlich, John~A Smolin, and Ashish~V
  Thapliyal.
\newblock Exact and asymptotic measures of multipartite pure-state
  entanglement.
\newblock {\em Physical Review A}, 63(1):012307, 2000.
\newblock \href {http://arxiv.org/abs/quant-ph/9908073}
  {\path{arXiv:quant-ph/9908073}}, \href
  {http://dx.doi.org/10.1103/PhysRevA.63.012307}
  {\path{doi:10.1103/PhysRevA.63.012307}}.

\bibitem[CDS08]{chitambar2008tripartite}
Eric Chitambar, Runyao Duan, and Yaoyun Shi.
\newblock Tripartite entanglement transformations and tensor rank.
\newblock {\em Physical review letters}, 101(14):140502, 2008.
\newblock \href {http://arxiv.org/abs/0805.2977} {\path{arXiv:0805.2977}},
  \href {http://dx.doi.org/10.1103/PhysRevLett.101.140502}
  {\path{doi:10.1103/PhysRevLett.101.140502}}.

\bibitem[Chr06]{christandl2006structure}
Matthias Christandl.
\newblock {\em The structure of bipartite quantum states-insights from group
  theory and cryptography}.
\newblock PhD thesis, 2006.
\newblock \href {http://arxiv.org/abs/quant-ph/0604183}
  {\path{arXiv:quant-ph/0604183}}.

\bibitem[CM06]{christandl2006spectra}
Matthias Christandl and Graeme Mitchison.
\newblock The spectra of density operators and the {K}ronecker coefficients of
  the symmetric group.
\newblock {\em Commun. Math. Phys.}, 261(3):789--797, 2006.
\newblock \href {http://arxiv.org/abs/quant-ph/0409016}
  {\path{arXiv:quant-ph/0409016}}, \href
  {http://dx.doi.org/10.1007/s00220-005-1435-1}
  {\path{doi:10.1007/s00220-005-1435-1}}.

\bibitem[CVZ18]{christandl2018universal}
Matthias Christandl, P{\'e}ter Vrana, and Jeroen Zuiddam.
\newblock Universal points in the asymptotic spectrum of tensors.
\newblock In {\em Proceedings of the 50th Annual ACM SIGACT Symposium on Theory
  of Computing}, pages 289--296. ACM, 2018.
\newblock \href {http://arxiv.org/abs/1709.07851} {\path{arXiv:1709.07851}},
  \href {http://dx.doi.org/10.1145/3188745.3188766}
  {\path{doi:10.1145/3188745.3188766}}.

\bibitem[CW04]{christandl2004squashed}
Matthias Christandl and Andreas Winter.
\newblock ``{S}quashed entanglement'': an additive entanglement measure.
\newblock {\em Journal of mathematical physics}, 45(3):829--840, 2004.
\newblock \href {http://arxiv.org/abs/quant-ph/0308088}
  {\path{arXiv:quant-ph/0308088}}, \href {http://dx.doi.org/10.1063/1.1643788}
  {\path{doi:10.1063/1.1643788}}.

\bibitem[DVC00]{dur2000three}
Wolfgang D{\"u}r, Guifre Vidal, and J~Ignacio Cirac.
\newblock Three qubits can be entangled in two inequivalent ways.
\newblock {\em Physical Review A}, 62(6):062314, 2000.
\newblock \href {http://arxiv.org/abs/quant-ph/0005115}
  {\path{arXiv:quant-ph/0005115}}, \href
  {http://dx.doi.org/10.1103/PhysRevA.62.062314}
  {\path{doi:10.1103/PhysRevA.62.062314}}.

\bibitem[FH91]{fulton1991representation}
William Fulton and Joe Harris.
\newblock {\em Representation Theory: A First Course}, volume 129.
\newblock Springer Science \& Business Media, 1991.

\bibitem[Fri17]{fritz2017resource}
Tobias Fritz.
\newblock Resource convertibility and ordered commutative monoids.
\newblock {\em Mathematical Structures in Computer Science}, 27(6):850--938,
  2017.
\newblock \href {http://arxiv.org/abs/1504.03661} {\path{arXiv:1504.03661}},
  \href {http://dx.doi.org/10.1017/S0960129515000444}
  {\path{doi:10.1017/S0960129515000444}}.

\bibitem[FW20]{franks2020minimal}
Cole Franks and Michael Walter.
\newblock Minimal length in an orbit closure as a semiclassical limit.
\newblock 2020.
\newblock \href {http://arxiv.org/abs/2004.14872} {\path{arXiv:2004.14872}}.

\bibitem[Hay17]{hayashi2017group}
Masahito Hayashi.
\newblock {\em A Group Theoretic Approach to Quantum Information}.
\newblock Springer, Cham, 2017.
\newblock \href {http://dx.doi.org/10.1007/978-3-319-45241-8}
  {\path{doi:10.1007/978-3-319-45241-8}}.

\bibitem[HKM{\etalchar{+}}02]{hayashi2002error}
Masahito Hayashi, Masato Koashi, Keiji Matsumoto, Fumiaki Morikoshi, and
  Andreas Winter.
\newblock Error exponents for entanglement concentration.
\newblock {\em Journal of Physics A: Mathematical and General}, 36(2):527,
  2002.
\newblock \href {http://arxiv.org/abs/quant-ph/0206097}
  {\path{arXiv:quant-ph/0206097}}, \href
  {http://dx.doi.org/10.1088/0305-4470/36/2/316}
  {\path{doi:10.1088/0305-4470/36/2/316}}.

\bibitem[Hor62]{horn1962eigenvalues}
Alfred Horn.
\newblock Eigenvalues of sums of {H}ermitian matrices.
\newblock {\em Pacific Journal of Mathematics}, 12(1):225--241, 1962.

\bibitem[JV19]{jensen2019asymptotic}
Asger~Kj{\ae}rulff Jensen and P{\'e}ter Vrana.
\newblock The asymptotic spectrum of {LOCC} transformations.
\newblock {\em IEEE Transactions on Information Theory}, 66(1):155--166, Jan
  2019.
\newblock \href {http://arxiv.org/abs/1807.05130} {\path{arXiv:1807.05130}},
  \href {http://dx.doi.org/10.1109/TIT.2019.2927555}
  {\path{doi:10.1109/TIT.2019.2927555}}.

\bibitem[Kly98]{klyachko1998stable}
Alexander~A Klyachko.
\newblock Stable bundles, representation theory and {H}ermitian operators.
\newblock {\em Selecta Mathematica, New Series}, 4(3):419--445, 1998.
\newblock \href {http://dx.doi.org/10.1007/s000290050037}
  {\path{doi:10.1007/s000290050037}}.

\bibitem[Kly04]{klyachko2004quantum}
Alexander Klyachko.
\newblock Quantum marginal problem and representations of the symmetric group.
\newblock 2004.
\newblock \href {http://arxiv.org/abs/quant-ph/0409113}
  {\path{arXiv:quant-ph/0409113}}.

\bibitem[KW01]{keyl2001estimating}
M~Keyl and RF~Werner.
\newblock Estimating the spectrum of a density operator.
\newblock {\em Physical Review A}, 64(5):052311, 2001.
\newblock \href {http://arxiv.org/abs/quant-ph/0102027}
  {\path{arXiv:quant-ph/0102027}}, \href
  {http://dx.doi.org/10.1103/PhysRevA.64.052311}
  {\path{doi:10.1103/PhysRevA.64.052311}}.

\bibitem[Lid82]{lidskii1982spectral}
Boris~Viktorovich Lidskii.
\newblock Spectral polyhedron of a sum of two hermitian matrices.
\newblock {\em Functional Analysis and Its Applications}, 16:139--140, 1982.
\newblock \href {http://dx.doi.org/10.1007/BF01081633}
  {\path{doi:10.1007/BF01081633}}.

\bibitem[MA99]{merhav1999shannon}
Neri Merhav and Erdal Arikan.
\newblock The {S}hannon cipher system with a guessing wiretapper.
\newblock {\em IEEE Transactions on Information Theory}, 45(6):1860--1866,
  1999.
\newblock \href {http://dx.doi.org/10.1109/18.782106}
  {\path{doi:10.1109/18.782106}}.

\bibitem[PV07]{plenio2007introduction}
Martin~B Plenio and Shashank Virmani.
\newblock An introduction to entanglement measures.
\newblock {\em Quantum Information \& Computation}, 7(1):1--51, January 2007.
\newblock URL: \url{http://dl.acm.org/citation.cfm?id=2011706.2011707}, \href
  {http://arxiv.org/abs/quant-ph/0504163} {\path{arXiv:quant-ph/0504163}},
  \href {http://dx.doi.org/10.26421/QIC7.1-2} {\path{doi:10.26421/QIC7.1-2}}.

\bibitem[Sha11]{shayevitz2011renyi}
Ofer Shayevitz.
\newblock On {R}{\'e}nyi measures and hypothesis testing.
\newblock In {\em 2011 IEEE International Symposium on Information Theory
  Proceedings}, pages 894--898. IEEE, 2011.
\newblock \href {http://dx.doi.org/10.1109/ISIT.2011.6034266}
  {\path{doi:10.1109/ISIT.2011.6034266}}.

\bibitem[Str88]{strassen1988asymptotic}
Volker Strassen.
\newblock The asymptotic spectrum of tensors.
\newblock {\em Journal f{\"u}r die reine und angewandte Mathematik},
  384:102--152, 1988.
\newblock \href {http://dx.doi.org/10.1515/crll.1988.384.102}
  {\path{doi:10.1515/crll.1988.384.102}}.

\bibitem[Str91]{strassen1991degeneration}
Volker Strassen.
\newblock Degeneration and complexity of bilinear maps: some asymptotic
  spectra.
\newblock {\em J. Reine Angew. Math}, 413:127--180, 1991.
\newblock \href {http://dx.doi.org/10.1515/crll.1991.413.127}
  {\path{doi:10.1515/crll.1991.413.127}}.

\bibitem[Vid00]{vidal2000entanglement}
Guifr{\'e} Vidal.
\newblock Entanglement monotones.
\newblock {\em Journal of Modern Optics}, 47(2-3):355--376, 2000.
\newblock \href {http://arxiv.org/abs/quant-ph/9807077}
  {\path{arXiv:quant-ph/9807077}}, \href
  {http://dx.doi.org/10.1080/09500340008244048}
  {\path{doi:10.1080/09500340008244048}}.

\bibitem[VW02]{vidal2002computable}
Guifr{\'e} Vidal and Reinhard~F Werner.
\newblock Computable measure of entanglement.
\newblock {\em Physical Review A}, 65(3):032314, 2002.
\newblock \href {http://arxiv.org/abs/quant-ph/0102117}
  {\path{arXiv:quant-ph/0102117}}, \href
  {http://dx.doi.org/10.1103/PhysRevA.65.032314}
  {\path{doi:10.1103/PhysRevA.65.032314}}.

\bibitem[WDGC13]{walter2013entanglement}
Michael Walter, Brent Doran, David Gross, and Matthias Christandl.
\newblock Entanglement polytopes: multiparticle entanglement from
  single-particle information.
\newblock {\em Science}, 340(6137):1205--1208, 2013.
\newblock \href {http://arxiv.org/abs/1208.0365} {\path{arXiv:1208.0365}},
  \href {http://dx.doi.org/10.1126/science.1232957}
  {\path{doi:10.1126/science.1232957}}.

\bibitem[YHH{\etalchar{+}}09]{yang2009squashed}
Dong Yang, Karol Horodecki, Michal Horodecki, Pawel Horodecki, Jonathan
  Oppenheim, and Wei Song.
\newblock Squashed entanglement for multipartite states and entanglement
  measures based on the mixed convex roof.
\newblock {\em IEEE Transactions on Information Theory}, 55(7):3375--3387,
  2009.
\newblock \href {http://arxiv.org/abs/0704.2236} {\path{arXiv:0704.2236}},
  \href {http://dx.doi.org/10.1109/TIT.2009.2021373}
  {\path{doi:10.1109/TIT.2009.2021373}}.

\bibitem[YHW08]{yang2008additive}
Dong Yang, Micha{\l} Horodecki, and ZD~Wang.
\newblock An additive and operational entanglement measure: conditional
  entanglement of mutual information.
\newblock {\em Physical review letters}, 101(14):140501, 2008.
\newblock \href {http://arxiv.org/abs/0804.3683} {\path{arXiv:0804.3683}},
  \href {http://dx.doi.org/10.1103/PhysRevLett.101.140501}
  {\path{doi:10.1103/PhysRevLett.101.140501}}.

\end{thebibliography}

\end{document}